\documentclass[acmsmall,nonacm]{acmart}\settopmatter{printfolios=true,printccs=false,printacmref=false}

\acmJournal{PACMPL}
\acmVolume{1}
\acmNumber{CONF} 
\acmArticle{1}
\acmYear{2018}
\acmMonth{1}
\acmDOI{} 
\startPage{1}

\setcopyright{none}

\bibliographystyle{ACM-Reference-Format}
\citestyle{acmauthoryear}   


\usepackage{booktabs}   
\usepackage{subcaption} 
                        
\usepackage[utf8]{inputenc}
\usepackage{thmtools}
\usepackage{tikz}

\usetikzlibrary{arrows,automata,positioning}



%

\newcommand\tbd[1]
{   \bigskip\par
                  	\fbox{\begin{minipage}{0.9\textwidth}
                  			#1
                  	\end{minipage}}
                  	\par\bigskip}
\newcommand\tbdcol[2]
{   \bigskip\par
             		\fbox{\begin{minipage}{0.9\textwidth}
             			\textcolor{#1}{#2}
             		\end{minipage}}
             		\par\bigskip}

\newcommand{\switchExtended}[2]{#2} 

\newcommand{\true}{\mathit{true}}
\newcommand{\false}{\mathit{false}}
\newcommand{\Hmu}{H_{\mu}}
\newcommand{\kvari}{G_k^n}
\newcommand{\AP}{\text{AP}}
\newcommand{\PA}{\text{PA}}
\newcommand{\TA}{\text{TA}}
\newcommand{\Paths}{\text{Paths}}
\newcommand{\Traces}{\text{Traces}}
\newcommand{\UNDECIDABLE}{\mathsf{UNDECIDABLE}}
\newcommand{\SPACE}{\mathsf{SPACE}}
\newcommand{\NSPACE}{\mathsf{NSPACE}}
\newcommand{\NLOGSPACE}{\mathsf{NLOGSPACE}}
\newcommand{\PSPACE}{\mathsf{PSPACE}}
\newcommand{\EXPSPACE}[1]{\mathsf{#1 EXPSPACE}}

\DeclareFontFamily{OT1}{pzc}{}
\DeclareFontShape{OT1}{pzc}{m}{it}{<-> s * [1.10] pzcmi7t}{}
\DeclareMathAlphabet{\mathpzc}{OT1}{pzc}{m}{it}

\begin{document}

\title[]{Automata and Fixpoints for Asynchronous Hyperproperties}         


\author{Jens Oliver Gutsfeld}
\affiliation{
  \department{Institut für Informatik}              
  \institution{Westfälische Wilhelms Universität Münster}            
  \streetaddress{Einsteinstraße 62}
  \city{Münster}
  \state{North Rhine-Westphalia}
  \postcode{48149}
  \country{Germany}                    
}
\email{jens.gutsfeld@uni-muenster.de}          

\author{Markus Müller-Olm}
\affiliation{
	\department{Institut für Informatik}              
	\institution{Westfälische Wilhelms Universität Münster}            
	\streetaddress{Einsteinstraße 62}
	\city{Münster}
	\state{North Rhine-Westphalia}
	\postcode{48149}
	\country{Germany}                    
}
\email{markus.mueller-olm@uni-muenster.de}          

\author{Christoph Ohrem}
\affiliation{
	\department{Institut für Informatik}              
	\institution{Westfälische Wilhelms Universität Münster}            
	\streetaddress{Einsteinstraße 62}
	\city{Münster}
	\state{North Rhine-Westphalia}
	\postcode{48149}
	\country{Germany}                    
}
\email{christoph.ohrem@uni-muenster.de}          

\begin{abstract}
Hyperproperties have received increasing attention in the last decade due to their importance e.g. for security analyses. 
Past approaches have focussed on synchronous analyses, i.e. techniques in which different paths are compared lockstepwise. 
In this paper, we systematically study asynchronous analyses for hyperproperties by introducing both a novel automata model (Alternating Asynchronous Parity Automata) and the temporal fixpoint calculus $\Hmu$, the first fixpoint calculus that can systematically express hyperproperties in an asynchronous manner and at the same time subsumes the existing logic HyperLTL.
We show that the expressive power of both models coincides over fixed path assignments.
The high expressive power of both models is evidenced by the fact that decision problems of interest are highly undecidable, i.e. not even arithmetical.
As a remedy, we propose approximative analyses for both models that also induce natural decidable fragments.
\end{abstract}

\begin{CCSXML}
<ccs2012>
<concept>
<concept_id>10011007.10011006.10011008</concept_id>
<concept_desc>Software and its engineering~General programming languages</concept_desc>
<concept_significance>500</concept_significance>
</concept>
<concept>
<concept_id>10003456.10003457.10003521.10003525</concept_id>
<concept_desc>Social and professional topics~History of programming languages</concept_desc>
<concept_significance>300</concept_significance>
</concept>
</ccs2012>
\end{CCSXML}

\ccsdesc{Formal languages and automata theory}
\ccsdesc{Logic}
\ccsdesc{Semantics and Reasoning}

\keywords{Logics, Automata, Hyperproperties}  

\maketitle

\section{Introduction}\label{sec:introduction}

Hyperproperties \cite{Clarkson2010} are a recent innovation in theoretical computer science.
While a traditional trace property (like liveness or safety) refers to single traces, a hyperproperty refers to \textit{sets of traces}.
Hyperproperties of interest include security properties like non-interference or observational determinism since it can only be inferred from combinations of traces and their relation to each other whether a system fulfills these properties. 
Analysis methods for hyperproperties have been proposed in many contexts, including abstract interpretation \cite{Mastroeni2017,Mastroeni2018}, runtime verification \cite{Finkbeiner2019}, synthesis \cite{Finkbeiner2020} and model checking \cite{Clarkson2014,Finkbeiner2015,Rabe2016,Gutsfeld2020}.
In model checking, several temporal logics for hyperproperties have been proposed, including \textit{hyperized} variants of LTL \cite{Clarkson2014, Finkbeiner2015, Rabe2016}, CTL$^*$ \cite{Clarkson2014, Finkbeiner2015, Rabe2016}, QPTL \cite{Rabe2016,Coenen2019} and PDL$-\Delta$ \cite{Gutsfeld2020}.
In all these logics, specifications are \textit{synchronous}, i.e.\ the modalities only allow for lockstepwise traversal of different paths.
The same is true of the automata-theoretic frameworks underlying the algorithms for these logics. 
However, the restriction to synchronous traversal of traces is a conceptual limitation of existing approaches \cite{Finkbeiner2017} that seems rather artificial.
Arguably, the ability to accommodate \textit{asynchronous} specifications is an important requirement for hyperproperty verification in various scenarios since many interesting properties require comparing traces at different points of time.
For instance, synchronous formulations of information flow security properties such as non-interference are often too strict for system abstractions with varying granularity of steps.
Here, a proper security analysis requires asynchronicity in order to match points of interest.
Asynchronous hyperproperties also arise naturally in the context of multithreaded programs where each thread is represented by a single trace as the overall system behaviour is an asynchronous interleaving of the individual traces.

In order to investigate the boundaries of automatic analysis of asynchronous hyperproperties induced by undecidability and complexity limitations, we propose both an automata-theoretic framework, Alternating Asynchronous Parity Automata (AAPA), and a temporal fixpoint calculus, $\Hmu$,  for asynchronous hyperproperties in this paper.
Our contribution is threefold: first of all, we show that both perspectives indeed coincide over fixed sets of paths by providing a direct and intuitive translation between AAPA and $\Hmu$ formulas.
Secondly, using this correspondence, we highlight the limitations of the analysis of asynchronous hyperproperties by showing that major problems of interest for both models (model checking, satisfiability, automata emptiness) are not even arithmetical. 
Thus, these problems are not only undecidable, but also exhaustive approximation analyses are impossible as they require recursive enumerability.
Finally, we consider natural semantic restrictions -- $k$-synchronicity and $k$-context-boundedness -- of both models that give rise to families of increasingly precise over- and underapproximate analyses.
Also, we identify settings where these analyses yield precise results.
We provide precise completeness results for all but one of the corresponding decision problems. 
Our complexity results for restricted classes of AAPA also shed new light on the classical theory of multitape automata over finite words as both the restrictions and the proofs can be directly transferred.

The rest of the paper is structured as follows: in \autoref{sec:preliminaries}, we provide some basic notation and recall the definitions of Alternating Parity Automata and Regular Transducers.
Then, in \autoref{sec:automata}, we introduce Alternating Asynchronous Parity Automata (AAPA) as a model for the asynchronous analysis of multiple input words and study their closure and decidability properties.
As the emptiness problem is undecidable, we discuss approximate analyses which lead to decidability for corresponding fragments.
We introduce $\Hmu$ as a novel fixpoint logic for hyperproperties in \autoref{sec:hypercalculus}. 
\autoref{sec:connection} establishes the connection between AAPA and $\Hmu$.
In \autoref{sec:modelchecking} and \autoref{sec:satisfiability}, this connection is used 
to transfer the approximate analyses of AAPA to $\Hmu$ and to obtain tight complexity bounds for corresponding decision problems.
We summarise the paper in \autoref{sec:conclusion}.
Due to lack of space, we have transferred some proofs to \switchExtended{an extended version \cite{}}{the appendix}.

\textbf{Related work:} 
Hyperproperties were systematically considered in \cite{Clarkson2010}.
The temporal logics HyperLTL and HyperCTL$^*$ were introduced in \cite{Clarkson2014} and efficient algorithms for them were developed in \cite{Finkbeiner2015}. 
The polyadic $\mu$-calculus \cite{Andersen1994} is directly related to hyperproperties.
It extends the modal $\mu$-calculus by branching over tuples of states instead of single states and has recently been considered in the context of so-called \textit{incremental hyperproperties} \cite{Milushev2013}.
This logic can express properties that are not expressible in HyperLTL and vice versa \cite{Rabe2016}.
The same relation holds between the polyadic $\mu$-calculus and $\Hmu$:
On the one hand, HyperLTL can be embedded into $\Hmu$ trivially, and on the other hand, $\Hmu$ is a linear time logic, while the polyadic $\mu$-calculus is a branching time logic, implying the logics are expressively incomparable.
The polyadic $\mu$-calculus was later reinvented \cite{Lange2015} under the name \textit{higher-dimensional $\mu$-calculus} \cite{Otto1999} and it was shown that every bisimulation-invariant property of finite graphs that can be decided in polynomial time can be expressed in it.

A different class of logics with the ability to express hyperproperties are the first- and second-order logics with equal-level predicate MPL[E], MSO[E], FOL[<,E] and S1S[E] \cite{Coenen2019,Finkbeiner2017,Thomas2011}.
We believe that $\Hmu$ can be embedded into the most powerful of these logics, S1S[E] and MSO[E].
Since MPL[E] and MSO[E] are branching time logics while $\Hmu$ is a linear time logic, just like FOL[<,E] and S1S[E], we restrict our further analysis to the relationship between $\Hmu$ and these latter two logics.
We believe that the expressive power of FOL[<,E] and $\Hmu$ is incomparable:
As for HyperLTL \cite{Bozzelli2015}, the property that an atomic proposition does not occur on a certain level in the tree (of traces) -- which is directly expressible in FOL[<,E] -- likely is not expressible in $\Hmu$.
On the other hand, for singleton trace sets, $\Hmu$ and FOL[<,E] degenerate to the linear time $\mu$-calculus and FOL[<], respectively, and it is known that FOL[<] - unlike the linear time $\mu$-calculus - cannot express all $\omega$-regular properties.
Notwithstanding S1S[E], we think the study of $\Hmu$ is of interest because (i) it is closer to logics traditionally used in model checking and (ii) there is no obvious way to define decidable approximate analyses for S1S[E] as we do for $\Hmu$.
Indeed, all results concerning S1S[E] we are aware of are undecidability results.

The logic $\Hmu$ proposed in the current paper is based on the linear-time $\mu$-calculus \cite{Vardi1988} and our model checking algorithms use a construction based on alternating parity word automata with holes in the flavour of \cite{Lange2005} while handling quantifiers via the constructions for HyperCTL$^*$ from \cite{Finkbeiner2015}.
AAPA are asynchronous $\omega$-automata with a parity acceptance condition.
Asynchronous automata on finite words were already introduced in the seminal paper by Rabin and Scott \cite{Rabin1959} on finite automata and later considered in many other contexts \cite{Geidmanis1987,Ibarra2013,Furia2014}.
On infinite words, Büchi automata on multiple input words were considered in the context of recursion theory and descriptive set theory \cite{Finkel2006,Finkel2016}.
As far as we are aware, automata of this type with a parity acceptance condition or alternation have not been studied yet and neither have algorithms for decidable restrictions and their exact complexity.
A different line of research discusses variants of asynchronous automata for the analysis of concurrent programs \cite{Zielonka1987,Muscholl1996,Peled1996}.
However, unlike AAPA, these models are concerned with language recognition for trace languages in the context of concurrent systems.
For AAPA (and H$_\mu$), we use two types of restrictions: $k$-synchronicity and $k$-context-boundedness. The first restriction has been discussed in the context of multitape automata \cite{Ibarra2013,Furia2014}, while the second restriction is inspired by a similar condition used in the analysis of concurrent programs \cite{Atig2009,Qadeer2005,Qadeer2008,Bansal2013}.
In \cite{Krebs2017}, Krebs et. al. consider a team semantics based approach to the verification of hyperproperties using variants of LTL with synchronous and asynchronous semantics.
Of course, there is a large body of work on the analysis of asynchronous systems, e.g. \cite{Ganty2009,Ganty2012,DurandGasselin2015,Esparza2016}. However, we are not aware of any such work concerning hyperproperties.


\section{Preliminaries}\label{sec:preliminaries}

Let $\AP$ be a finite set of atomic propositions.
A \textit{Kripke Structure} is a tuple $\mathcal{K} := (S, s_0, \delta, L)$ where $S$ is a finite set of states, $s_0 \in S$ is an initial state, $\delta \subseteq S \times S$ is a transition relation and $L : S \to 2^{\AP}$ is a labeling function.
We assume that there are no states without outgoing edges, that is for each $s \in S$, there is an $s' \in S$ with $(s,s') \in \delta$.
A path in a Kripke Structure $\mathcal{K}$ is an infinite sequence $s_0 s_1 ... \in S^{\omega}$ where $s_0$ is the initial state of $\mathcal{K}$ and $(s_{i}, s_{i+1})\in \delta$ holds for all $i \geq 0$.
We denote by $\Paths(\mathcal{K})$ the set of paths in $\mathcal{K}$ starting in $s_0$. 
A trace is an infinite sequence from the set $(2^{\AP})^{\omega}$.
For a path $s_0 s_1 ...$, the induced trace is given by $L(s_0) L(s_1) ...$. 
We write $\Traces(\mathcal{K})$ to denote the traces induced by paths of a Kripke Structure $\mathcal{K}$ starting in $s_0$.

Let $\Sigma$ be a finite input alphabet.
A (nondeterministic) regular transducer over $\Sigma$ is a tuple $\mathcal{T} = (Q,q_0,\gamma)$ where $Q$ is a finite set of control locations, $q_0 \in Q$ is an initial location and $\gamma: Q \times \Sigma \to 2^{Q \times \Sigma}$ is a transition function.
Given a word $w = w_1 ... w_n \in \Sigma^*$, a run of $\mathcal{T}$ on $w$ is an alternating sequence $q_0 v_1 q_1 v_1 ... v_n q_n$ such that $(q_{i+1},v_{i+1}) \in \gamma(q_i,w_{i+1})$ for all $0 \leq i < n$.
We then call $\mathcal{T}(w) := v_1 ... v_n \in \Sigma^*$ an output of $\mathcal{T}$ on $w$.
Intuitively, a nondeterministic regular transducer can be seen as a nondeterministic finite automaton (NFA) with output.

An Alternating Parity Automaton (APA) over $\Sigma$ is a tuple 
$\mathcal{A} = (Q, q_0, \rho, \Omega)$
such that $Q$ is a finite, non-empty set of control locations, $q_0 \in Q$ is an initial control location, $\rho: Q \times \Sigma \rightarrow \mathcal{B}^+(Q)$ is a function that maps control locations and input symbols	to positive boolean formulas over control locations and $\Omega: Q  \rightarrow \{0, 1, \dots k\}$ is a function that maps control locations to priorities.
We assume that every APA has two distinct states $\true$ with priority 0 and $\false$ with priority 1 such that $\rho(\true,\sigma) = \textit{true}$ and $\rho(\false,\sigma) = \textit{false}$ for all $\sigma \in \Sigma$.
If $\rho(q,\sigma)$ only consists of disjunctions for every $q$ and $\sigma$, we call an APA a Nondeterministic Parity Automaton (NPA) and denote $\rho(q,\sigma)$ as a set of control locations.
Additionally, we allow states $X$ with $\rho(X,\sigma) = \bot$ and $\Omega(X) = \bot$, which we call holes \cite{Lange2005}.
Intuitively, a hole is a state where the construction of an APA is not yet finished.
By $\mathcal{A}[X:\mathcal{A}']$, we denote the APA $\mathcal{A}$ where the hole $X$ is replaced by the automaton $\mathcal{A}'$.

A tree $T$ is a subset of $\mathbb{N}^{*}$ such that for every node $t \in \mathbb{N}^{*}$ and every positive integer $n \in \mathbb{N}$: $t \cdot n \in T$ implies (i) $t \in T$ (we then call $t \cdot n$ a child of $t$), and (ii) for every $0 < m < n$, $t \cdot m \in T$. 
We assume every node has at least one child.
A path in a tree $T$ is a sequence of nodes $t_0 t_1 ...$ such that $t_0 = \varepsilon$ and $t_{i+1}$ is a child of $t_i$ for all $i \in \mathbb{N}_{0}$.
A run of an APA $\mathcal{A}$ on an infinite word $w \in \Sigma^{\omega}$ is defined as a $Q$-labeled tree $(T,r)$ where $r: T \to Q$ is a labelling function such that $r(\varepsilon) = q_{0}$ and for every node $t \in T$ with children $t_1,...,t_k$, we have $1 \leq k \leq |Q|$ and the valuation assigning true to the control locations $r(t_1),...,r(t_k)$ and false to all other control locations satisfies $\rho(r(t),w(|t|))$.
A run $(T,r)$ is an accepting run iff for every path $t_1 t_2 ...$ in $T$, the lowest priority occuring infinitely often is even.
A word $w$ is accepted by $\mathcal{A}$ iff there is an accepting run of $\mathcal{A}$ on $w$.
The set of infinite words accepted by $\mathcal{A}$ is denoted by $\mathcal{L}(\mathcal{A})$.
Extending the notion of holes, we write $\mathcal{A}[X:\mathcal{L}]$ for a language $\mathcal{L} \subseteq \Sigma^{\omega}$ to denote $\mathcal{A}[X:\mathcal{A}']$ for some automaton $\mathcal{A}'$ with $\mathcal{L}(\mathcal{A}') = \mathcal{L}$.
We call an APA (resp. NPA) an Alternating Büchi Automaton (resp. Nondeterministic Büchi Automaton) iff its priorities are $0$ and $1$. We abbreviate these automata as ABA and NBA.
In the remainder of this paper, we use known theorems about parity and Büchi automata (without holes):

\begin{proposition}[\cite{Dax2008}]\label{thm:paritydealternation}
	For every APA $\mathcal{A}$ with $n$ states and $k$ priorities, there is a nondeterministic Büchi Automaton with $2^{\mathcal{O}(n \cdot k \cdot log\ n)}$ states accepting the same language.
\end{proposition}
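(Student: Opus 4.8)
The plan is to remove alternation and read off the bound from a progress-measure certificate for the parity condition. First I would recast acceptance as a game: a word $w \in \Sigma^{\omega}$ lies in $\mathcal{L}(\mathcal{A})$ iff the Automaton player wins the parity game $G_w$ in which, at step $i$, the Automaton resolves the disjunctions and the Pathfinder the conjunctions of $\rho(q, w(i))$, with priorities inherited from $\Omega$. A run tree of $\mathcal{A}$ on $w$ is precisely an Automaton strategy, and the run condition (every branch has even least priority occurring infinitely often) is exactly the parity winning condition. By positional determinacy of parity games I may assume the Automaton plays positionally, which is what makes a finite-state simulation possible.

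The heart of the construction is Jurdzi\'{n}ski's progress-measure characterization of the Automaton's winning region. The Automaton wins $G_w$ iff there is a progress measure assigning to each reachable position a signature from $\{0,\dots,n\}^{\lceil k/2 \rceil}$ -- one coordinate per odd priority -- that does not increase along Automaton moves and strictly decreases, in the lexicographic order truncated at the priority just traversed, whenever an odd priority is seen. I would then let the nondeterministic Büchi automaton $\mathcal{B}$ guess, level by level, the current frontier of the run DAG together with a signature for each frontier location; on reading $w(i)$ it guesses the Automaton's disjunctive choices, collects all Pathfinder successors, and checks the local progress-measure inequalities between each location and its successors. Since the signature domain has size $(n+1)^{\lceil k/2\rceil} = 2^{O(k \log n)}$ and a frontier labeling is a partial map from the $n$ control locations into it, $\mathcal{B}$ has $(2^{O(k \log n)}+1)^{n} = 2^{O(n \cdot k \cdot \log n)}$ macrostates, matching the claim. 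A breakpoint in the style of Miyano--Hayashi, layered on as the Büchi condition, certifies that every branch is discharged infinitely often and that no signature is driven to the top value $\top$, the latter being exactly the failure of the parity condition on some branch.

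The step I expect to be the main obstacle is soundness and completeness of this encoding under the universal (Pathfinder) branching. A single guessed signature labeling must witness progress simultaneously along every branch of the run DAG, so I must show that a globally consistent progress measure exists precisely when all branches satisfy the parity condition, and that the purely local inequalities checked by $\mathcal{B}$ already force global consistency. Reconciling the progress-measure bookkeeping, which certifies the parity rank, with the breakpoint bookkeeping, which certifies liveness over all branches, without inflating the state count beyond $2^{O(nk\log n)}$ is the delicate point; the final parity-to-Büchi packaging and the size estimate for $\mathcal{B}$ are then routine.
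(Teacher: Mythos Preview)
The paper does not prove this proposition at all: it is stated as a known result with a citation to \cite{Dax2008} and used as a black box. Your sketch follows precisely the construction of that reference (progress measures in the sense of Jurdzi\'{n}ski to certify the parity condition, combined with a Miyano--Hayashi breakpoint to handle the universal branching), so there is nothing to compare against and your approach is the intended one.

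One remark on the step you flag as the ``main obstacle'': it is less delicate than you suggest. The point of Jurdzi\'{n}ski's theorem is exactly that a single global progress measure on the game graph exists iff the Automaton player wins, i.e.\ iff \emph{every} branch of some run tree satisfies the parity condition; you do not need to reconcile anything branch by branch. The NBA simply guesses, for each level, a map from the current frontier (a subset of $Q$) to signatures, and checks the local monotonicity constraints against the previous level. The breakpoint set then tracks those frontier states whose signature has not yet been ``reset'' by passing through an even priority dominating all odd coordinates above it; emptying this set is the B\"uchi acceptance. The two pieces of bookkeeping live in disjoint components of the macrostate (a signature-labeled subset plus a plain subset), so the product stays within $2^{O(nk\log n)}$. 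Your size arithmetic is correct.
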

\begin{proposition}\label{thm:paritycomplement}
	For every APA $\mathcal{A}$ with $n$ states and $k$ priorities, there is an APA $\overline{\mathcal{A}}$ with $n$ states and $k$ priorities that recognises the complement language.
\end{proposition}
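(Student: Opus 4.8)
The plan is to construct $\overline{\mathcal{A}}$ by \emph{dualising} $\mathcal{A}$ and to prove correctness through the game-theoretic reading of alternating acceptance together with determinacy of parity games. Concretely, I would keep the same state set $Q$ and initial state $q_0$, replace each transition formula $\rho(q,\sigma)$ by its De Morgan dual $\overline{\rho}(q,\sigma) \in \mathcal{B}^{+}(Q)$ (obtained by swapping $\wedge$ with $\vee$ and the constants $\textit{true}$ with $\textit{false}$, while leaving the state literals untouched, so that the result is again a positive formula), and shift every priority by one, i.e.\ $\overline{\Omega}(q) := \Omega(q)+1$. Dualisation turns the designated state $\true$ (with $\rho(\true,\sigma)=\textit{true}$) into a $\false$-style state and vice versa, so after renaming we may assume $\overline{\mathcal{A}}$ again carries the two required states $\true$ and $\false$ with priorities $0$ and $1$. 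Since $Q$ is untouched, $\overline{\mathcal{A}}$ still has $n$ states, and since adding one is an order isomorphism that flips parity, the number of priorities is preserved (the image $\{1,\dots,k+1\}$ has the same cardinality as $\{0,\dots,k\}$), which accounts for the $k$ priorities claimed.

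For correctness I would pass to the standard acceptance game $\mathcal{G}(\mathcal{A},w)$ for a fixed word $w \in \Sigma^{\omega}$: positions track a control location $q$ and a word index $i$, the formula $\rho(q,w(i))$ is resolved move by move with the verifier (Eve) choosing at disjunctions and the refuter (Adam) choosing at conjunctions, reaching a state $q'$ advances the index to $i+1$, and an infinite play is won by Eve iff the least priority seen infinitely often is even. The first step is to establish the folklore equivalence that an accepting run tree $(T,r)$ of $\mathcal{A}$ on $w$ exists if and only if Eve has a winning strategy in $\mathcal{G}(\mathcal{A},w)$; one direction reads a run off a (memoryless) strategy, the other reads a strategy off a run, so that $w \in \mathcal{L}(\mathcal{A})$ iff Eve wins $\mathcal{G}(\mathcal{A},w)$.

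The key observation is then that $\mathcal{G}(\overline{\mathcal{A}},w)$ is precisely the \emph{dual} game of $\mathcal{G}(\mathcal{A},w)$: swapping $\wedge$ and $\vee$ exchanges the roles of Eve and Adam at every choice point, and the shift $\Omega \mapsto \Omega+1$ replaces the min-even winning condition by the min-odd one, i.e.\ complements it. Hence Eve wins $\mathcal{G}(\overline{\mathcal{A}},w)$ if and only if Adam wins $\mathcal{G}(\mathcal{A},w)$. Parity games are determined (a consequence of Borel determinacy, with memoryless strategies by Mostowski and Emerson--Jutla), so in $\mathcal{G}(\mathcal{A},w)$ exactly one of the two players has a winning strategy. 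Chaining these equivalences gives $w \in \mathcal{L}(\overline{\mathcal{A}}) \iff \text{Eve wins } \mathcal{G}(\overline{\mathcal{A}},w) \iff \text{Adam wins } \mathcal{G}(\mathcal{A},w) \iff w \notin \mathcal{L}(\mathcal{A})$, and therefore $\mathcal{L}(\overline{\mathcal{A}}) = \Sigma^{\omega} \setminus \mathcal{L}(\mathcal{A})$.

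The dualisation and the bookkeeping on states and priorities are routine; I expect the only real work to lie in the run/strategy correspondence and in the appeal to determinacy. In particular, care is needed to obtain a genuine run tree (whose nodes must each have at least one child, per the definition) from an abstract winning strategy, where memoryless determinacy is convenient, and to check that the designated $\true$/$\false$ states and the priority range are normalised so that $\overline{\mathcal{A}}$ literally meets the format of an APA with $n$ states and $k$ priorities.
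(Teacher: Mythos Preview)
Your proposal is correct and follows the standard dualisation argument; the paper does not give its own proof of this proposition but cites it as a known fact (\cite{Demri2016}), and where the construction is used explicitly (for handling universal quantifiers in Section~\ref{sec:modelchecking}) it is described exactly as you do, namely ``swapping $\land$ and $\lor$ in transitions and increasing all priorities by one.'' Your game-theoretic justification via parity determinacy is the usual route and is sound.
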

\begin{proposition}\label{thm:parityemptiness}
The emptiness problem is $\PSPACE$-complete for APA and $\NLOGSPACE$-complete for NPA and NBA.
\end{proposition}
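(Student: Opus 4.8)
The plan is to treat the alternating and nondeterministic cases separately, and within each to prove the matching upper and lower bounds.

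For the PSPACE upper bound for APA, I would first apply \autoref{thm:paritydealternation} to convert the given APA $\mathcal{A}$ with $n$ states and $k$ priorities into an equivalent NBA $\mathcal{B}$ with $2^{\mathcal{O}(n \cdot k \cdot \log n)}$ states, so that $\mathcal{L}(\mathcal{A}) = \emptyset$ iff $\mathcal{L}(\mathcal{B}) = \emptyset$. The key observation is that each state of $\mathcal{B}$ admits a representation of size polynomial in $n$ and $k$, and that, given such a representation, the successors under a letter and the acceptance status can be computed in polynomial space without ever materialising $\mathcal{B}$ in full. Emptiness of $\mathcal{B}$ amounts to detecting a reachable ``lasso'', i.e. a path from the initial state to an accepting state together with a cycle through it; this can be decided by the standard NLOGSPACE procedure for NBA, which only stores a constant number of states at a time. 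Since each stored state of $\mathcal{B}$ costs polynomial space, the whole procedure runs in nondeterministic polynomial space, and by Savitch's theorem $\mathsf{NPSPACE} = \PSPACE$, giving the desired bound.

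For the NLOGSPACE upper bound for NBA (and NPA), emptiness again reduces to lasso detection in the transition graph. For NBA I would guess an accepting state $q$, then verify reachability of $q$ from $q_0$ and the existence of a nontrivial cycle back to $q$, both of which are instances of directed reachability and hence in NLOGSPACE. For NPA the parity condition is handled by additionally guessing the least even priority $p$ witnessed on the accepting cycle: restricting attention to states of priority at least $p$, I check that some state of priority exactly $p$ is reachable from $q_0$ and lies on a cycle within this restricted graph, taking the disjunction over the $\mathcal{O}(k)$ possible even priorities. All of these are reachability checks carried out in logarithmic space.

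The lower bounds are obtained by reductions. NLOGSPACE-hardness of NBA/NPA emptiness follows from a logspace reduction from directed $s$-$t$-connectivity, which is NLOGSPACE-complete: interpret the graph as the transition structure of an automaton over a singleton alphabet, make the initial state correspond to $s$ and equip $t$ with a self-loop and an even (resp. accepting) priority, so that the automaton is nonempty exactly when $t$ is reachable from $s$. For PSPACE-hardness of APA emptiness I would reduce from the truth problem for quantified Boolean formulas: given $Q_1 x_1 \cdots Q_n x_n.\, \psi$, I build an APA whose branching mirrors the quantifier prefix, using conjunctive (universal) branching for $\forall$ and disjunctive branching for $\exists$ while reading a word that encodes an assignment, and whose remaining behaviour verifies $\psi$, with a self-loop embedding the finite computation into an infinite accepting run. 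The automaton then accepts some word iff the formula is true, establishing PSPACE-hardness.

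The main obstacle is the PSPACE upper bound for APA: the naive route through \autoref{thm:paritydealternation} builds an exponentially large NBA, so the argument only works if one is careful to (i) fix a polynomial-size encoding of the dealternated states and (ii) generate the transitions of the exponential automaton on the fly, so that the logarithmic-space emptiness test for NBA, applied relative to this implicitly represented input, stays within polynomial space. Verifying that the construction underlying \autoref{thm:paritydealternation} indeed admits such a succinct, locally computable representation is the crux of the argument.
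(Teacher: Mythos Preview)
The paper does not actually prove this proposition: it is stated in the preliminaries as a known result and attributed to the literature (the sentence following the proposition says that \autoref{thm:paritycomplement} and \autoref{thm:parityemptiness} ``can be found e.g.\ in \cite{Demri2016}''). So there is no proof in the paper to compare your proposal against.

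That said, your sketch is essentially the standard argument and is correct. The only point worth flagging is the PSPACE upper bound: going through \autoref{thm:paritydealternation} works, but you are right that the whole argument hinges on the dealternated NBA having states with a polynomial-size description and locally computable transitions. For the construction cited there (Miyano--Hayashi style, tracking sets of states together with an obligation set), this is indeed the case, so the on-the-fly NLOGSPACE-in-the-exponential-automaton argument goes through. An alternative, equally standard route that avoids this bookkeeping is to reduce APA emptiness directly to solving a parity game of size polynomial in the automaton, and then invoke the fact that parity games can be solved in $\mathsf{NP}\cap\mathsf{coNP}\subseteq\PSPACE$; this is the argument one typically finds in textbooks such as the reference the paper cites.
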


\autoref{thm:paritycomplement} and \autoref{thm:parityemptiness} can be found e.g.\ in \cite{Demri2016}.
On multiple occasions in this paper, we use a function for nested exponentials.
Specifically, we define $g_{c,p}(0,n) := p(n)$ and $g_{c,p}(d+1,n) := c^{g_{c,p}(d,n)}$ for a constant $c > 1$ and a polynomial $p$.
For $c = 2$ and $p = id$, i.e. the identity function, we omit the subscripts in order to improve readability.
By slight abuse of notation, we say that a function $f$ is in $\mathcal{O}(g(d,n))$ if $f$ is in $\mathcal{O}(g_{c,p}(d,n))$ for some constant $c > 1$ and polynomial $p$.
We straightforwardly extend this notion to multiple $g$ functions, where different constants $c > 1$ and polynomials $p$ can be used for the various $g$ functions.
Also, we write $\SPACE(g(d,n))$ as an abbreviation for $\bigcup_{c,p}\SPACE(g_{c,p}(d,n))$.
\section{Alternating Asynchronous Parity Automata}\label{sec:automata}

We introduce a new class of automata for the asynchronous traversal of multiple $\omega$-words.

\begin{definition}[Alternating Asynchronous Parity Automaton]
	Let $M = \{1, 2, \dots n\}$ be a set of directions and $\Sigma$ an input alphabet. 
	An Alternating Asynchronous Parity Automaton (AAPA) is a tuple $\mathcal{A} = (Q,\rho_0,\rho,\Omega)$ where
	\begin{itemize}
		\item $Q$ and $\Omega$ are the same as in an APA,
		\item $\rho_0 \in \mathcal{B}^+(Q)$ is a positive boolean combination of initial states, and
		\item $\rho: Q \times \Sigma \times M \to \mathcal{B}^+(Q)$ maps triples of control locations, input symbols and directions to positive boolean combinations of control locations.
	\end{itemize}
\end{definition}

Just as for APA, we call an AAPA where $\rho(q,\sigma,d)$ and $\rho_0$ only consist of disjunctions a Nondeterministic Asynchronous Parity Automaton (NAPA).
Compared to an APA, where a single word over $\Sigma$ is given as input, an AAPA has access to $n$ input words over $\Sigma$ and can perform steps on them individually.
The $M$ argument of the transition function indicates on which input word to progress.
Note that any APA can be seen as an AAPA with $n = 1$.
The definition of a run $T$ of an AAPA is similar to the one for a run of an APA, but with the following modifications:
\begin{itemize}
	\item the run is defined over $n$ input words $w_1,...,w_n \in \Sigma^{\omega}$ instead of a single word $w$,
	\item for each $t \in T$, we have $n$ offset counters $c_1^t,...,c_n^t$ starting at $c_i^{t} = 0$ for all $i$ and $t$ with $|t| \leq 1$,
	\item we have $\{r(t) | t \in T, |t| = 1\} \models \rho_0$, and
	\item when node $t \in T \setminus \{\varepsilon\}$ has children $t_1,...,t_k$, then there is a $d \in M$ such that (i) $c_d^{t_i} = c_d^t + 1$ and $c_{d'}^{t_i} = c_{d'}^{t}$ for all $i$ and $d' \neq d$, (ii) we have $1 \leq k \leq |Q|$ and (iii) the valuation assigning true to $r(t_1),...,r(t_k)$ and false to all other states satisfies $\rho(r(t),w_d(c_d^t),d)$.
\end{itemize}

These automata are particularly suitable for the analysis of our new logic $\Hmu$, which we introduce in the next section.
Indeed, AAPA and $\Hmu$ are able to express the same asynchronous restrictions on multiple $\omega$-words, as shown in \autoref{sec:connection}.
In order to compare AAPA to different automata models over a single input word, we interpret the $n$ input words $w_1,...,w_n$ with $w_i = w_i^0 w_i^1 ... \in \Sigma^{\omega}$ as a single word $w = (w_1^0,...,w_n^0)(w_1^1,...,w_n^1)... \in (\Sigma^n)^{\omega}$.
We introduce some notation for the asynchronous manipulation of such words.
For this purpose, let $v = (v_1,...,v_n) \in \mathbb{N}_0^n$ be a vector.
Then, we use $w[v] = (w_1^{v_1},...,w_n^{v_n})(w_1^{v_1 + 1},...,w_n^{v_n + 1})...$ to denote $w$ shifted left according to entries in $v$.
We write $v + v'$ and $v + e_i$ for standard vector addition with an arbitrary vector $v'$ and a unit vector $e_i = (0,...,0,1,0,...,0)$, respectively.

We establish some properties of AAPA and their nondeterministic counterparts.
First, a standard argument using alternation and priority shifts gives us the following result about AAPA:

\begin{theorem}\label{theorem:AAPAclosure}
AAPA are closed under union, intersection and complement. The constructions are linear in the size of the input automata.
\end{theorem}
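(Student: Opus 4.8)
The plan is to establish each of the three closure properties by a direct syntactic construction on the automata, mirroring the standard arguments for ordinary alternating parity automata but carrying along the direction argument $M$ and the offset-counter bookkeeping that defines AAPA runs.

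For \textbf{complement}, I would dualise the automaton: replace every positive boolean formula $\rho(q,\sigma,d)$ by its De Morgan dual $\widetilde{\rho}(q,\sigma,d)$ (swapping $\wedge$ with $\vee$ and $\true$ with $\false$), likewise dualise $\rho_0$, and shift every priority by one, setting $\overline{\Omega}(q) := \Omega(q)+1$. This is exactly the construction behind \autoref{thm:paritycomplement} applied to the branching structure of an AAPA run; the only thing to check beyond the classical APA case is that the asynchronous reading discipline is unaffected. Since the offset counters $c_1^t,\dots,c_n^t$ and the choice of direction $d$ depend only on the tree structure of the run and not on the boolean form of the transitions, the dualisation swaps the roles of the existential (disjunctive) and universal (conjunctive) branching while leaving which input letter $w_d(c_d^t)$ is read at each node untouched. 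The priority shift turns ``lowest priority seen infinitely often is even'' into its negation along every path, so that acceptance of $\overline{\mathcal{A}}$ on $(w_1,\dots,w_n)$ coincides with rejection by $\mathcal{A}$. The state set is unchanged, giving a linear (indeed size-preserving) construction.

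For \textbf{union} and \textbf{intersection}, I would take the disjoint union $Q_1 \uplus Q_2$ of the two state sets (identifying the two $\true$ states and the two $\false$ states), keep each automaton's transition function and priority function on its own states, and combine the initial formulas: set $\rho_0 := \rho_0^{(1)} \vee \rho_0^{(2)}$ for union and $\rho_0 := \rho_0^{(1)} \wedge \rho_0^{(2)}$ for intersection. Because alternation already gives us conjunctive branching at the initial level, no product construction is needed and the result is linear in the combined size. The correctness argument is that an accepting run of the union automaton chooses, at the root, an initial valuation satisfying one of the two disjuncts and then proceeds entirely within the corresponding component; for intersection, the root valuation must satisfy both $\rho_0^{(i)}$, forcing the run tree to contain an accepting subrun of each $\mathcal{A}_i$ over the same $n$ input words. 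I would verify that the priority functions do not clash by noting that paths never cross between components after the first step, so the parity condition on each path is evaluated entirely within a single $\Omega_i$.

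The step I expect to require the most care is making precise that the offset-counter semantics interacts correctly with these constructions, since the AAPA run condition is more delicate than the APA one: the letter read at a node depends on the full history through the counters $c_d^t$. For complement this means arguing that a run tree of $\mathcal{A}$ and a run tree of $\overline{\mathcal{A}}$ range over exactly the same family of reachable counter configurations, so that the memoryless-determinacy / positional-strategy argument underlying parity-game duality transfers verbatim. I would phrase this as a game between a Verifier and a Falsifier on the (infinite) configuration graph whose nodes carry both a control location and a counter vector, observe that dualisation exchanges the two players and shifts the winning condition to its complement, and invoke determinacy of parity games to conclude. For union and intersection the counter interaction is benign because each component simply maintains its own counters, so the main obligation there is just the routine check that disjointness of the state sets prevents any unintended merging of branching behaviour.
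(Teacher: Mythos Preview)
Your constructions for union and intersection are correct and match the paper's one-line appeal to alternation.

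There is, however, a subtle gap in your complementation argument. In the AAPA run semantics, the direction $d$ at each non-root node is chosen \emph{existentially} by the run-builder (the clause ``there is a $d\in M$ such that \dots''). Your dualisation swaps $\wedge$/$\vee$ inside each $\rho(q,\sigma,d)$ and shifts priorities, but it leaves this outer existential choice of $d$ untouched: in the dualised automaton the run-builder still picks $d$. The parity-game picture you sketch---where dualisation exchanges Verifier and Falsifier---would hand the direction choice to Falsifier in the complement, but the automaton-level construction you wrote down does not realise that exchange. A two-direction counterexample: take a state $q$ with $\rho(q,\sigma,1)=a$ and $\rho(q,\sigma,2)=r$, where $a$ self-loops with priority $0$ and $r$ self-loops with priority $1$. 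The original accepts every input (pick $d=1$). After your dualisation the single-atom formulas are unchanged and priorities shift by one, so the dual also accepts every input (pick $d=2$, reach $r$ with new priority $2$). Hence your $\overline{\mathcal{A}}$ is not the complement.

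The repair is still linear. First normalise so that each state carries a unique direction (an $n$-fold state blowup, which is linear since $n=|M|$ already contributes to the size of $\rho$); then dualise only $\rho(q,\sigma,\mathrm{dir}(q))$ while keeping $\overline{\rho}(q,\sigma,d)=\false$ for $d\neq\mathrm{dir}(q)$. With the direction now deterministic, the residual game is a pure parity game on the boolean structure of the transitions, and your determinacy argument goes through verbatim. The paper's terse ``standard argument using alternation and priority shifts'' does not spell this out, but its later normalisation (``we assume that each state $q\in Q$ moves in a unique direction'') is exactly what makes the priority-shift-plus-dualisation idea sound.
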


However, for their nondeterministic counterpart, the above result does not hold:

\begin{theorem}
NAPA are not closed under intersection. In particular, it is undecidable whether there is a NAPA recognizing the intersection $\mathcal{L} (\mathcal{A}_1)  \cap  \mathcal{L} (\mathcal{A}_2)$ for two NAPA $\mathcal{A}_1$,$\mathcal{A}_2$. . 
\end{theorem}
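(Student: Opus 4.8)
The plan is to prove the sharper undecidability statement directly; the non-closure claim then follows immediately, since a class that were closed under intersection would make the recognizability question trivially decidable (the answer would always be ``yes''). The engine of both parts is the following \emph{projection lemma}, which I would state and prove first: for every NAPA $\mathcal{C}$ over $n$ input words and every coordinate $j$, the projection $P_j(\mathcal{L}(\mathcal{C})) = \{w_j \mid \exists w_1,\dots,w_{j-1},w_{j+1},\dots,w_n : (w_1,\dots,w_n) \in \mathcal{L}(\mathcal{C})\}$ is $\omega$-regular. Intuitively, one builds a nondeterministic parity automaton over the single word $w_j$ that simulates $\mathcal{C}$, consuming a letter of $w_j$ whenever $\mathcal{C}$ moves in direction $j$ and performing an $\varepsilon$-move with a \emph{guessed} letter whenever $\mathcal{C}$ moves in any other direction (these other words are existentially quantified and otherwise unconstrained).

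Next I would exhibit the counterexample witnessing non-closure. Fix $\Sigma = \{a,b,c,\#\}$ and two input words, and let $\mathcal{L}_1 = \{(a^p b^q \#^\omega, c^p \#^\omega) \mid p,q \geq 0\}$ and $\mathcal{L}_2 = \{(a^p b^q \#^\omega, c^q \#^\omega) \mid p,q \geq 0\}$. Each is recognised by a NAPA that traverses the relevant block of the first word in lockstep with the $c$-block of the second word (advancing one head per step) and uses its acceptance condition to enforce that both heads advance infinitely often on the $\#^\omega$ tails; I would spell out these two small automata. Their intersection is $G := \{(a^p b^p \#^\omega, c^p \#^\omega) \mid p \geq 0\}$, whose projection onto the first coordinate is $\{a^p b^p \#^\omega \mid p \geq 0\}$. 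A routine pumping argument shows this $\omega$-language is not $\omega$-regular, so by the projection lemma $G$ is not recognised by any NAPA. This already yields that NAPA are not closed under intersection.

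For undecidability I would reduce from Post's Correspondence Problem (PCP). Given an instance $I = \{(u_1,v_1),\dots,(u_m,v_m)\}$, I would build two NAPA $\mathcal{A}_1^I, \mathcal{A}_2^I$ over four words such that $\mathcal{L}(\mathcal{A}_1^I) \cap \mathcal{L}(\mathcal{A}_2^I) = S_I \times G$, where $S_I = \{(w\#^\omega, s\#^\omega) \mid s = i_1\cdots i_k,\ w = u_{i_1}\cdots u_{i_k} = v_{i_1}\cdots v_{i_k},\ k\geq 1\}$ is the set of encoded PCP solutions and $G$ is the gadget above placed on words $3$ and $4$. The point is that each $\mathcal{A}_\ell^I$ checks a conjunction of two constraints living on disjoint words --- a ``half'' of the PCP condition on words $1,2$ (that $w$ equals the $u$-, resp. $v$-concatenation induced by $s$) and a ``half'' of $G$ on words $3,4$ --- and since the two constraints share no word, a single nondeterministic asynchronous thread can verify them one after the other, so each $\mathcal{A}_\ell^I$ is a genuine NAPA. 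If $I$ is unsolvable then $S_I = \emptyset$, hence the intersection is empty and trivially NAPA-recognisable; if $I$ is solvable then $S_I \neq \emptyset$, so the projection of $S_I \times G$ onto word $3$ equals the projection of $G$, namely the non-$\omega$-regular language $\{a^p b^p \#^\omega\}$, and the projection lemma shows the intersection is not NAPA-recognisable. Thus the intersection is recognisable iff $I$ is unsolvable, and undecidability of PCP transfers.

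The main obstacle I anticipate is the careful proof of the projection lemma: one must argue that $\mathcal{C}$'s free interleaving of head advances can be linearised into a single left-to-right pass over $w_j$ with $\varepsilon$-moves, handle the case in which $\mathcal{C}$ advances head $j$ only finitely often (so that an arbitrary suffix of $w_j$ remains free, contributing an open but still $\omega$-regular component), and show that the parity acceptance condition of $\mathcal{C}$ is faithfully preserved through $\varepsilon$-elimination. The remaining ingredients --- the explicit small NAPA for $\mathcal{L}_1$ and $\mathcal{L}_2$ and the PCP encoders, the observation that disjoint-word constraints can be checked sequentially by one nondeterministic thread, and the non-$\omega$-regularity of $\{a^p b^p \#^\omega\}$ --- are routine.
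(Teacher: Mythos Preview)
Your proposal is correct and takes a genuinely different route from the paper.

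The paper's proof is a two-line reduction to the analogous result for finite-word asynchronous (multitape) automata: pad any two AA-recognisable languages $\mathcal{L}_1,\mathcal{L}_2$ with a fresh symbol $\bot^\omega$ to obtain NAPA-recognisable $\hat{\mathcal{L}}_1,\hat{\mathcal{L}}_2$, and observe that a NAPA for $\hat{\mathcal{L}}_1\cap\hat{\mathcal{L}}_2$ could be converted back to an AA for $\mathcal{L}_1\cap\mathcal{L}_2$. Both non-closure and undecidability then follow directly from the known results for AA \cite{Furia2014}; no new automata-theoretic lemma is proved.

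Your argument is self-contained and rests on a structural fact the paper never isolates: every coordinate projection of a NAPA language is $\omega$-regular. This lemma is close in spirit to the paper's PSPACE emptiness proof (\autoref{thm:napaemptiness}), but the latter does not need to track a designated input word, so the $\varepsilon$-elimination and the finitely-many-$j$-moves case you flag are genuinely additional work. With the lemma in hand, your gadget $\mathcal{L}_1\cap\mathcal{L}_2 = G$ with first projection $\{a^pb^p\#^\omega\}$ gives non-closure cleanly, and attaching $G$ on two fresh coordinates to a PCP encoder gives undecidability. The sequentialisation of two disjoint-coordinate checks works here precisely because both constraints decompose as ``finite prefix $+$ $\#^\omega$'', so a single joint $\#$-loop at the end certifies all four tails; you should make this explicit, since for constraints with genuinely interleaved infinite behaviour on disjoint coordinates a single nondeterministic thread could \emph{not} handle the conjunction.

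In short: the paper trades a short proof for a dependence on the finite-word literature; your route is longer but yields a reusable projection lemma and needs no external citation.
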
 

\begin{proof}
Let finite state Asynchronous Automata (AA) be NAPA on finite words which accept input words by reaching accepting states.
Fix two languages $\mathcal{L}_1, \mathcal{L}_2$  recognizable by AA over an alphabet $\Sigma$.
Let $\bot$ be a fresh symbol.
Then $\hat{\mathcal{L}_1} = \{w \bot^{\omega} \mid w \in \mathcal{L}_1\}$ and $\hat{\mathcal{L}_2} = \{w \bot^{\omega} \mid w \in \mathcal{L}_2\}$ are recognizable by two NAPA.
By a standard argument, from a NAPA recognizing $\hat{\mathcal{L}_1} \cap \hat{\mathcal{L}_2}$, we obtain an AA recognizing $\mathcal{L}_1 \cap \mathcal{L}_2$, but AA are not closed under intersection and it is undecidable whether there is an AA recognizing the intersection \cite{Furia2014}.
\end{proof}

Due to this difference in closure properties, we obtain a gap in expressivity:

\begin{corollary}
There is an AAPA such that no NAPA recognises the same language. It is undecidable whether an AAPA can be translated to a NAPA that recognises the same language.
\end{corollary}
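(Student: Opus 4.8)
The plan is to obtain both claims as immediate consequences of the two preceding results: the closure of AAPA under intersection (\autoref{theorem:AAPAclosure}) together with the failure of NAPA to be closed under intersection, and in particular the undecidability of the NAPA-intersection problem established in the previous theorem. The guiding observation is that a NAPA is syntactically a special case of an AAPA (one in which $\rho_0$ and every $\rho(q,\sigma,d)$ is a disjunction), so any pair of NAPA may be fed into the AAPA intersection construction.

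For the first statement I would proceed as follows. By the previous theorem there exist two NAPA $\mathcal{A}_1, \mathcal{A}_2$ such that no NAPA recognises $\mathcal{L}(\mathcal{A}_1) \cap \mathcal{L}(\mathcal{A}_2)$. Regarding $\mathcal{A}_1$ and $\mathcal{A}_2$ as AAPA and applying \autoref{theorem:AAPAclosure}, I obtain an AAPA $\mathcal{A}$ with $\mathcal{L}(\mathcal{A}) = \mathcal{L}(\mathcal{A}_1) \cap \mathcal{L}(\mathcal{A}_2)$. By the choice of $\mathcal{A}_1, \mathcal{A}_2$, no NAPA recognises $\mathcal{L}(\mathcal{A})$, which witnesses the claimed expressivity gap.

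For the second statement the plan is a many-one reduction from the (undecidable) problem ``given two NAPA $\mathcal{A}_1, \mathcal{A}_2$, does some NAPA recognise $\mathcal{L}(\mathcal{A}_1) \cap \mathcal{L}(\mathcal{A}_2)$?'' to the AAPA-to-NAPA translatability problem. Given an instance $(\mathcal{A}_1, \mathcal{A}_2)$, I compute, via the intersection construction of \autoref{theorem:AAPAclosure}, an AAPA $\mathcal{A}$ with $\mathcal{L}(\mathcal{A}) = \mathcal{L}(\mathcal{A}_1) \cap \mathcal{L}(\mathcal{A}_2)$. Then $\mathcal{A}$ can be translated to an equivalent NAPA if and only if $\mathcal{L}(\mathcal{A}_1) \cap \mathcal{L}(\mathcal{A}_2)$ is NAPA-recognisable. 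Hence a decision procedure for AAPA-to-NAPA translatability would decide the NAPA-intersection problem, contradicting the previous theorem, so the translatability problem is undecidable.

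The only point requiring care, rather than a genuine obstacle, is that this reduction must be \emph{effective}: the AAPA $\mathcal{A}$ has to be computable from $(\mathcal{A}_1, \mathcal{A}_2)$. This is exactly guaranteed by the final sentence of \autoref{theorem:AAPAclosure}, which states that the intersection construction is linear in the size of the input automata and in particular algorithmic. I would also make explicit the equivalence underlying the reduction, namely that ``$\mathcal{A}$ is NAPA-translatable'' and ``$\mathcal{L}(\mathcal{A}_1) \cap \mathcal{L}(\mathcal{A}_2)$ is NAPA-recognisable'' are the same condition once $\mathcal{L}(\mathcal{A}) = \mathcal{L}(\mathcal{A}_1) \cap \mathcal{L}(\mathcal{A}_2)$ has been fixed; everything else is a direct invocation of the two cited results.
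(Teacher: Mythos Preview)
Your proposal is correct and is precisely the argument the paper intends: the corollary is stated immediately after the non-closure and undecidability result for NAPA intersection, with the remark ``Due to this difference in closure properties, we obtain a gap in expressivity,'' and your two-step reduction via the AAPA intersection construction of \autoref{theorem:AAPAclosure} spells out exactly this reasoning. There is nothing to add.
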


A related result regarding the intersection of NAPA is:

\begin{theorem}
	The emptiness problem for the intersection of NAPA is undecidable.
\end{theorem}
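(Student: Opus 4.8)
The plan is to reduce from the emptiness problem for the intersection of two finite-state Asynchronous Automata (AA) on finite words, which is known to be undecidable by the cited result of Furia \cite{Furia2014}. The overall strategy closely parallels the padding argument used in the proof that NAPA are not closed under intersection: I will lift finite-word AA languages to $\omega$-word NAPA languages by appending an infinite tail of a fresh padding symbol, and then show that emptiness of the NAPA intersection coincides with emptiness of the original finite-word intersection.

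First I would fix two AA recognising languages $\mathcal{L}_1, \mathcal{L}_2 \subseteq \Sigma^*$ over a common alphabet $\Sigma$, and introduce a fresh symbol $\bot \notin \Sigma$. As in the previous proof, I construct NAPA $\mathcal{A}_1, \mathcal{A}_2$ over $\Sigma \cup \{\bot\}$ recognising $\hat{\mathcal{L}_1} = \{w\bot^{\omega} \mid w \in \mathcal{L}_1\}$ and $\hat{\mathcal{L}_2} = \{w\bot^{\omega} \mid w \in \mathcal{L}_2\}$ respectively: each NAPA simulates its AA on the $\Sigma$-prefix, and upon reaching an accepting state nondeterministically switches into a self-looping state that verifies the remaining input consists solely of $\bot$ on every direction, using a Büchi (priority $0$) accepting sink. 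Note that the padding symbol guarantees that any accepted $\omega$-word decomposes uniquely as a finite $\Sigma$-word followed by $\bot^{\omega}$, so the $\omega$-level intersection faithfully reflects the finite-word intersection.

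Next I would establish the key equivalence: $\hat{\mathcal{L}_1} \cap \hat{\mathcal{L}_2} \neq \emptyset$ if and only if $\mathcal{L}_1 \cap \mathcal{L}_2 \neq \emptyset$. The forward direction is immediate, since any word in the $\omega$-intersection has the form $w\bot^{\omega}$ with $w$ accepted by both AA, hence $w \in \mathcal{L}_1 \cap \mathcal{L}_2$; the backward direction pads a common finite witness $w$ with $\bot^{\omega}$. Since the construction of $\mathcal{A}_1, \mathcal{A}_2$ from the given AA is effective, a decision procedure for NAPA-intersection-emptiness would yield one for AA-intersection-emptiness, contradicting \cite{Furia2014}.

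The main obstacle I anticipate is making the padding construction genuinely robust against the \emph{asynchronous} reading discipline. Because an AAPA advances its several input words independently via the direction argument, I must ensure the NAPA cannot ``cheat'' by reading $\bot$-symbols on one tape while still consuming genuine $\Sigma$-symbols on another in a way that corrupts the intended simulation — but here each $\mathcal{A}_i$ is a single NAPA over a single family of input words, so the relevant subtlety is only that the acceptance condition correctly forces all directions eventually into the $\bot^{\omega}$ regime; this is handled by the parity/Büchi sink design. The remaining care is simply to confirm that the previous theorem's claim ``recognizable by two NAPA'' transfers verbatim, so that the constructed $\mathcal{A}_1, \mathcal{A}_2$ are indeed legitimate NAPA and the reduction is clean.
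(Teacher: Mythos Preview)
Your proof is correct but takes a different route from the paper. The paper reduces \emph{directly} from PCP: given an instance $(w_1,u_1),\dots,(w_n,u_n)$ it builds one NAPA recognising all concatenations of tile pairs followed by $(\bot,\bot)^\omega$, and a second NAPA recognising the diagonal language $\{(a_1,a_1)\cdots(a_l,a_l)(\bot,\bot)^\omega\}$; the intersection is nonempty iff the PCP instance has a solution. You instead factor through the finite-word result, padding two arbitrary AA with $\bot^\omega$ and invoking undecidability of AA intersection emptiness.

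Both approaches are sound. Yours is more modular---it reuses verbatim the padding machinery from the preceding theorem and offloads the combinatorics to the finite-word literature---whereas the paper's argument is fully self-contained and exhibits concrete witnessing NAPA. One caveat on your side: the paper cites \cite{Furia2014} only for the \emph{closure} question (undecidability of whether an AA recognising the intersection exists), not explicitly for intersection \emph{emptiness}. The latter is of course classical (it is essentially the PCP encoding for multitape automata going back to Rabin and Scott), so your reduction is mathematically fine, but you should either verify that the emptiness result is actually in \cite{Furia2014} or cite an appropriate source.
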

\begin{proof}
	The proof is by reduction from the Post Correspondence Problem (PCP). 
	Let $I = (w_1, u_1) \dots (w_n, u_n)$ be a PCP instance. 
	We again choose $\bot$ as a fresh symbol.
	Let $\mathcal{L}_1$ be the language consisting of all possible concatenations of $(w_i, u_i)$ followed by $(\bot,\bot)^{\omega}$ and $\mathcal{L}_2$ be the language $\{(a_1,a_1)(a_2,a_2)\dots(a_l,a_l)(\bot,\bot)^{\omega} | a_i \in \Sigma, l \in \mathbb{N}_0\}$.
	$\mathcal{L}_1$ can be recognised by a NAPA with loops on an initial state that first read $w_i$ in direction $1$ and then read $u_i$ in direction $2$. 
	$\mathcal{L}_2$ can be recognised by inspecting a single symbol of both directions in turns and making sure that the same symbol is being read.
	Then the PCP instance has a solution iff $\mathcal{L}_1 \cap \mathcal{L}_2$ is non-empty.
\end{proof}

Since the intersection of two NAPA can be recognised by an AAPA, we immediately obtain:

\begin{corollary}
	The emptiness problem for AAPA is undecidable.
\end{corollary}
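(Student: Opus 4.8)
The plan is to prove the final corollary by reducing the (undecidable) emptiness problem for the intersection of two NAPA to the emptiness problem for a single AAPA. The key observation is that AAPA, unlike NAPA, are closed under intersection by \autoref{theorem:AAPAclosure}, and this closure is constructive: given two NAPA $\mathcal{A}_1$ and $\mathcal{A}_2$, we can effectively build an AAPA $\mathcal{A}$ with $\mathcal{L}(\mathcal{A}) = \mathcal{L}(\mathcal{A}_1) \cap \mathcal{L}(\mathcal{A}_2)$.

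First I would note that every NAPA is in particular an AAPA, since the nondeterministic variant is simply the special case where $\rho_0$ and every $\rho(q,\sigma,d)$ are purely disjunctive. Then, applying the intersection construction from \autoref{theorem:AAPAclosure} to the two given NAPA $\mathcal{A}_1$ and $\mathcal{A}_2$ (viewed as AAPA), I obtain an AAPA $\mathcal{A}$ recognising $\mathcal{L}(\mathcal{A}_1) \cap \mathcal{L}(\mathcal{A}_2)$. Crucially, the construction is effective and linear in the sizes of $\mathcal{A}_1$ and $\mathcal{A}_2$, so $\mathcal{A}$ can be computed from the inputs.

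The reduction is then immediate: $\mathcal{L}(\mathcal{A}) = \emptyset$ if and only if $\mathcal{L}(\mathcal{A}_1) \cap \mathcal{L}(\mathcal{A}_2) = \emptyset$. Since deciding emptiness of the intersection of two NAPA is undecidable by the preceding theorem, deciding emptiness of an arbitrary AAPA must also be undecidable; otherwise, we could decide NAPA-intersection emptiness by first constructing $\mathcal{A}$ and then testing it for emptiness, contradicting the earlier result.

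I expect the only subtlety to be ensuring that the intersection construction underlying \autoref{theorem:AAPAclosure} genuinely produces an AAPA over the same direction set $M$ and the same alphabet $\Sigma$, and that it is computable rather than merely existentially guaranteed. Since the theorem explicitly states the constructions are linear in the size of the input automata, this is unproblematic, and the remaining argument is the standard many-one reduction sketched above. This is essentially the remark already flagged in the text (``Since the intersection of two NAPA can be recognised by an AAPA''), so the proof amounts to spelling out that this translation is effective and then invoking undecidability of NAPA-intersection emptiness.
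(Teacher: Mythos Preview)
Your proposal is correct and follows exactly the same route as the paper: the corollary is obtained by noting that the intersection of two NAPA can be recognised by an AAPA (via the effective closure under intersection from \autoref{theorem:AAPAclosure}), so AAPA emptiness inherits undecidability from the preceding theorem on NAPA-intersection emptiness.
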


Furthermore, since AAPA can be used to decide inclusion problems for two-tape Büchi automata \cite{Finkel2009}, this result can be strengthened drastically.
We refer the interested reader to the \switchExtended{extended version}{appendix} for a detailed proof of the strengthened claim.

\begin{theorem}\label{thm:aapasigmaone}
The emptiness problem for AAPA is hard for the level $\Sigma_2^1$ of the analytical hierarchy.
\end{theorem}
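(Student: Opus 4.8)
```latex
\textbf{Plan.}
The statement asserts $\Sigma_2^1$-hardness of AAPA emptiness, a substantial strengthening of the plain undecidability already obtained via PCP. The hint in the excerpt is decisive: AAPA can decide inclusion problems for two-tape B\"uchi automata, and such inclusion problems are known to sit high in the analytical hierarchy \cite{Finkel2009}. The plan is therefore to reduce a known $\Sigma_2^1$-hard (or $\Sigma_2^1$-complete) problem about two-tape B\"uchi automata to AAPA emptiness. Concretely, I would look for a problem of the form ``does there exist an infinite object (e.g. a witness word, or a path through a tree of choices) such that for all infinite continuations a certain two-tape B\"uchi condition holds'', whose $\exists^1 \forall^1$ quantifier structure over infinite objects matches the $\Sigma_2^1$ level. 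The universality or inclusion problem for $\omega$-languages of two-tape (asynchronous) B\"uchi automata is the natural candidate, as these are exactly the kind of problems shown to be analytical-hierarchy-hard in the cited descriptive-set-theoretic literature.

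\emph{First step: identify the source problem precisely.} I would pin down a concrete decision problem $P$ about two-tape B\"uchi automata that is established as $\Sigma_2^1$-hard in \cite{Finkel2009} (or reducible from one that is). The key is that the quantifier alternation $\exists X \subseteq \mathbb{N} \, \forall Y \subseteq \mathbb{N}$ of $\Sigma_2^1$ should be realizable: the existential second-order quantifier will be captured by the \emph{nondeterministic/disjunctive} branching of the AAPA (guessing an infinite witness along a single input word), while the universal second-order quantifier will be captured either by the \emph{alternation} (conjunctive branching) of the AAPA or by quantification over the remaining input words.

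\emph{Second step: the encoding.} Since \autoref{theorem:AAPAclosure} gives us closure under union, intersection and complement with only linear blowup, I would freely combine several AAPA components. The two tapes of the source automaton correspond to two of the $n$ directions of the AAPA, and the asynchronous reading (the $M$-argument of $\rho$) is exactly what lets the AAPA traverse the two tapes at independent speeds, mirroring the two-tape mechanism. The acceptance/rejection polarity required by the inclusion or universality problem is handled by complementation (\autoref{theorem:AAPAclosure}), so that ``$P$ holds'' becomes ``the constructed AAPA has a nonempty language''. I would show the reduction is computable (the automaton is built effectively from the instance of $P$), which suffices for hardness.

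\emph{The main obstacle} will be faithfully realizing the full $\exists^1\forall^1$ structure over infinite objects within the single run-tree semantics of an AAPA, rather than merely a $\Sigma_1^1$ (one quantifier) structure. Plain nondeterminism buys one existential quantifier over an infinite word; plain alternation buys a bounded interaction, but capturing a genuine \emph{universal} second-order quantifier requires that the alternating/conjunctive branches effectively range over all infinite continuations of some tape, with the parity condition enforcing the correct acceptance on every such branch. Making this precise --- arguing that the accepting-run semantics of the AAPA encodes ``for all $Y$, the B\"uchi condition holds'' and not some weaker statement --- is the delicate point, and is presumably exactly why the authors defer the detailed construction to the appendix. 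A secondary subtlety is ensuring the asynchronous offset counters $c_d^t$ let the two tapes desynchronize enough to simulate the two-tape automaton's independent head movements without introducing spurious behaviours.
```
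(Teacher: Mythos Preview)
Your plan is on the right track and matches the paper's approach: the source problem is indeed (non-)inclusion for two-tape B\"uchi automata, which Finkel shows is $\Pi_2^1$-complete, so its complement is $\Sigma_2^1$-complete; one then builds an AAPA for $\mathcal{R}(\mathcal{T}) \cap \overline{\mathcal{R}(\mathcal{T}')}$ using the closure properties of \autoref{theorem:AAPAclosure}, and this AAPA is non-empty iff $\mathcal{R}(\mathcal{T}) \not\subseteq \mathcal{R}(\mathcal{T}')$.

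However, your ``main obstacle'' paragraph misdiagnoses the difficulty. You do \emph{not} need to realise an $\exists^1\forall^1$ quantifier pattern inside the run-tree semantics of the AAPA. The $\Sigma_2^1$-complexity lives entirely in the \emph{source} problem; once non-inclusion of infinitary rational relations is known to be $\Sigma_2^1$-hard, the reduction to AAPA emptiness is a three-line argument: (i) a two-tape B\"uchi automaton is directly a NAPA (simulate multi-symbol transitions stepwise), hence an AAPA; (ii) complement the AAPA for $\mathcal{T}'$ by the priority-shift/dualisation of \autoref{theorem:AAPAclosure}; (iii) take the conjunction with the AAPA for $\mathcal{T}$. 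The resulting AAPA recognises exactly $\mathcal{R}(\mathcal{T}) \cap \overline{\mathcal{R}(\mathcal{T}')}$, and its non-emptiness is \emph{literally} the statement of non-inclusion. There is no need for conjunctive branches to ``range over all infinite continuations'' or to encode a universal second-order quantifier explicitly---the alternation in the AAPA is used only for the trivial Boolean closure, and the analytical-hierarchy complexity is inherited wholesale from Finkel's theorem. In short, the proof is much simpler than your obstacle analysis suggests.
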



When comparing NAPA to synchronous automata, it is not hard to see that the language $\mathcal{L} = \{(a,a,a)^n(b,a,a)^n(b,b,a)^n(b,b,b)^{\omega} | n \in \mathbb{N}_0\}$ can be recognised by a NAPA, while there is no synchronous parity automaton recognising $\mathcal{L}$ since it is not an $\omega$-regular language.
Despite this increase in expressive power, the emptiness problem for NAPA can still be reduced to an emptiness test on their synchronous counterparts.

\begin{theorem}\label{thm:napaemptiness}
The emptiness problem for NAPA is $\PSPACE$-complete.
\end{theorem}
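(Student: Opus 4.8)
The plan is to prove both bounds by relating NAPA to their synchronous counterparts, the NPA of \autoref{thm:parityemptiness}. For membership in $\PSPACE$ I would first exploit that, since $\rho_0$ and $\rho$ are purely disjunctive, every accepting run of a NAPA is a single infinite path: at each node it suffices to choose one successor state together with one direction $d\in M$, so no genuine branching is ever needed. Hence an accepting run is completely described by an infinite sequence of triples consisting of the current state, the chosen direction $d$, and the symbol $w_d(c_d^t)$ read in that direction. The key structural fact I would establish is that along such a path the offset $c_d^t$ of each direction is non-decreasing, so each cell $(d,j)$ of each input word is visited at most once and in increasing order of $j$. Consequently, any sequence of direction/symbol choices that is consistent with the transition relation can be realised by genuine input words $w_1,\dots,w_n$ (padding the unread suffixes arbitrarily), and conversely every accepting run induces such a sequence. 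This reduces NAPA non-emptiness to the search for an accepting lasso of a synchronous parity automaton that reads over $\Sigma\times M$.

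The subtlety that determines the complexity, and which I would treat carefully, is that the synchronous device must also certify that the $n$ directions can be scheduled consistently as a single fixed input and that the parity condition of the \emph{single} resulting path is evaluated correctly. Tracking the relative offsets of the $n$ directions so as to reconstruct one synchronous word requires a product over the directions, and hence a synchronous NPA whose size is exponential in $n$. Running the $\NLOGSPACE$ emptiness test of \autoref{thm:parityemptiness} on this implicitly defined automaton, without ever materialising it, stays within polynomial space and yields the $\PSPACE$ upper bound. I would also check here that a direction advanced only finitely often cannot spuriously alter the lowest priority occurring infinitely often on the run, so that the lasso search is sound.

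For $\PSPACE$-hardness I would reduce from the non-emptiness of the intersection $\bigcap_{i=1}^n \mathcal{L}(A_i)$ of $n$ DFAs, a canonical $\PSPACE$-complete problem. The idea is to use the $n$ directions to read $n$ copies of a candidate common word, letting direction $i$ drive a simulation of $A_i$, and to force the asynchronous schedule to interleave the directions so that a single accepting run simultaneously verifies all the $A_i$ on the same word. The delicate part is to enforce, using only disjunctive transitions and one run path, that the $n$ copies coincide; I would do this by reading the copies in lockstep and comparing symbols, thereby encoding the DFA product into the direction schedule rather than into the state space.

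The main obstacle I anticipate is, on both sides, the same phenomenon: faithfully relating the asynchronous, offset-based progress of a NAPA to a synchronous device. For the upper bound the hard step is to bound, or to track within polynomial space, the offset differences between directions so that the passage to NPA emptiness is correct and tight. For the lower bound the hard step is exactly dual, namely to \emph{force} the coincidence of the words read in different directions with purely nondeterministic transitions, since this is precisely where the expressive gap to the synchronous product -- and hence the $\PSPACE$ lower bound -- has to be manufactured. I expect the interplay between this offset bookkeeping and the parity acceptance on the single run to be the technically most demanding point of the argument.
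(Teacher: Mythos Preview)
Your upper bound is essentially right but more convoluted than the paper's. The paper does not track offsets: it builds an NPA with state set $Q\times\Sigma^n$, storing the tuple of currently visible symbols on the $n$ tapes. A simulated NAPA step in direction $d$ uses the stored $d$-th component for the transition and then reads one fresh input symbol to refresh that component. This yields $m\cdot|\Sigma|^n$ states and, with the $\NLOGSPACE$ emptiness test for NPA, the $\PSPACE$ upper bound. Your offset-tracking idea is both unnecessary and, as stated, ill-defined: the relative offsets between directions are unbounded along a general NAPA run, so they cannot be held in a finite control.

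Your lower bound has a genuine gap, precisely at the point you yourself call ``delicate''. To reduce intersection non-emptiness of $n$ DFAs to polynomial-size NAPA emptiness, a single nondeterministic run would have to simultaneously (a) certify that the $n$ directions carry the same word and (b) simulate each $A_i$ on direction~$i$. Check~(a) is easy, but check~(b) needs, at every input position, a window-of-two read on each direction to relate consecutive DFA states, and carrying the first halves of those windows across a full sweep of the $n$ directions forces you to remember all $n$ current DFA states---the very product you are trying to avoid. Pushing DFA states into the input words only relocates the window-two consistency obligation; it does not remove it. The paper sidesteps this by reducing instead from acceptance of polynomially space-bounded deterministic Turing machines: one direction per tape cell, and the NAPA state holds only the head position and the TM control state. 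The decisive structural property is that a TM modifies exactly \emph{one} cell per step, so each simulated step advances exactly one direction; the other cells' contents are preserved for free because their directions do not move. This one-direction-per-step locality is what makes the consistency check fit in a polynomial-size NAPA, and it is exactly what is absent in the DFA-intersection scenario, where all $n$ automata must advance on every input letter.
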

\begin{proof}
Given a NAPA $\mathcal{A}$ with $m$ states and $n$ input words over the alphabet $\Sigma$, we construct a NPA $\mathcal{A}'$ with $m \cdot |\Sigma|^n$ states over  $\Sigma$ that stores the currently accessible input vector in its state and guesses the next input symbol for each direction.
For hardness, we can encode the configurations of a Turing machine whose space is bounded by a polynomial $p(x)$ in a NAPA with $n = p(x)$ directions.
Consistency of successive configurations can be checked locally such that there is no need to represent full configurations in the input alphabet or the state space.
Therefore, the size of the NAPA stays polynomial despite of the fact that the Turing machine has exponentially many configurations.
Thus, we can check for the existence of an accepting run of the Turing machine.
\end{proof}

Since the translation used in the proof is only exponential in $n$ and NAPA subsume synchronous Büchi automata, we obtain the following corollary:

\begin{corollary}
	For fixed $n$, the emptiness problem for NAPA is $\NLOGSPACE$-complete.
\end{corollary}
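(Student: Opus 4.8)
The plan is to prove matching upper and lower bounds, reusing the translation from the proof of \autoref{thm:napaemptiness} for membership in $\NLOGSPACE$ and the fact that NAPA subsume synchronous Büchi automata for hardness.

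For the upper bound, I would revisit the NPA $\mathcal{A}'$ constructed in the proof of \autoref{thm:napaemptiness}: its states are pairs consisting of a control location of the NAPA $\mathcal{A}$ together with an input vector from $\Sigma^n$, so it has $m \cdot |\Sigma|^n$ states. For fixed $n$ the quantity $|\Sigma|^n$ is merely polynomial in $|\Sigma|$, hence each individual state of $\mathcal{A}'$ admits an encoding of size $O(\log m + n \log |\Sigma|)$, which is logarithmic in the size of $\mathcal{A}$ once $n$ is a constant. The essential point is that one must \emph{not} materialise $\mathcal{A}'$ explicitly, since this would require polynomial space; instead I would run the $\NLOGSPACE$ emptiness procedure for NPA guaranteed by \autoref{thm:parityemptiness} on an implicit representation of $\mathcal{A}'$. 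Whenever that procedure needs a successor of a state of $\mathcal{A}'$, it is computed directly from $\mathcal{A}$ by reading off the relevant component of the stored input vector, applying $\rho$, and guessing the fresh symbol for the advanced direction. Since the emptiness test keeps only a constant number of states of $\mathcal{A}'$ in memory at any time, and each such state fits in logarithmic space, the whole procedure runs in $\NLOGSPACE$.

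For the lower bound, I would observe that every NBA over $\Sigma$ is, trivially, a NAPA with $n = 1$: there is a single input word and no genuine asynchrony, so the only direction is $1$ and the transition function reduces to that of an NBA, with emptiness preserved under this identification. As $n = 1$ is a fixed value and the emptiness problem for NBA is $\NLOGSPACE$-hard by \autoref{thm:parityemptiness}, the emptiness problem for NAPA with fixed $n$ is $\NLOGSPACE$-hard as well.

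The only delicate point is the upper bound, and specifically resisting the temptation to first build $\mathcal{A}'$ and then test it, which would only give the polynomial-space bound of \autoref{thm:napaemptiness}. The proof hinges on the observation that a single state of $\mathcal{A}'$ has a logarithmic-size description for fixed $n$, so the standard reachability-and-cycle-detection emptiness test for NPA can be carried out on $\mathcal{A}'$ without ever storing it in full.
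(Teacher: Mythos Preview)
Your proposal is correct and follows the same approach the paper sketches in the sentence preceding the corollary: the translation to an NPA is only exponential in $n$ (hence polynomial for fixed $n$), and NAPA subsume synchronous B\"uchi automata for hardness. Your write-up simply unfolds these two observations in more detail, including the standard point that the polynomial-size NPA must be accessed implicitly (equivalently, produced by a logspace reduction) rather than stored in working memory.
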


As emptiness of AAPA cannot be decided and as alternation elimination is not possible, we study analyses that consider well-specified subclasses of runs and identify fragments of AAPA for which these analyses are precise.

\subsection{$k$-synchronous analysis of AAPA}

\begin{definition}
	We call a run $T$ of an AAPA with $n$ input words $k$-synchronous for a $k \in \mathbb{N} \cup \{\infty\}$, if in every node $t$ in $T$, the offset counters $c_1^t,...,c_n^t$ satisfy $|c_i^t - c_j^t| \leq k$ for all $i$ and $j$.
\end{definition}

Intuitively, a $k$-synchronous run has the property that the AAPA can never be \textit{ahead} more than $k$ steps in one direction than in any other.
This gives rise to an approximate analysis where only the $k$-synchronous runs of an AAPA are considered.
Since a $k$-synchronous run is particularly a $k'$-synchronous run for all $k' \geq k$, the approximation improves with increasing $k$, capturing the whole semantics at $k = \infty$ for all AAPA.
Since an analysis with $k = \infty$ is impossible, we assume that $k < \infty$ in the remainder of this section.
We show that $k$-synchronous runs of an AAPA can be analysed via a reduction to APA:

\begin{theorem}\label{thm:aapaksynchronoustosynchronous}
	For every AAPA $\mathcal{A}$ over $\Sigma$ with $l$ priorities, $n$ input words and $m$ states, there is an APA over $\Sigma^n$ with $\mathcal{O}(l \cdot m \cdot |\Sigma|^{k \cdot n})$ states recognizing all words accepted by a $k$-synchronous run of $\mathcal{A}$.
\end{theorem}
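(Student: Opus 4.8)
The plan is to construct the desired APA $\mathcal{A}'$ over $\Sigma^n$ by directly simulating the $k$-synchronous runs of $\mathcal{A}$: I read the $n$ input words synchronously as a single word $w = w^0 w^1 \cdots \in (\Sigma^n)^{\omega}$ and buffer the symbols that individual directions have read but not yet consumed. A state of $\mathcal{A}'$ is a triple $(q, \vec B, \pi)$, where $q \in Q$ is the current control location of $\mathcal{A}$, $\vec B = (B_1, \dots, B_n)$ records for each direction $i$ the already-read but unconsumed symbols of $w_i$, and $\pi \in \{0,\dots,l-1\}$ stores a priority (explained below). The governing invariant is that $\mathcal{A}'$ keeps its reading frontier aligned with the \emph{maximal} offset of the simulated run: when $\mathcal{A}'$ is at depth $p$ (about to read $w^p$), the simulated configuration satisfies $c_{\max} := \max_i c_i = p$. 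Under this invariant $B_i$ consists exactly of the symbols at indices $c_i, \dots, p-1$ of $w_i$, so $|B_i| = c_{\max} - c_i$, which is at most $k$ for $k$-synchronous runs. This bounds the number of buffer configurations by $\mathcal{O}(|\Sigma|^{k \cdot n})$ and yields the claimed count $\mathcal{O}(l \cdot m \cdot |\Sigma|^{k\cdot n})$.

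The transition of $\mathcal{A}'$ from $(q, \vec B, \pi)$ on $\sigma = (\sigma_1,\dots,\sigma_n)$ I would obtain by a bounded unfolding of $\mathcal{A}$. First $\sigma_i$ is appended to each $B_i$, making the symbol needed by the formerly maximal direction available. Then I symbolically unfold the run of $\mathcal{A}$: from a configuration a direction $d$ with non-empty buffer may be stepped, consuming $\mathrm{first}(B_d)$ and branching according to $\rho(q, \mathrm{first}(B_d), d)$, each resulting atom carrying the buffers with $B_d$ popped and $c_d$ incremented. Steps making two offsets differ by more than $k$ are pruned (replaced by $\false$), which is exactly the restriction to $k$-synchronous runs. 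Each branch is stopped as soon as its maximal offset reaches $p+1$; the reached configurations, with their buffers and priority component, become the atoms of the positive Boolean formula defining the transition. This unfolding is finite, since every internal step strictly increases $\sum_i c_i$ while $k$-synchronicity confines all offsets to a window of width at most $k+1$, so at most $\mathcal{O}(n k)$ steps occur per segment. Crucially, the alternation — the Boolean structure of $\rho$ — is preserved by composing formulas along the unfolding, and the momentary growth of a laggard buffer to length $k+1$ during a segment is resolved by the time the segment ends, so only buffers of length at most $k$ are ever stored in a state.

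For acceptance I would exploit that this unfolding partitions every path of an $\mathcal{A}$-run into finite \emph{segments}, the $p$-th segment comprising the configurations with $c_{\max} = p$. I set the priority component of a successor state to the minimum of $\Omega$ over the control locations traversed in the segment just completed, and let $\Omega'(q,\vec B,\pi) := \pi$. Every infinite path of a $k$-synchronous run advances all directions infinitely often — otherwise the offset gap would grow without bound, contradicting $k$-synchronicity — so $c_{\max} \to \infty$, $\mathcal{A}'$ reads the whole word, and the path splits into infinitely many segments. The least priority occurring infinitely often along the original path coincides with the least priority occurring infinitely often among the per-segment minima, so the parity acceptance of $\mathcal{A}'$ matches that of $\mathcal{A}$; this minimum-tracking is precisely what introduces the factor $l$.

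It then remains to establish the correspondence in both directions, namely that accepting $k$-synchronous runs of $\mathcal{A}$ on $(w_1,\dots,w_n)$ map to accepting runs of $\mathcal{A}'$ on $w$ and conversely, which follows by reading off the simulated configuration from each $\mathcal{A}'$-run (recovering offsets from buffers via the invariant $c_{\max}=p$) and inversely. I expect the main obstacle to be exactly this faithful compression of a variable, per-path unbounded but per-segment bounded number of asynchronous steps into single synchronous transitions: one must simultaneously keep buffers bounded, which forces the frontier-to-$c_{\max}$ invariant together with the pruning of non-$k$-synchronous steps, and preserve the parity condition across segments, which forces the per-segment minimum-priority bookkeeping and the limit-inferior argument above.
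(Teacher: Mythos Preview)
Your proposal is correct and follows essentially the same approach as the paper: maintain a bounded window/buffer of input symbols in the state space, compress a maximal AAPA segment into a single APA transition by symbolic unfolding, and tag successor states with the minimum priority seen during the segment so that the parity condition is preserved. The only noteworthy variation is that you align the APA's reading position with $c_{\max}$ (a segment ends when the foremost direction advances, so no initialisation phase is needed), whereas the paper aligns with $c_{\min}$ (a segment ends when all markers leave the rearmost column, requiring $k$ initial reads to fill the window); these are dual choices and both yield the claimed $\mathcal{O}(l\cdot m\cdot |\Sigma|^{k\cdot n})$ state bound.
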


\begin{proof}	
	We read input vectors synchronously and maintain a $k \cdot n$ window of the input words in the state space.
	Since in a $k$-synchronous run of an AAPA each direction can be ahead each other direction at most $k$ steps, no direction can leave this window without all rearmost directions performing steps.
	The content of the window can therefore be stored in an APA's state space.
	
	We simulate steps of the AAPA by moving markers forward in each row of the window.
	When the last marker leaves the rearmost column of the window, a new input vector is read and added to the front.
	A step that would leave the window must not be simulated since that would move that direction $k+1$ steps ahead of the rearmost direction. 
	Therefore, we move to $\false$ instead.
	This way, we ensure that each run of the APA simulates a $k$-synchronous run of the AAPA.
	
	Since an input vector is only read when all directions have left the last column, one step of the APA has to simulate multiple successive steps of the AAPA.
	This is done by a nondeterministic choice over all sequences of steps resulting in the rearmost column being erased.
	In order to correctly mirror the priorities, for each simulated sequence of steps, we move to a copy of the reached state annotated with the lowest priority encountered in this sequence.
	
	
	In order to fill the window, $k$ input vectors are read in an \textit{initialisation} phase of the APA.
	Overall, the size of the state spaces increases by a factor of $\mathcal{O}(l\cdot|\Sigma|^{k \cdot n})$.
\end{proof}

Note that \autoref{thm:aapaksynchronoustosynchronous} yields an underapproximation of an AAPA's behaviour.
By transitioning to $\true$ instead of $\false$ at a violation of $k$-synchronicity, we can instead perform an overapproximation that ignores non $k$-synchronous branches when determining whether a run is accepting.
Obviously, these approximations yield exact results for AAPA where all runs are $k$-synchronous.
We call such AAPA $k$-synchronous.
Synchronicity can be enforced by a syntactic restriction, namely that in all cycles in the transition graph, every direction occurs the same number of times.
The parameter $k$ is then induced by the largest difference of the number of occurences of two directions on any path in the transition graph.
We establish a tight complexity bound for this analysis:

\begin{theorem}\label{thm:synchronousemptiness}
The problem to decide whether there is a $k$-synchronous accepting run of an AAPA and thus the emptiness problem for k-synchronous AAPA is $\EXPSPACE{}$-complete.
\end{theorem}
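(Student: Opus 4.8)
The plan is to prove the two bounds separately: membership in $\EXPSPACE{}$ follows almost directly from the synchronisation construction of \autoref{thm:aapaksynchronoustosynchronous}, while hardness requires lifting the Turing machine encoding behind \autoref{thm:napaemptiness} from a polynomial to an exponential space bound. Note that both formulations in the statement coincide: asking for a $k$-synchronous accepting run of an arbitrary AAPA, and asking for emptiness of a $k$-synchronous AAPA (one all of whose runs are $k$-synchronous), are handled by the same reduction, since in the latter case the $k$-synchronous runs are exactly all runs.

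For the upper bound I would feed the given AAPA $\mathcal{A}$ through \autoref{thm:aapaksynchronoustosynchronous} to obtain an APA $\mathcal{A}'$ over $\Sigma^n$ whose language is precisely the set of words accepted by a $k$-synchronous run of $\mathcal{A}$, so that such a run exists iff $\mathcal{L}(\mathcal{A}')\neq\emptyset$. Since $k$ enters the construction only through the window width $k\cdot n$, which is polynomial in the input, $\mathcal{A}'$ has $2^{\mathcal{O}(k\cdot n\cdot\log|\Sigma|)}$ states and $\mathcal{O}(l)$ priorities, i.e.\ a number of states singly exponential in the input. I would then invoke \autoref{thm:parityemptiness}, which decides APA emptiness in space polynomial in the number of states; applied to $\mathcal{A}'$ this is space $2^{\mathcal{O}(\mathrm{poly})}$, i.e.\ $\EXPSPACE{}$. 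The one point needing care is that $\mathcal{A}'$ must not be materialised: its states are triples consisting of a control location of $\mathcal{A}$, a window content, and the marker positions, each describable in polynomial space, and its transitions are computable on the fly from $\mathcal{A}$. The emptiness procedure underlying \autoref{thm:parityemptiness} (dealternation in the style of \autoref{thm:paritydealternation} followed by an on-the-fly search for an accepting lasso) only ever inspects individual states and their transitions, so it stays within $\EXPSPACE{}$ even though $\mathcal{A}'$ itself is of exponential size.

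For the lower bound I would reduce from acceptance of a Turing machine $\mathcal{M}$ whose space is bounded by $2^{p(|x|)}$, the canonical $\EXPSPACE{}$-complete problem, reusing the cell-by-cell local-consistency idea of the hardness part of \autoref{thm:napaemptiness} but scaling the tape from polynomial to exponential size. A configuration of $\mathcal{M}$ is encoded as a block of $2^{p}$ cell records, each carrying a tape symbol together with its address written as a $p$-bit number, and the whole computation is the word $C_0\#C_1\#\dots$ that the constructed (polynomial, $k$-synchronous) AAPA guesses and verifies. Correct formatting (the addresses forming a counter $0,\dots,2^{p}-1$ that is reset exactly at each $\#$), the initial configuration, and reaching an accepting control location are all \emph{local} properties, checkable with a window of width $\Theta(p)$ and hence with polynomial $k$, while the parity condition merely enforces that an accepting configuration is eventually produced. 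The genuinely hard part, and the main obstacle, is verifying the transition relation between two successive, exponentially long configurations: the record for cell $a$ in $C_{t+1}$ must match the records for cells $a-1,a,a+1$ in $C_t$, yet these lie $\Theta(2^{p})$ positions apart in the word, far beyond what the polynomial window and the polynomially bounded offset between directions can bridge by a single direct comparison. My intended remedy is to exploit alternation: the automaton universally branches over the cell to be checked and over the $p$ bits of its address, so that each surviving branch only has to locate and compare the relevant records relative to the periodically repeating address counter and then perform window-local checks, while the collection of all branches certifies consistency of all $2^{p}$ cells without the automaton ever storing a full configuration. Arranging this branching so that it simultaneously stays polynomial, respects $k$-synchronicity, and faithfully identifies the counterpart cell one configuration later is exactly the delicate step that mirrors the $k$-synchronous multitape constructions we build on, and is where essentially all the work resides; once it goes through, the constructed AAPA has a $k$-synchronous accepting run iff $\mathcal{M}$ accepts $x$, giving $\EXPSPACE{}$-hardness and, with the upper bound, completeness.
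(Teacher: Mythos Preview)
Your upper bound is correct and matches the paper: feed the AAPA through \autoref{thm:aapaksynchronoustosynchronous} to get an APA of size singly exponential in the input, then invoke \autoref{thm:parityemptiness}. The on-the-fly remark is appropriate.

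Your lower bound has a genuine gap at precisely the point you flag as ``the delicate step''. Address annotations plus universal branching over cells and address bits do not, by themselves, let any branch bridge the $2^{p}$-position distance between a cell of $C_{t+1}$ and its counterpart in $C_t$. Each branch of an alternating automaton still reads its input words sequentially; alternation multiplies the number of branches but gives none of them the ability to jump, and none of them can hold a $p$-bit address in a polynomial state space to match it later. Your annotated address counter lets you check \emph{locally} that consecutive addresses increment, but it does not let a branch that has just read the record for cell $a$ in $C_{t+1}$ find the record for cell $a$ in $C_t$ without either counting to $2^{p}$ in the state or desynchronising two directions by $2^{p}$ positions. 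With one direction the AAPA is an ordinary APA (PSPACE); with two directions on the same word, aligning one on $C_t[a]$ and one on $C_{t+1}[a]$ already breaks polynomial $k$-synchronicity.

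The paper supplies the missing idea: take $n=p(|w|)+1$ directions and $k=2$. Direction $1$ carries the configuration sequence; each of the other $p$ directions carries the fixed word $(01)^{\omega}$. The current symbols on directions $2,\dots,p+1$ form a $p$-bit \emph{virtual counter} whose value lives in the \emph{offsets}, not the state. To check cell $a$ against its successor, the automaton conjunctively visits every position of direction $1$, memorises the symbol, initialises the counter to $0$, and then repeatedly advances direction $1$ by one while incrementing the counter: bits that must flip advance by one, bits that stay the same nondeterministically advance by $0$ or $2$, and one of these choices keeps all offsets within $2$ of each other. After $2^{p}$ increments, direction $1$ sits on the matching cell of the next configuration and a state-local comparison finishes the check. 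This gadget is what your branching-over-address-bits sketch would have to become; without it the reduction does not go through.
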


\begin{proof}
Using \autoref{thm:aapaksynchronoustosynchronous}, we can construct an APA whose size is exponential in the size of the input.
Since APA can be tested for emptiness in $\PSPACE$, this establishes membership in $\EXPSPACE{}$.
For hardness, we reduce from the acceptance problem for deterministic exponentially space bounded Turing machines (DTMs).
Let $\mathcal{M}$ be a DTM with control locations $Q$ and space complexity $f(x) = 2^{p(x)}$ for some polynomial $p$.
For a given input $w$ of $\mathcal{M}$, we construct a $k$-synchronous AAPA $\mathcal{A}$ with $n := p(|w|) + 1$ and $k := 2$ as follows:

In the first direction, we successively read configurations of $\mathcal{M}$ separated by a marker.
We encode such configurations by words over $\{0,1\} \cup \{0,1\} \times Q$ where the occurrence of a tuple $(b,q)$
indicates that the head points to this position, that this position of the tape contains the bit $b$, and that the current state is $q$.
Presence of initial and final configuration can easily be checked via alternation.

Additionally, we need to check that successive configurations are constructed in accordance with the transition function of $\mathcal{M}$.
For this purpose, we have to count to exponentially large values in order to compare positions with the same (or neighbouring) index in successive configurations with each other.
Since we cannot store exponentially large counter values in the state, we instead construct a \textit{virtual counter gadget}: for every additional direction, we can enforce that it consist of the word $(01)^{\omega}$.
As we illustrate in \autoref{fig:countergadget}, the concatenation of the current values of these directions is interpreted as a counter.
Then, we conjunctively move to each position in a synchronous manner, save the value of the first direction in a state of the AAPA and initialise the counter: if the directions have value $1$, we advance them by one symbol and otherwise, we keep the position.
We then advance the first direction and increase our virtual counter by changing the appropriate bits via single advancements and maintaining the other bits by nondeterministically either maintaining the current symbol or advancing the direction by two positions.
Then, one of the nondeterministic choices preserves $2$-synchronicity.
When the counter has reached $2^{p(|w|)}$, we have found the matching tape cell in the next configuration and can check whether it is admissible by comparing its value with the value saved in the control state of the AAPA.

It is easy to see that $\mathcal{A}$ has a $2$-synchronous accepting run iff $\mathcal{M}$ accepts $w$.
\end{proof}
\begin{figure}
	\scalebox{.7}{
	\begin{tikzpicture}
		\node[] at (-6,1) (tm){DTM};
		\node[] at (-4,1) (tm1){1};
		\node[] at (-3,1) (tm2){\#};
		\node[] at (-2,1) (tm3){0};
		\node[] at (-1,1) (tm4){0};
		\node[] at (0,1) (tm5){1};
		\node[] at (1,1) (tm6){(1,q)};
		\node[] at (2,1) (tm7){0};
		\node[] at (3,1) (tm8){1};
		\node[] at (4,1) (tm9){1};
		\node[] at (5,1) (tm10){...};
		\node[] at (-6,0.5) (1bit){Bit 1};
		\node[] at (-4,0.5) (1bit1){0};
		\node[] at (-3,0.5) (1bit2){1};
		\node[] at (-2,0.5) (1bit3){0};
		\node[] at (-1,0.5) (1bit4){1};
		\node[] at (0,0.5) (1bit5){0};
		\node[] at (1,0.5) (1bit6){1};
		\node[] at (2,0.5) (1bit7){0};
		\node[] at (3,0.5) (1bit8){1};
		\node[] at (4,0.5) (1bit9){0};
		\node[] at (5,0.5) (1bit10){...};
		\node[] at (-6,0) (2bit){Bit 2};
		\node[] at (-4,0) (2bit1){0};
		\node[] at (-3,0) (2bit2){1};
		\node[] at (-2,0) (2bit3){0};
		\node[] at (-1,0) (2bit4){1};
		\node[] at (0,0) (2bit5){0};
		\node[] at (1,0) (2bit6){1};
		\node[] at (2,0) (2bit7){0};
		\node[] at (3,0) (2bit8){1};
		\node[] at (4,0) (2bit9){0};
		\node[] at (5,0) (2bit10){...};
		\node[] at (-6,-0.5) (3bit){Bit 3};
		\node[] at (-4,-0.5) (3bit1){0};
		\node[] at (-3,-0.5) (3bit2){1};
		\node[] at (-2,-0.5) (3bit3){0};
		\node[] at (-1,-0.5) (3bit4){1};
		\node[] at (0,-0.5) (3bit5){0};
		\node[] at (1,-0.5) (3bit6){1};
		\node[] at (2,-0.5) (3bit7){0};
		\node[] at (3,-0.5) (3bit8){1};
		\node[] at (4,-0.5) (3bit9){0};
		\node[] at (5,-0.5) (3bit10){...};
		\node[draw,red,minimum size= 12pt] at(0,1) (mark1){};
		\node[draw,red,minimum size= 12pt] at(1,0.5) (mark2){};
		\node[draw,red,minimum size= 12pt] at(0,0) (mark3){};
		\node[draw,red,minimum size= 12pt] at(-1,-0.5) (mark4){};
	\end{tikzpicture}
	}
	\caption{A $2$-synchronous virtual counter gadget with $3$ bits and current value $101$. The Turing machine is in state $q$ and its head is on the fourth bit of the current configuration.}
	\label{fig:countergadget}
\end{figure}

Note that in the proof of the lower bound detailed above, $k$ can be chosen as a fixed value.
This raises the question whether there is a construction for the emptiness test that is exponential only in $n$, but not in $k$.
Indeed, we can construct an APA that has a non-empty language if and only if the given AAPA has an accepting $k$-synchronous run. 
However, unlike the APA in the proof of \autoref{thm:aapaksynchronoustosynchronous}, the constructed APA only accepts a certain encoding of the word accepted by the $k$-synchrounous run instead of the word itself. 
More specifically, it expects the input word to consist of concatenations of input windows (with size $k\cdot n$) of the original input words, one window for each simulated step of the AAPA. 
For each direction it maintains a counter indicating its current position in the input window. 
Using alternation and additional counters, the APA can ensure that the succession of input windows is consistent.
A single step of the AAPA can then be simulated by reading the current copy of the input window in order to check that the direction inducing the step has the correct symbol.
An upper bound on the number of states of this APA is dominated by the $n$ counters up to $k$ indicating on what position in the window each direction is.
This results in a factor of $k^n$ on the number of states but stays polynomial for fixed $n$. As single steps of the AAPA are simulated separately, this construction also avoids the additional factor $l$.

Together with the fact that already the emptiness problem for APA is PSPACE-hard, these considerations lead to the following corollary.

\begin{corollary}
For fixed $n$, the emptiness problem for $k$-synchronous AAPA is $\PSPACE$-complete.
\end{corollary}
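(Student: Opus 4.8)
The plan is to read off $\PSPACE$-completeness from two ingredients that are already available: the alternative, encoding-based APA construction sketched immediately before this corollary, which keeps the state space polynomial whenever $n$ is fixed, and the $\PSPACE$-completeness of APA emptiness from \autoref{thm:parityemptiness}.

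For membership, I would take the APA $\mathcal{A}'$ described in the preceding paragraph, which is non-empty exactly when the given $k$-synchronous AAPA $\mathcal{A}$ admits an accepting $k$-synchronous run. The crucial quantitative point, in contrast to the construction of \autoref{thm:aapaksynchronoustosynchronous}, is that $\mathcal{A}'$ does not store a full $k \cdot n$ window in its state space; instead it keeps, for each of the $n$ directions, a position counter ranging over $\{0,\dots,k\}$, contributing only a factor $k^n$ to the number of states. For fixed $n$ this factor is polynomial in $k$, so $\mathcal{A}'$ has size polynomial in $|\mathcal{A}|$ and can be built in polynomial time. Since APA emptiness lies in $\PSPACE$ by \autoref{thm:parityemptiness}, testing $\mathcal{A}'$---and therefore deciding emptiness of $\mathcal{A}$---is in $\PSPACE$.

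For hardness, I would use that every APA is an AAPA with $n = 1$, and that when $n = 1$ there is a single offset counter, so the condition $|c_i^t - c_j^t| \le k$ reduces to $0 \le k$ and holds trivially for every run. Hence every APA is $k$-synchronous and APA emptiness coincides with the emptiness problem for $k$-synchronous AAPA at $n = 1$. As APA emptiness is $\PSPACE$-hard by \autoref{thm:parityemptiness}, this already yields $\PSPACE$-hardness; for a fixed $n > 1$ one simply pads with $n-1$ dummy directions advanced round-robin together with direction $1$, which keeps all counters within $1$ of each other and preserves (non-)emptiness. Combining the two bounds gives $\PSPACE$-completeness.

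I do not expect a genuine obstacle here, since the heavy lifting---designing the encoding-based APA and arguing that consistency of successive input windows can be enforced with alternation and the $n$ bounded counters alone---was already carried out in the paragraph preceding the corollary. The only point deserving care is confirming that this counter-based bookkeeping reintroduces no hidden dependence on $|\Sigma|^{k}$, so that the $k^n$ state bound, and thus polynomiality for fixed $n$, really holds.
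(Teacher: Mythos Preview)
Your proposal is correct and follows essentially the same approach as the paper: membership via the encoding-based APA with the $k^n$ counter bound combined with \autoref{thm:parityemptiness}, and hardness by observing that APA are $k$-synchronous AAPA with $n=1$ so that $\PSPACE$-hardness of APA emptiness transfers. The padding argument for fixed $n>1$ is a small elaboration that the paper leaves implicit.
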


\subsection{$k$-context-bounded analysis of AAPA}

Since different words can only diverge from each other up to $k$ steps, the ability of AAPA to asynchronously traverse words is severely restricted in $k$-synchronous runs.
We thus consider a further class of runs where the positions on different words can diverge unboundedly.
For this restriction, we introduce the notion of a context:

\begin{definition}\label{definition:context}
	A context is a (possibly infinite) subpath $p = t_1 t_2 ...$ in a run of an AAPA over $w_1,...,w_n$ such that transitions between successive states all use the same direction, that is there is a $d \in M$ such that for all $i \in \{1,...,|p| -1\}$ we have $c_{d}^{t_{i+1}} = c_d^{t_i} + 1$.
	We call a run $T$ of an AAPA $k$-context-bounded if every path in $T$ consists of at most $k$ contexts.
\end{definition}


We propose an approximate analysis which checks only for the existence of a $k$-context-bounded accepting run of an AAPA.
\switchExtended{The}{In the appendix, we show that the} restriction that AAPA can only consider a single direction during each context is well-chosen in the sense that additionally allowing contexts in which a selection of directions is traversed synchronously leads to undecidability.

\begin{figure}
	\centering
	\scalebox{.7}{
		\begin{tikzpicture}
		\node[draw,circle,fill=green!20] at (0,0) (ep){$q_1$};
		\node[draw,circle,fill=red!20] at (-3,-1) (0){$q_2$};
		\node[draw,circle,fill=red!20] at (3,-1) (1){$q_3$};
		\node[draw,circle,fill=green!20] at (-5,-2) (00){$q_1$};
		\node[draw,circle,fill=red!20] at (-3,-2) (01){$q_2$};
		\node[draw,circle,fill=red!20] at (-1,-2) (02){$q_5$};
		\node[draw,circle,fill=green!20] at (1,-2) (10){$q_4$};
		\node[draw,circle,fill=red!20] at (5,-2) (11){$q_5$};
		\node[draw,circle,fill=green!20] at (-5,-3) (000){$q_4$};
		\node[draw,circle,fill=red!20] at (-3,-3) (010){$q_3$};
		\node[draw,circle,fill=green!20] at (-1,-3) (020){$q_6$};
		\node[draw,circle,fill=green!20] at (1,-3) (100){$q_3$};
		\node[draw,circle,fill=green!20] at (4,-3) (110){$q_4$};
		\node[draw,circle,fill=green!20] at (6,-3) (111){$q_6$};
		
		\node[] at (-5,-4) (dots){...};
		\node[] at (-3,-4) (dots){...};
		\node[] at (-1,-4) (dots){...};
		\node[] at (1,-4) (dots){...};
		\node[] at (4,-4) (dots){...};
		\node[] at (6,-4) (dots){...};
		
		\path[->] (ep) edge (0);
		\path[->] (ep) edge (1);
		\path[->] (0) edge (00);
		\path[->] (0) edge (01);
		\path[->] (0) edge (02);
		\path[->] (1) edge (10);
		\path[->] (1) edge (11);
		\path[->] (00) edge (000);
		\path[->] (01) edge (010);
		\path[->] (02) edge (020);
		\path[->] (10) edge (100);
		\path[->] (11) edge (110);
		\path[->] (11) edge (111);
		\end{tikzpicture}
	}
	\caption{A $3$-context-bounded run of an AAPA where green nodes move in direction $1$ and red nodes move in direction $2$. The corresponding guess is $(q_1,\{(q_2,\{q_1,q_6\}),(q_3,\{q_4,q_6\})\})$.}
	\label{illustration:guesses1}
\end{figure}
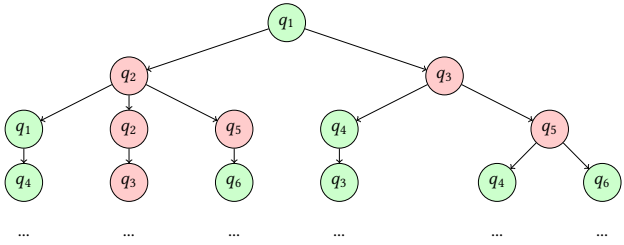

Our analysis is based on the observation that a run of an AAPA can be divided into maximal sections only reading a single direction and that in the case of a $k$-context-bounded run, the interaction between neighbouring sections can be described in a finite way.
Given such a description, each direction can be analysed independently from the others.
We illustrate this with an example with just two directions and two context switches.
Consider the run in \autoref{illustration:guesses1} where green states move in direction 1 and red states move in direction 2 and inspect the left part of the tree.
Instead of moving to $q_2$, the automaton responsible for direction 1 can directly move to $q_1$ and $q_6$ from $q_1$ if the automaton responsible for direction 2 ensures existence of an accepting run from $q_2$ that reaches only these two direction 1 states.
Consequently, the automaton responsible for direction 2 can safely cut off the branches between states $q_2$ and $q_1$ as well as between $q_5$ and $q_6$ since the automaton responsible for direction 1 ensures an accepting continuation from $q_1$ and $q_6$.
We furnish states with descriptions of this interplay between assumptions and guarantees and call such descriptions \textit{guesses}.
These guesses allow us to split the run from \autoref{illustration:guesses1} into two runs on single directions as shown in \autoref{illustration:guesses2}.
Guesses are nested and the nesting depth corresponds to the number of context switches that can still be performed in a $k$-context-bounded run.
They are constructed inductively:
In the most simple case, where no context switch can be made, the guess is empty.
If context switches are still possible, the guess is a set of states enriched with guesses with one context switch less.

\begin{figure}
	\centering
	\scalebox{.7}{
		\begin{tikzpicture}
			\node[draw,circle,fill=green!20] at (-3,0) (ep1){$\varepsilon$};
			\node[draw,circle,fill=green!20] at (-3,-1) (ep10){$q_1$};
			\node[draw,circle,fill=green!20] at (-5,-2) (ep100){$q_1$};
			\node[draw,circle,fill=green!20] at (-3,-2) (ep102){$q_6$};
			\node[draw,circle,fill=green!20] at (-1,-2) (ep103){$q_4$};
			\node[draw,circle,fill=green!20] at (-5,-3) (ep1000){$q_4$};
			\node[draw,circle,fill=green!20] at (-1,-3) (ep1030){$q_3$};
			
			\node[] at (-2.5,-0.7) (){$G_1$};
			
			\node[draw,circle,fill=red!20] at (5,0) (ep2){$\varepsilon$};
			\node[draw,circle,fill=red!20] at (3,-1) (ep20){$q_2$};
			\node[draw,circle,fill=red!20] at (7,-1) (ep21){$q_3$};
			\node[draw,circle,fill=red!20] at (2,-2) (ep200){$q_2$};
			\node[draw,circle,fill=red!20] at (4,-2) (ep201){$q_5$};
			\node[draw,circle,fill=red!20] at (6,-2) (ep210){$t$};
			\node[draw,circle,fill=red!20] at (8,-2) (ep211){$q_5$};
			\node[draw,circle,fill=red!20] at (2,-3) (ep2000){$q_3$};
			\node[draw,circle,fill=red!20] at (4,-3) (ep2010){$t$};
			\node[draw,circle,fill=red!20] at (8,-3) (ep2110){$t$};
			
			\node[] at (2.5,-0.7) (){$G_2$};
			\node[] at (7.5,-0.7) (){$G_3$};
			\node[] at (1.5,-1.7) (){$G_2$};
			\node[] at (4.5,-1.7) (){$G_2$};
			\node[] at (8.5,-1.7) (){$G_3$};
			\node[] at (2.5,-2.7) (){$G_2$};
			
			\node[] at (-5,-4) (dots){...};
			\node[] at (-3,-3) (dots){...};
			\node[] at (-1,-4) (dots){...};
			\node[] at (2,-4) (dots){...};
			\node[] at (4,-4) (dots){...};
			\node[] at (6,-3) (dots){...};
			\node[] at (8,-4) (dots){...};
			
			\path[->] (ep1) edge (ep10);
			\path[->] (ep10) edge (ep100);
			\path[->] (ep10) edge (ep102);
			\path[->] (ep10) edge (ep103);
			\path[->] (ep100) edge (ep1000);
			\path[->] (ep103) edge (ep1030);
			
			\path[->] (ep2) edge (ep20);
			\path[->] (ep2) edge (ep21);
			\path[->] (ep20) edge (ep200);
			\path[->] (ep20) edge (ep201);
			\path[->] (ep21) edge (ep210);
			\path[->] (ep21) edge (ep211);
			\path[->] (ep200) edge (ep2000);
			\path[->] (ep201) edge (ep2010);
			\path[->] (ep211) edge (ep2110);
		\end{tikzpicture}
	}
	\caption{Translation of the run into multiple synchronous runs with $G_1 = \{(q_2,G_2),(q_3,G_3)\}$, $G_2 = \{q_1,q_6\}$ and $G_3 = \{q_4,q_6\}$. As can be seen in nodes $01$ and $02$ in the green tree, one subtree starting in each node suffices to be adapted since they all start with the same offset on their word.}
	\label{illustration:guesses2}
\end{figure}
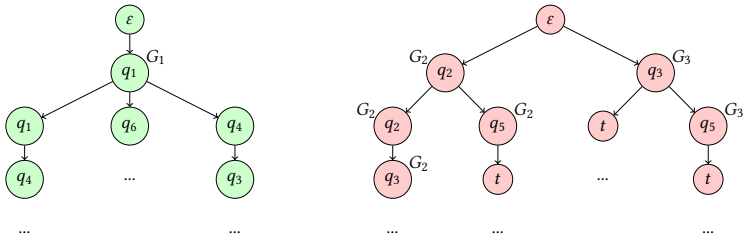

Formally, when analysing an AAPA $\mathcal{A} = (Q,\rho_0,\rho,\Omega)$ over $n$ words from $\Sigma^{\omega}$, we inductively define the set of guesses $G_i$ for $i$ context switches as follows: $G_0 := \{\bot\}$, $G_{i+1} := 2^{Q \times G_i}$ for $i \geq 0$,and $G := \bigcup_{i = 0}^{k-1} G_i$.
For simplicity, we assume that each state $q \in Q$ moves in a unique direction $d$ (i.e. the transition function yields $\false$ for other directions); the set of states that move in direction $d$ is called $Q_d$.
This increases the size of the state space by a factor of at most $n$.
As described above, we skip sections belonging to direction $d' \neq d$ when considering a direction $d$.
In order to extract the potential reentry points for directions $d$ from a guess, we use a frontier function $F_d: G \to G$ defined as follows:
\begin{equation*}
	F_d(g) = \{(q,g') \in g \mid q \in Q_d \} \cup \bigcup \{F_d(g') | (q,g') \in g, q \notin Q_d \} \,.
\end{equation*}
For the analysis, we define from $\mathcal{A}$ an APA-like structure $\mathcal{S} = (S,\rho_S,\Omega_S)$ over the alphabet $\Sigma$ with
\begin{align*}
	S &=\ Q \times G\,, \\
	\rho_S((q,g),\sigma) &= \bigvee\{\bigwedge Q'_d \times \{g\} \land 
	\bigwedge \{F_d(g'') \mid (q,g'') \in g' \} \mid \\
	&\qquad\qquad Q'_d \subseteq Q_d, g' \subseteq g, 
	Q'_d \cup \{q \mid \exists g'': (q,g'') \in g'\} \models \rho(q,\sigma,d) \} 
	\text{ for } q \in Q_d \,, \text{ and } \\
	\Omega_S((q,g)) &=\ \Omega(q) \,.
\end{align*}

Thus, the successors of a state $(q,g)$ in $\mathcal{S}$ are composed of two sets:
a set $Q'_d$ of states that belong to the same section and the set of reentry points extracted from the second component of the guesses in $g'$ annotating states for other directions in $g$. These sets are chosen such that $Q'_d$ and the states in the first component of the guesses in $g'$ satisfy the corresponding transition function of $\mathcal{A}$.

For this structure, we have:

\begin{lemma}\label{lem:contextboundedtest}
	There is a $k$-context-bounded accepting run in $\mathcal{A}$ iff there is a $g \in G$ such that $\mathcal{S}$ accepts from $\bigwedge \{ g' \in F_d(\{(q_0,g)\}) \mid q_0 \in Q_0 \}$ for all $d \in \{1,...,n\}$ and some $Q_0$ with $Q_0 \models \rho_0$.
\end{lemma}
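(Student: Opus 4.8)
The plan is to prove both directions of the biconditional by showing a tight correspondence between $k$-context-bounded runs of $\mathcal{A}$ and accepting runs of the structure $\mathcal{S}$, where the guesses in $G$ encode precisely the reentry information needed to split a run along its context boundaries. The key conceptual claim, which I would isolate as an auxiliary statement proved by induction on the number $i$ of remaining context switches, is this: for a state $q \in Q_d$ and a guess $g \in G_i$, the structure $\mathcal{S}$ has an accepting run from $(q,g)$ reading some suffix of $w_d$ (from the current offset) \emph{if and only if} there is an $i$-context-bounded accepting run-fragment of $\mathcal{A}$ starting at $q$ whose first context uses direction $d$, such that whenever this fragment switches away from direction $d$ at some node, the pair (reached state, guess describing its own future reentry points) lies in $g$. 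Intuitively, $g$ is the assume-guarantee contract for the \emph{other} directions, and $F_d(g)$ extracts exactly those reentry points that direction $d$ must eventually resume, as illustrated in \autoref{illustration:guesses1} and \autoref{illustration:guesses2}.

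\textbf{Soundness (right to left).} Given a $k$-context-bounded accepting run $T$ of $\mathcal{A}$, I would peel off contexts from the top. The root context uses a single direction $d_0$; I cut $T$ at every node where the direction changes, recording at each such node the state reached together with (recursively) the guess built from the cut points inside that subtree. Because $T$ is $k$-context-bounded, this recursion terminates after at most $k$ levels, yielding a well-defined guess $g \in G$ at the root and, simultaneously, an accepting run of $\mathcal{S}$ for each direction: within a fixed context the transitions of $\mathcal{S}$ mimic $\rho(q,\sigma,d)$ directly via the $Q'_d$ component, while the $F_d$ terms reinstate the reentry points that this direction owns. The priorities are inherited verbatim since $\Omega_S((q,g)) = \Omega(q)$, so every infinite path of the $\mathcal{S}$-run corresponds to an infinite path of $T$ (possibly after splicing across contexts) with the same infinitary priority profile, hence is accepting.

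\textbf{Completeness (left to right).} Conversely, given accepting runs of $\mathcal{S}$ from $\bigwedge\{g' \in F_d(\{(q_0,g)\}) \mid q_0 \in Q_0\}$ for every direction $d$, I would \emph{reassemble} a run of $\mathcal{A}$ by re-inserting, at each reentry point recorded in a guess $g''$, the $\mathcal{S}$-run of the direction owning that point. The frontier function guarantees that the points exposed by $F_d$ are exactly those that the direction-$d$ run is obligated to continue, so the assume-guarantee obligations match up and the reassembly is consistent. Each path of the resulting $\mathcal{A}$-run is a concatenation of at most $k$ context segments taken from the various $\mathcal{S}$-runs, so it is $k$-context-bounded; and since every infinite suffix eventually stays within one context (or the priorities along the finitely many switches do not affect the infinitary condition), the lowest priority seen infinitely often is even, so the run is accepting.

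\textbf{The main obstacle} I anticipate is the bookkeeping of the infinitary parity condition across context boundaries, specifically ensuring that an infinite path of the reassembled run is accepting iff the corresponding $\mathcal{S}$-paths are. A path of $T$ may switch directions infinitely often, so I must argue that the priorities encountered at the (finitely many, in the $k$-bounded case) switch points, together with the tails living inside individual contexts, yield the correct minimal-infinitely-often priority. The clean way to handle this is to note that with at most $k$ contexts per path, any infinite path ultimately remains inside a single final context, so its infinitary priority profile is determined entirely by one $\mathcal{S}$-run, making the preservation of $\Omega$ immediate; the finitely many priorities at earlier switches are irrelevant to the parity condition. Establishing this "eventually one context" observation carefully, and matching the nesting depth of guesses to the context budget, is where the real work lies, whereas the per-context transition correspondence is a routine unfolding of the definition of $\rho_S$.
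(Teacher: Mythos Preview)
Your proposal is correct and follows essentially the same approach as the paper's proof: construct the guess $g$ by recursively recording the states at context-switch points, split the $\mathcal{A}$-run into per-direction $\mathcal{S}$-runs (respectively, reassemble), and use the observation that any infinite path has a final infinite context to handle the parity condition. One minor note: your direction labels are swapped (what you call ``Soundness (right to left)'' is the left-to-right implication and vice versa), and the paper additionally makes explicit that finite direction-$d$ segments terminate via empty $F_d(g'')$, which yields an empty conjunction (i.e., a move to $\true$)---a bookkeeping point you leave implicit.
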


With a little effort, this structure can be translated to an NPA over $\Sigma^n$ recognising the words accepted by $k$-context-bounded runs of the original AAPA $\mathcal{A}$. 
For this, transitions in each state $q\in Q_d$ of $\mathcal{S}$ are adjusted to the input alphabet $\Sigma^n$ by allowing arbitrary symbols in other directions than $d$.
The key insight is that the choice of the guess $g$ from \autoref{lem:contextboundedtest} can be integrated into an NPA after an alternation removal as a nondeterministic choice.
We parameterise the structure $\mathcal{S}$ in the initial guess $g$ (for which we assume w.l.o.g. $g \in G_{k-1}$) and obtain a structure $\mathcal{S}(g)$ of size $\mathcal{O}(g(k-2,|\mathcal{A}|))$, where the fixed $g \in G$ indicates the initial guess.
The initial state of $\mathcal{S}(g)$ can nondeterministically guess a state set $Q_0$ with $Q_0 \models \rho_0$ and then move to $\bigwedge_{d \in\{1,...,n\}}\bigwedge_{q_0 \in Q_0}\bigwedge F_d(\{(q_0,g)\})$ to obtain the desired test.
We eliminate alternation from $\mathcal{S}(g)$ to obtain an NPA $\mathcal{S}'(g)$ of size $\mathcal{O}(g(k-1,|\mathcal{A}|))$ for each guess $g \in G_k$.
Thus, an NPA which nondeterministically guesses a $g \in G$ and then moves to $\mathcal{S}'(g)$ performs the desired test from \autoref{lem:contextboundedtest}.
Since there are $|G| = \mathcal{O}(g(k-1,|\mathcal{A}|))$ possible guesses, this NPA's size asymptotically is $\mathcal{O}(g(k-1,|\mathcal{A}|))$ as well.
We obtain:

\begin{corollary}\label{cor:contextboundedtoapa}
For every AAPA $\mathcal{A}$, there is an NPA with $\mathcal{O}(g(k-1,|\mathcal{A}|))$ states recognizing all words accepted by a $k$-context-bounded run of $\mathcal{A}$.
\end{corollary}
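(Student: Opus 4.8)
The statement is essentially the pay-off of the construction developed just above, so the plan is to make the three remaining steps precise and to do the size bookkeeping. The starting point is \autoref{lem:contextboundedtest}, which reduces the existence of a $k$-context-bounded accepting run of $\mathcal{A}$ on a fixed input tuple $(w_1,\dots,w_n)$ to the existence of a guess $g\in G$ and a set $Q_0\models\rho_0$ for which the structure $\mathcal{S}$ accepts, from the appropriate direction-$d$ frontier, the word $w_d$ for every $d$. First I would reinterpret $\mathcal{S}$, which reads single letters from $\Sigma$, as an automaton over $\Sigma^n$: every successor of a state $(q,g)$ with $q\in Q_d$ again has its first component in $Q_d$ (both the continuation states $Q'_d$ and the reentry points produced by $F_d$ are direction-$d$ states), so any branch that starts in direction $d$ stays in direction $d$ and consumes exactly one letter of $w_d$ per step. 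I therefore let such a state read the $d$-th component of an input tuple and accept arbitrary letters in the remaining components.

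The soundness of this reinterpretation is the point that needs care. Because every direction-$d$ branch advances one position on $w_d$ per global step and all such branches originate at the common initial state, at global step $t$ every active direction-$d$ branch sits at position $w_d^t$; this is precisely the \emph{same offset} observation illustrated in \autoref{illustration:guesses2}, and it is what lets a single synchronous left-to-right pass over $\Sigma^n$ feed each word correctly. A direction that is only finitely active hands off to $\true$ once its guarantees are discharged and accepts the remaining tail of its word, so all $n$ words are genuinely read in full. Hence the adjusted $\mathcal{S}$, run in parallel over all directions (a universal branching) and prefixed with an initial state that existentially guesses $Q_0\models\rho_0$ and moves to $\bigwedge_{d\in\{1,\dots,n\}}\bigwedge_{q_0\in Q_0}\bigwedge F_d(\{(q_0,g)\})$, accepts $(w_1,\dots,w_n)$ iff the condition of \autoref{lem:contextboundedtest} holds for this $g$.

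Next I would fix the top-level guess $g\in G_{k-1}$ and call the resulting APA $\mathcal{S}(g)$. Since the frontier function strictly lowers nesting depth along every transition, the guesses reachable from a fixed $g$ range only over $G_0,\dots,G_{k-2}$, so the state space $Q\times G$ collapses to size $\mathcal{O}(|Q|\cdot|G_{k-2}|)=\mathcal{O}(g(k-2,|\mathcal{A}|))$. Applying the dealternation of \autoref{thm:paritydealternation} to $\mathcal{S}(g)$ yields an NPA $\mathcal{S}'(g)$ whose state count is singly exponential in $\mathcal{O}(g(k-2,|\mathcal{A}|))$ (the extra factors for the number of priorities and for the $\log$ term are absorbed by the generous $\mathcal{O}(g(\cdot,\cdot))$ notation), i.e.\ $2^{\mathcal{O}(g(k-2,|\mathcal{A}|))}=\mathcal{O}(g(k-1,|\mathcal{A}|))$ by the definition of $g$. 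Finally I would assemble the target automaton: a fresh initial state nondeterministically guesses $g\in G$ and jumps into $\mathcal{S}'(g)$. There are $|G|=\mathcal{O}(g(k-1,|\mathcal{A}|))$ choices, each contributing $\mathcal{O}(g(k-1,|\mathcal{A}|))$ states, and since this class is closed under products (squaring only enlarges the exponential base), the overall size is $\mathcal{O}(g(k-1,|\mathcal{A}|))$, as claimed.

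I expect the reinterpretation over $\Sigma^n$ to be the main obstacle: one must check that gluing the several direction-$d$ contexts of a context-bounded run into a single synchronous pass on $w_d$ neither drops information nor creates spurious runs, which rests entirely on the offset-synchronisation invariant above together with the $\true$-handoff for finitely active directions. Granting \autoref{lem:contextboundedtest} and this invariant, the remaining work is the closure bookkeeping for the $\mathcal{O}(g(d,n))$ classes under exponentiation and products, which is routine.
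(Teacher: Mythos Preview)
Your proposal is correct and follows essentially the same route as the paper: parameterise $\mathcal{S}$ by a fixed top-level guess $g$, lift it to an APA over $\Sigma^n$ by letting direction-$d$ states ignore the other components, dealternate to get $\mathcal{S}'(g)$, and finally prefix a nondeterministic choice of $g\in G$. Your offset-synchronisation argument (all direction-$d$ branches are at position $t$ on $w_d$ after $t$ global steps) is exactly the justification the paper leaves implicit when it says transitions are ``adjusted to the input alphabet $\Sigma^n$ by allowing arbitrary symbols in other directions.''

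One minor imprecision: the claim that ``the frontier function strictly lowers nesting depth along every transition'' is not quite right, since the $Q'_d\times\{g\}$ part of $\rho_S$ keeps the current guess $g$ unchanged. So the fixed top-level $g\in G_{k-1}$ itself stays reachable throughout; only the guesses produced by $F_d$ live in $G_0,\dots,G_{k-2}$. This does not affect your size bound, because adding a single extra element to the set of reachable guesses is absorbed by $\mathcal{O}(|G_{k-2}|)$, and hence $|\mathcal{S}(g)|=\mathcal{O}(g(k-2,|\mathcal{A}|))$ remains correct.
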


Note that similar to the construction for \autoref{thm:aapaksynchronoustosynchronous}, a violation of $k$-context-boundedness leads to a transition to $\false$ in our construction (in this case by an empty disjunction in the transition function since there is no $g' \subseteq g$ fulfilling the conditions for $g  = \bot$).
This again yields an underapproximation of the AAPA's behaviour and can be transformed into an overapproximation by instead transitioning to $\true$.

Again, the developed analysis is precise for AAPA having only $k$-context-bounded runs, which we call $k$-context-bounded AAPA.
A syntactic restriction ensuring that an AAPA is context-bounded is that each strongly connected component (SCC) in its transition graph uses steps of a unique direction only.
The parameter $k$ is then induced by the maximum number of switches between directions in the DAG of SCCs.

We show that the size of this construction cannot asymptotically be improved.

\begin{theorem}\label{thm:contextboundedcomplexity}
	The problem to decide if there is a $k$-context-bounded accepting run of an AAPA $\mathcal{A}$ and thus the emptiness problem for $k$-context-bounded AAPA $\mathcal{A}$ is complete for $\EXPSPACE{(k-2)}$.
\end{theorem}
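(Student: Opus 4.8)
The plan is to prove matching upper and lower bounds. For membership in $\EXPSPACE{(k-2)}$, I would start from \autoref{cor:contextboundedtoapa}, which turns $\mathcal{A}$ into an NPA $\mathcal{B}$ with $\mathcal{O}(g(k-1,|\mathcal{A}|))$ states recognising exactly the words accepted by a $k$-context-bounded run; emptiness of the $k$-context-bounded runs of $\mathcal{A}$ is then just emptiness of $\mathcal{B}$. By \autoref{thm:parityemptiness}, NPA emptiness is decidable in $\NLOGSPACE$ \emph{in the number of states}, i.e.\ in space $\mathcal{O}(\log(g(k-1,|\mathcal{A}|))) = \mathcal{O}(g(k-2,|\mathcal{A}|))$. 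The subtlety is that $\mathcal{B}$ itself is $(k-1)$-fold exponential and hence far too large to write down explicitly; the crucial observation is that the emptiness test only searches for a reachable accepting lasso and therefore needs to store a constant number of states of $\mathcal{B}$ together with a transition check, never $\mathcal{B}$ in full. Since a single state of $\mathcal{B}$ --- an initial guess from $G$ together with a macrostate produced by the alternation removal of \autoref{cor:contextboundedtoapa} --- is representable in $\mathcal{O}(g(k-2,|\mathcal{A}|))$ space and its successors are computable within the same bound, the search runs in $\NSPACE(g(k-2,|\mathcal{A}|))$. Savitch's theorem, absorbed into the union over constants and polynomials defining $\EXPSPACE{(k-2)}$, then yields membership in $\EXPSPACE{(k-2)}$.

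For the lower bound I would reduce, as in \autoref{thm:synchronousemptiness}, from the acceptance problem of a deterministic Turing machine $\mathcal{M}$, but now with space bound $f(x) = g(k-2,|x|)$, a $(k-2)$-fold exponential. On direction $1$ the AAPA reads a putative accepting computation of $\mathcal{M}$ on $w$ as a sequence of configurations separated by a marker, each configuration being a word of length $f(|w|)$ over the alphabet of \autoref{thm:synchronousemptiness}. Correct initial and final configurations are enforced by alternation, and the only nontrivial obligation is the local consistency of successive configurations, which requires locating, for a cell at position $j$ of one configuration, the cells at positions $j-1,j,j+1$ of the next. Since positions range up to the $(k-2)$-fold exponential $f(|w|)$, this matching needs a counter able to count that far.

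The heart of the construction is a \emph{hierarchical} counter refining the $2$-synchronous virtual counter gadget of \autoref{fig:countergadget}: a level-$0$ counter is stored directly in the control state and ranges over polynomially many values, while a level-$i$ counter ranges over $g(i,|w|)$ values by encoding its $g(i-1,|w|)$ digits along a dedicated direction and addressing the current digit with a level-$(i-1)$ counter. Incrementing or testing a level-$i$ counter for its maximal value requires scanning all of its digits, hence repeatedly advancing and resetting the subordinate level-$(i-1)$ counter; each descent to a lower level is realised by a context switch to the direction holding that level, so the nesting depth of the counter hierarchy translates linearly into context switches, and together with the contexts spent reading the configuration word and performing the base-level comparison the budget is calibrated to exactly $k$. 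As in \autoref{thm:synchronousemptiness}, the presence of the required counter values and the consistency of each triple of cells are verified by alternation and purely local comparisons, so no exponentially large object is ever stored in a state and the AAPA stays of polynomial size; one then checks that $\mathcal{A}$ has a $k$-context-bounded accepting run iff $\mathcal{M}$ accepts $w$.

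I expect the main obstacle to be the inductive design and correctness of the hierarchical counter: one must show that a level-$i$ counter can be incremented and recognised at its maximal value while (i) consuming only a bounded number of extra contexts per level, so that the total stays at $k$ along every path of the run, and (ii) keeping every consistency check local enough to be enforced by alternation over a polynomial-size transition structure. Getting this context bookkeeping exactly right, so that the construction lands at $\EXPSPACE{(k-2)}$ rather than a neighbouring level, and confirming that every path of the accepting run respects the $k$-context bound, is where the real work lies; the upper bound, by contrast, is essentially the on-the-fly emptiness test built on \autoref{cor:contextboundedtoapa}.
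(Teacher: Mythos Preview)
Your upper bound is exactly the paper's: invoke \autoref{cor:contextboundedtoapa} and then run the $\NLOGSPACE$ emptiness test on the resulting NPA on-the-fly, yielding space $\mathcal{O}(g(k-2,|\mathcal{A}|))$.

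Your lower bound, however, has a genuine gap. The hierarchical counter you sketch tries to lift the $2$-synchronous gadget of \autoref{fig:countergadget}, but that gadget crucially exploits the \emph{synchronous} setting: there, the directions may be switched arbitrarily often provided the offsets stay within $k$ of each other, so one can ``repeatedly advance and reset'' subordinate bits without penalty. In the $k$-context-bounded setting the budget is on the \emph{total number} of direction switches along each path of the run. Your own description --- ``incrementing \ldots\ requires scanning all of its digits, hence repeatedly advancing and resetting the subordinate level-$(i-1)$ counter'' --- already names the problem: every such advance of the subordinate counter on its own direction is a new context, so a single increment of a level-$i$ counter costs $g(i-1,|w|)$ context switches, not $O(1)$. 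Alternation does not save you here unless each branch performs at most one descent chain, and your text does not establish that; moreover ``resetting'' is impossible, since an AAPA only moves forward on each tape.

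The paper's fix is to abandon separate counter directions altogether and instead use Stockmeyer's nested index encoding: configurations are written as level-$(k-1)$ encoded words, so that every symbol is immediately preceded by its own position (itself level-$(k-2)$ encoded, and so on recursively). This makes positions \emph{self-identifying in the word}, so matching a cell with its counterpart in the next configuration becomes a comparison of two encoded indices rather than a count. The paper factors this into two lemmas: \autoref{lem:contextboundedcomparison} shows that with $k$ context switches one can detect a difference between two level-$(k-1)$ encoded words on two directions (the induction spends one switch to jump from direction $i$ to $j$, then invokes the hypothesis on the nested level-$(k-2)$ indices); \autoref{lem:contextboundedtransducer} then bootstraps this to enforce that one direction carries a correctly formatted sequence of configurations related by a polynomial-size transducer. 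Only two directions are needed, and the context budget is spent once per encoding level, not once per increment --- which is exactly the accounting your sketch cannot achieve.
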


For the proof, we establish two helpful lemmas about AAPA with context bounds.
We make use of Stockmeyer's nested index encoding \cite{Stockmeyer1974}.
For a (finite) word $w = w_0 ... w_n$, the encoding is inductively defined by:
\begin{align*}
	stock_0(w) :=&\; [_0 w_0 ... w_n ]_0 \\
	stock_{k+1}(w) :=&\; [_{k+1} stock_k(bin(0)) w_0 ... stock_k(bin(n))  w_n ]_{k+1}
\end{align*}
Here, $bin(i)$ denotes a binary encoding of $i \in \mathbb{N}_0$.
For our proofs, it will be useful to use a least-significant-bit-first encoding for $bin(i)$.
Brackets $[_k$ and $]_k$ allow us to identify the beginning and end of a level $k$ encoding without necessarily knowing the exact length of the encoding.
For level $k \geq 1$, every letter of the word is preceded by its index in the word.
These indices are written in binary and thus form words of their own, which can be encoded as well.
The number of times the encoding of indices is nested is determined by the level of the encoding.
Since indices for a word of length $n$ can be encoded in words of length $log\ n$, every level of the encoding allows to encode exponentially larger words when starting from a fixed size.
Thus, with level $0$ encodings of length $m$, one can encode words of length $g(k,m)$ on level $k$.

The first lemma shows that by using this encoding, we can perform certain operations on sequences of words of length $g(k,m)$.
It is formulated in a generic way using regular transducers and regular languages in order to be applicable to nested index encodings as well as successions of Turing machine configurations.
The first is needed for its inductive proof; the second allows us to apply it for the proof of \autoref{thm:contextboundedcomplexity}.

The operations we need for our result are the following:
\begin{itemize}
	\item Checking whether the first and the last word of the sequence are contained in two given regular languages and
	\item checking whether each word (apart from the first one) is obtained by applying a given regular transducer to the previous word.
\end{itemize}
More concretely, for sequences of indices, we want to check:
\begin{itemize}
	\item Is the first index $0...0$?
	\item Is the last index $1...1$?
	\item Is each successor obtained by adding $1$ to the previous index?
\end{itemize}
For Turing machine configurations, the checks are:
\begin{itemize}
	\item Is the first configuration the starting configuration?
	\item Is the last configuration an accepting configuration?
	\item Is each configuration obtained by applying the Turing machine's transition function to its predecessor configuration? 
\end{itemize}
Note that for our reduction to be polynomial, each of the regular language acceptors and regular transducers have to be of polynomial size in the input size, which they indeed are.

With these considerations in mind, it is straightforward to see that an application of the next Lemma can be used to encode the acceptance problem of $g(k-2,p(n))$ space bounded Turing machines in $k$-context-bounded AAPA.
It is also easy to see that an application to sequences of indices can be used to check the validity of nested index encodings.
We now formulate the lemma:

\begin{lemma}\label{lem:contextboundedtransducer}
	Given a size bound $m$, an AAPA of size polynomial in $m$ with access to directions $i$ and $j$, as well as $k \geq 2$ context switches starting in a context for direction $i$ can:
	\begin{enumerate}
		\item enforce that the word in direction $i$ contains a level $k-2$ nested index encoding of all numbers between $0$ and $g(k-1,m)-1$ and
		\item enforce that the word in direction $j$ contains a sequence of level $k-1$ encoded words $w_1 w_2 ... w_l$, each of length $g(k-1,m)$, separated by markers, such that
		\begin{itemize}
			\item $w_1$ is contained in a given regular language with an acceptor of size $\mathcal{O}(m)$,
			\item $w_l$ is contained in another given regular language with an acceptor of size $\mathcal{O}(m)$, and
			\item for $h < l$, $w_{h+1}$ is obtained from $w_{h}$ by applying a regular transducer of size $\mathcal{O}(m)$.
		\end{itemize}
	\end{enumerate}
\end{lemma}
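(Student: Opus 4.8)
The plan is to prove both statements by induction on $k$, exploiting the recursive structure of Stockmeyer's nested index encoding together with the context-switching discipline of AAPA. The key conceptual idea is that each level of the encoding is validated by spending exactly one additional context switch, so that $k$ context switches correspond precisely to validating a level $k-2$ encoding (for part 1) and a level $k-1$ encoding (for part 2). Throughout, I would maintain the invariant that the AAPA alternates between a context reading the ``outer'' word and a context reading the index word it is comparing against, using the second direction as a scratch tape that carries the nested index encoding. The heart of the construction is that comparing two adjacent indices (to check ``successor obtained by adding $1$'') or verifying that a symbol sits at the correct position requires counting up to $g(k-1,m)$, which cannot be stored in the state space but \emph{can} be validated by recursively invoking the lemma at level $k-1$ on the other direction.

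First I would establish the base case $k=2$, where a level $0$ encoding has the form $[_0 w_0 \dots w_n ]_0$ and indices are not nested. Here all required checks — first word in a regular language, last word in a regular language, and each successor obtained by the transducer — are local comparisons of adjacent symbols that a polynomial-size AAPA can perform directly using alternation, reading direction $i$ synchronously against direction $j$ within two contexts. For part 1 at $k=2$, enforcing that direction $i$ holds a level $0$ encoding of all numbers $0$ through $g(1,m)-1 = 2^{p(m)}-1$ amounts to checking the bracket structure and that consecutive length-$p(m)$ blocks increment correctly, which is a standard counter argument realisable with $\mathcal{O}(m)$ states.

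For the inductive step from $k$ to $k+1$, I would use the already-established part 1 at level $k-1$ to manufacture, in direction $i$, a validated enumeration of all indices up to $g(k-1,m)-1$; these indices are then the ``addresses'' that let direction $j$ carry a level $k-1$ encoding whose blocks each have length $g(k-1,m)$. The crucial move is that whenever the automaton must compare position $h$ of $w_{\ell}$ with position $h$ of $w_{\ell+1}$ (to apply the transducer), or verify that an index embedded in the level $k-1$ encoding is correct, it switches context and delegates the counting-to-$g(k-1,m)$ task to a fresh invocation of the inductive hypothesis, so that each comparison consumes one context switch rather than unbounded state. Because the index encodings are least-significant-bit-first, incrementing and comparing can be done by a transducer that scans left to right with a single carry bit, keeping all acceptors and transducers of size $\mathcal{O}(m)$; I would check that the regular language and transducer checks from the lemma statement compose correctly with this address-validation machinery.

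The main obstacle I expect is bookkeeping the context budget precisely: I must ensure that validating the nested structure recursively consumes \emph{exactly} the allotted number of context switches and not more, since the whole point of \autoref{thm:contextboundedcomplexity} is a tight $\EXPSPACE{(k-2)}$ bound. In particular, the alternation used to spawn parallel comparison branches must not itself introduce extra direction switches along any single path of the run, so I would argue that each path through the run tree passes through the recursive sub-check at most once per level and that the frontier/reentry discipline of \autoref{definition:context} lets the branches rejoin the main context without an additional switch. A secondary subtlety is verifying the polynomiality claim uniformly across levels — that the transducers and language acceptors remain $\mathcal{O}(m)$ independent of $k$ — which follows because the level-specific logic is handled entirely by the context structure rather than by enlarging the fixed gadgets.
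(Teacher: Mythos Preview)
Your high-level plan (induction on $k$, using direction $i$ as an enumerated ``yardstick'' of indices) matches the paper's approach, but there is a genuine gap that would prevent the argument from going through.

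The problem is in your base case for part (2), and it propagates to the inductive step. You write that the transducer check consists of ``local comparisons of adjacent symbols.'' It does not: at $k=2$ the words $w_h$ in direction $j$ are level $1$ encodings of length $g(1,m)=2^m$, so verifying that $w_{h+1}$ is the transducer image of $w_h$ requires comparing position $n$ of $w_h$ with position $n$ of $w_{h+1}$, and these are $\Theta(2^m)$ symbols apart. This cannot be done by an $\mathcal{O}(m)$-state automaton scanning one direction, and it is exactly the difficulty the construction must overcome.

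The missing ingredient is a separate \emph{comparison lemma} (the paper's \autoref{lem:contextboundedcomparison}): an AAPA of polynomial size with $k$ context switches can test whether two level $k-1$ encoded indices on directions $i$ and $j$ \emph{differ}. This is proved by its own induction and is not a consequence of the inductive hypothesis of the present lemma; your proposal conflates ``counting to $g(k-1,m)$'' with ``comparing two encoded indices,'' but the inductive hypothesis here only gives you the former (the enumeration in direction $i$), not the latter. The actual mechanism is: the AAPA walks along the enumeration in direction $i$, conjunctively spawns a branch at each index $n$, switches to direction $j$, and then for every position of every $w_h$ disjunctively tests ``either this position's embedded index differs from $n$ (via the comparison lemma, spending the remaining context switches) or the transducer relation holds here.'' Without the comparison lemma you have no way to locate position $n$ inside the exponentially long $w_h$ within the context budget, so both the base case and the inductive step as you describe them do not close. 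You should also note that in the inductive step, part (1) at level $k+1$ is obtained by \emph{swapping the roles of $i$ and $j$} and invoking part (2) at level $k$ with the ``increment by one'' transducer—this interleaving is what keeps the two directions consistent.
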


For the proof, we need the following lemma about AAPA:

\begin{lemma}\label{lem:contextboundedcomparison}
	Given a size bound $m$, an AAPA of size polynomial in $m$ with $k \geq 1$ context switches can check whether the level $k - 1$ encoded words of length $g(k - 1, m)$ on directions $i$ and $j$ differ.
\end{lemma}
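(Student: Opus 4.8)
The plan is to proceed by induction on $k$, comparing the level $k-1$ encodings of length $g(k-1,m)$ on directions $i$ and $j$ while spending exactly $k$ context switches. Throughout, ``differ'' is witnessed existentially by a single position $p$ at which the two encoded words carry distinct symbols, and the alternation available in AAPA is used both to existentially guess this position and to universally verify auxiliary equalities. Well-formedness of the bracket structure $[_\ell \dots ]_\ell$ of each encoding can be enforced in parallel by a polynomial-size gadget and will be taken for granted below.

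For the base case $k=1$ the two words are level $0$ encodings $[_0 w_0 \dots w_{N-1} ]_0$ of polynomial length $N = g(0,m)$. Here I existentially guess the differing position $p \le N$ and store it in a polynomial-size counter in the control state. In a first context for direction $i$ I scan to position $p$ and remember the symbol there; after a single context switch I scan direction $j$ to position $p$ using the stored counter and accept iff the symbol read there differs from the stored one. This uses exactly one context switch.

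For the inductive step I compare two level $k-1$ encodings, in which every symbol is preceded by the level $k-2$ encoding $stock_{k-2}(bin(p))$ of its index. I existentially guess a block in direction $i$ and a block in direction $j$, scan each direction to the start of the chosen index, and check two conditions: (i) the two selected indices are \emph{equal} as level $k-2$ encodings of length $g(k-2,m)$, and (ii) the symbols following them differ. Condition (i) is exactly the complement of the claim provided by the inductive hypothesis for level $k-2$; by \autoref{theorem:AAPAclosure} this complement is a linear-size AAPA, and since complementation only dualises the transition function and shifts priorities without altering which direction any transition reads, the $k-1$ context switches of the inductive construction are preserved (recall we assume each state moves in a unique direction). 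Condition (ii) compares two single symbols, which can be stored in the control state.

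The main obstacle is the precise context-switch accounting, i.e.\ showing that the whole construction fits into $k$ switches rather than $k+1$. The key is to arrange the recursive index comparison so that the positioning of the two reading heads and the reads of the two differing symbols are folded into its boundary contexts: the scan of direction $i$ to the chosen block is merged with the first (direction $i$) context, the block selection in direction $j$ is performed inside the first direction-$j$ context of the recursive call, and the two payload symbols are read at the very end of the last direction-$i$ and direction-$j$ contexts of that call, where the heads already rest exactly on these symbols. This contributes a single extra switch on top of the $k-1$ switches of the recursive equality check, yielding $k$ in total. Verifying that each encoding is read fully to the end of its index in the relevant boundary context, so that the symbol reads are correctly aligned, is the delicate bookkeeping step of the argument.
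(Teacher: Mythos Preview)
Your high-level scheme in the inductive step differs from the paper's: the paper uses $\exists p\,\forall q\,[\text{index}(p)\neq\text{index}(q)\ \lor\ \text{symbol}(p)\neq\text{symbol}(q)]$ and calls the inductive hypothesis (a \emph{difference} check) directly, whereas you use the dual $\exists p\,\exists q\,[\text{index}(p)=\text{index}(q)\ \land\ \text{symbol}(p)\neq\text{symbol}(q)]$ and invoke the \emph{complement} of the inductive hypothesis. Both are logically equivalent under well-formedness, and you are right that complementation preserves the context structure, so your approach can in principle meet the $k$-switch budget.

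The gap is in your bookkeeping of the folding. The complemented inductive hypothesis is a \emph{universal} automaton: its very first context already performs a conjunctive branch over all positions $p'$ of the index. Two of your claims fail because of this. First, if the existential selection of the $j$-block is ``performed inside the first direction-$j$ context of the recursive call'' while that call's first context is direction $i$, then $\exists q$ lands \emph{after} the outer $\forall p'$ of the complement, so each universal branch picks its own $q$; you obtain $\exists p\,\forall p'\,\exists q\,[\dots]$ rather than the intended $\exists p\,\exists q\,[\dots]$, and correctness is lost. Second, ``the heads already rest exactly on these symbols'' at the end of the recursive call's last $i$- and $j$-contexts is false: a branching run has no single head position at its end, so the payload reads cannot be folded into those contexts.

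The repair is easy and stays within $k$ switches: position both heads \emph{before} entering the recursive equality check (scan $i$, one switch, scan $j$), apply the complemented hypothesis with the roles of $i$ and $j$ swapped so that its first context is $j$ and merges with the scan-$j$ context, and run the symbol comparison as a \emph{separate} conjunct in parallel with the equality check; its single extra switch is dominated by the $k-1$ switches of the equality branch, giving $1+(k-1)=k$ overall. The paper sidesteps all of this by never complementing: its $\forall q$ sits outside the (existential) recursive difference check, so no quantifier inversion can occur.
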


\begin{proof}[Proof of \autoref{lem:contextboundedcomparison}]
	The lemma can by shown by an induction on $k$.
	
	In the \textbf{base case} $k = 1$, the words are level $0$ encoded and have length $m$.
	Since the number of context switches is restricted, it is not possible to check the two words for differing bits directly by moving the two directions forward one step at a time.
	Instead, we use alternation to perform this test.
	For every number of steps $s$ up to $m$, we disjunctively move $s$ steps forward in direction $i$, perform a context switch while memorising the number of steps and the last read symbol and then move the same number of steps in direction $j$ and compare the symbols.
	If the symbols differ, the words cannot be the same and we accept, otherwise we reject.
	If all disjunctive tests reject, there are no differing bits in the words and thus no accepting runs of the AAPA.
	
	In the \textbf{inductive step}, we have $k + 1$ context switches available to compare level $k \geq 1$ encoded words of length $g(k,m)$.
	Here, we cannot count the index of a symbol in the word in the state space since that would increase the AAPA's size beyond polynomial.
	We can, however, make use of the fact that the words are encoded such that each symbol in the word is preceded by its index in a level $k - 1$ encoding.
	Thus, we start on direction $i$ and disjunctively move to the start of some index where we expect the two words to differ.
	We then perform a context switch and conjunctively test for each position whether $(i)$ the indices differ or $(ii)$ the symbols differ.
	Test $(i)$ can be done using the induction hypothesis with the $k$ remaining context switches.
	Test $(ii)$ can be done using a single additional context switch by moving to the symbol, memorising it, then moving to the respective symbol on direction $i$ and comparing the two.
\end{proof}

\begin{proof}[Proof of \autoref{lem:contextboundedtransducer}]
	The lemma can be shown by an induction on $k$.
	
	The \textbf{base case} is $k = 2$.
	Item $(1)$ can be enforced by checking for the sequence of words bracketed by $[_0$ and $]_0$ in direction $i$, that (i) these words have length $m$, that (ii) the first one is $0...0$, that (iii) the last one is $1...1$ and that (iv) each successor is obtained from the previous word by adding $1$.
	These four checks can be done via alternation and a counter to obtain $(1)$.
	
	For $(2)$, we first need to check that direction $j$ contains level $1$ nested index encodings of words of length $2^m$.
	This can be done in the same way as the check in item $(1)$ with the difference that bits belonging to the encoded words themselves have to be skipped.
	These can, however, easily be found since the indices to be checked are bracketed by $[_0$ and $]_0$.
	$w_1$ and $w_l$ belonging to regular languages can be checked by their respective language acceptors, where bits not belonging to the words themselves can be skipped.
	The main difference between the two tests is that $l$ is not determined and thus the point where $l$ is reached has to be guessed nondeterministically.
	
	For the third item of $(2)$, the challenge is that corresponding positions in $w_h$ and $w_{h+1}$ are $\mathcal{O}(2^m)$ steps apart from each other due to the length of the words $w_h$ and thus cannot be matched by counting steps in the state of the AAPA as that would violate its size restriction.
	However, we can  use the two context switches and the fact that we have already checked item $(1)$ to ensure this item.
	A test for this starts in direction $i$ and conjunctively switches to direction $j$ at the start of each number from item $(1)$.
	The copy that performs the context switch before number $n$ then has to ensure the correct transduction of the $n$th bit from each word $w_h$ to $w_{h+1}$.
	For this purpose, it conjunctively moves to the start of each word $w_h$ and performs the transduction of bit $n$. 
	As the state space is not large enough to store the value $n$, this is done in the following way:
	for each position in $w_h$ it is checked whether (i) the correct transduction is being performed or (ii) the index does not match $n$.
	The latter is done by using \autoref{lem:contextboundedcomparison} with the remaining context switch.
	Note that in order to ensure that the former check can be performed, the control location of the transducer has to be tracked. 
	This can be done by enriching the input word in direction $j$ with states of the transducer in each bit and checking their correct succession during the transduction tests.
	Since the transducer's control location is available for each position, the copy can apply the transducer's transition function to the current bit and nondeterministically choose one of the possible tuples of new control location and output bit.
	It then checks for the new control location at the next position in $w_h$.
	It also checks for the output bit in $w_{h+1}$ by again disjunctively testing for each position whether the bit is present or the index is different from $n$.
	
	In the \textbf{inductive step}, we show the claim for $k + 1$ under the induction hypothesis $(IH)$ for $k$.
	For the proof of $(1)$ we swap the roles of direction $i$ and $j$ and use the induction hypothesis:
	as argued before, the first index found being $0...0$, the last index being $1...1$ and each index being obtained from the last by adding $1$ can be described by regular languages (resp. transducers of constant size). 
	Hence, claim $(1)$ for $k+1$ follows from claim $(2)$ for $k$.
	It remains to argue that claim $(2)$ for $k + 1$, which we will show next, and claim $(1)$ for $k$, which we have used here, are consistent with each other since they formulate different requirements for the same direction.
	This is the case since a higher level nested index encoding contains all level $k-1$ encoded indices from $0...0$ to $1...1$ in ascending order.
	They can easily be identified by the level $k - 1$ brackets $[_{k-1}$ and $]_{k-1}$.
	
	For claim $(2)$, we first have to show that direction $j$ contains a sequence of words correctly level $k$ encoded.
	Like in the base case, this is done in a similar way as claim $(1)$ is enforced, namely by checking presence of level $k - 1$ encodings of $0...0$ to $1...1$ in ascending order while skipping bits belonging to the encoded word itself.
	Note that $(IH)$ applies here since we can use the level $k - 1$ indices of the level $k$ encodings that are present in direction $i$ as we have shown already.
	The main difference to the argument for claim $(1)$ is that we have to start on direction $j$ instead of $i$, conjunctively move to the start of each word, and then perform a context switch to be able to use $(IH)$.
	This results in $k + 1$ context switches for this case and thus does not violate the context restriction.
	Also (just like in the base case), item one and two of claim $(2)$ can easily be achieved by applying the language acceptors to the bits on the highest level of the encodings.
	The third item of $(2)$ can also be shown as in the base case since after moving from direction $i$ to $j$, there are $k$ context switches left which allows us to apply \autoref{lem:contextboundedcomparison}.
\end{proof}

Now, we are finally able to show the desired result.

\begin{proof}[Proof of \autoref{thm:contextboundedcomplexity}]
Inclusion follows from the construction underlying \autoref{cor:contextboundedtoapa}.
For hardness, we reduce from the acceptance problem for  $g(k-2, p(n))$ space bounded Turing machines using \autoref{lem:contextboundedtransducer} as explained above.
\end{proof}

As mentioned in the introduction, our constructions and completeness proofs can also be applied to alternating asynchronous automata (multitape automata) on finite words: the definitions of $k$-synchronicity and $k$-context-boundedness carry over in a direct manner and our algorithms for the emptiness tests can be applied. 
Furthermore, our hardness proofs can be transferred because our reductions only require reachability of control states and are therefore not dependent on a parity condition.
Thus, we believe our results also shed new light on the theory of multitape automata.

\section{A $\mu$-calculus for Hyperproperties}\label{sec:hypercalculus}

We now define our new logic, $\Hmu$.
In order to capture asynchronous hyperproperties, we combine ideas from HyperLTL \cite{Finkbeiner2015}, the polyadic $\mu$-calculus \cite{Andersen1994}, and the linear time $\mu$-calculus \cite{Barringer1986,Vardi1988} in a novel fashion. 
From HyperLTL we take the ideas to quantify over path variables and to relativise atomic propositions to path variables. Inspired by the indexed next-operator of the polyadic $\mu$-calculus, we relativise the next-operator to progress only on a single path identified by a path variable. Finally, complex hyperproperties can be specified by fixpoints. In this way, we extend the means provided by the linear time $\mu$-calculus for specifying properties to hyperproperties. 
Note that none of the logics that inspired the design of $\Hmu$ is able to capture asynchronous hyperproperties.
We use the following syntax:

\begin{definition}[Syntax]
	Let $\AP$ be a set of atomic propositions, $N = \{\pi_1,...,\pi_n\}$ be a set of path variables and $\chi = \{X_1,...,X_m\}$ be a set of predicates.
	We define $\Hmu$ formulas over $\AP$, $N$ and $\chi$ by the following grammar:
	\begin{align*}
	\varphi&:= \exists \pi . \varphi | \forall \pi . \varphi | \psi \\
	\psi &:= a_\pi | X | \bigcirc_\pi \psi | \psi \lor \psi | \lnot \psi | \mu X . \psi
	\end{align*}
	where $a \in \AP$, $\pi \in N$ and $X \in \chi$.
	We call expressions generated by the nonterminal $\varphi$ quantified formulas and expressions generated by the nonterminal $\psi$ quantifier-free formulas.
\end{definition}

In this paper, we will use two representations of a $\Hmu$ formula $\varphi$.
The first and more common one is its syntax tree which we denote by $rep_t(\varphi)$.
For the second one, we compress the syntax tree into a syntax directed acyclic graph (DAG) where syntactically equivalent subformulas share the same nodes and write $rep_d(\varphi)$ for this representation.
This offers an exponentially more succinct representation for some families of formulas while not increasing the complexity of the algorithms we consider.
Then, we use $|\varphi|_t := |rep_t(\varphi)|$ and $|\varphi|_d := |rep_d(\varphi)|$ as two measures for the size of a formula.
Note that since the DAG is obtained from the syntax tree by compressing it, $|\varphi|_d \leq |\varphi|_t$ holds for all formulas $\varphi$.
Therefore, all complexity upper bounds in which the size of a formula is measured by $|\cdot|_d$ trivially transfer to complexity upper bounds for the other size measure.

We add common connectives as syntactic sugar: $\true \equiv a_{\pi} \lor \lnot a_{\pi}$, $\false \equiv \lnot \true$, $\psi \land \psi' \equiv \lnot (\lnot \psi \lor \lnot\psi')$, $\psi \rightarrow \psi' \equiv \lnot \psi \lor \psi'$, $\psi \leftrightarrow \psi' \equiv \psi \rightarrow \psi' \land \psi' \rightarrow \psi$ and $\nu X. \psi \equiv \lnot \mu X. \lnot \psi[\lnot X/X]$.

Using these additional connectives, we impose some additional constraints on the syntax of $\Hmu$ formulas.
First, we assume that in a quantified formula $\varphi$ all predicates are bound by a fixpoint operator.
We also assume a \textit{strictly guarded} form, where predicates $X$ are only allowed in scope of an even number of negations inside $\mu X. \varphi$ and are directly preceeded by $\bigcirc_{\pi}$ for some $\pi$.
The latter part of this can indeed be required without loss of generality: if one constructs a formula where there is no progress through $\bigcirc_{\pi}$ between $\mu X$ and $X$, the fixpoint can equally be eliminated; if there is progress through $\bigcirc_{\pi}$ for some $\pi$, the $\bigcirc_{\pi}$ operator can be moved inwards such that it directly occurs in front of $X$.
Second, a formula is in \textit{positive normal form} when $\lnot$ only occurs in front of atomic propositions and all bound predicates and path variables are distinct.
Finally, we say that a formula is in \textit{closed} form when all path variables and predicates are bound.

Since our logic extends the linear time $\mu$-calculus with path quantification from HyperLTL, the formula constructs behave in similar ways as they do in those two logics.
The main difference to the linear time $\mu$-calculus is that a formula reasons about a set of paths or traces instead of over a single path or trace.
Thus, constructs $a_{\pi}$ and $\bigcirc_{\pi} \psi$ are indexed to express that $a$ holds on path $\pi$ or that $\psi$ holds when path $\pi$ moves one step forward.
Indexing the $\bigcirc$-operator allows us to express asynchronous behaviour.
Path quantification $\exists \pi . \varphi$ or $\forall \pi . \varphi$ allows to require that for one or for all paths $\pi$, the set of paths obtained by adding $\pi$ to the previously considered set fulfills $\varphi$.
Boolean connectives are interpreted in the standard way.
Finally, the constructs $X$ and $\mu X. \psi$ allow us to formulate iterative properties by least fixpoint constructions. 

With this logic, it becomes possible to specify asynchronous hyperproperties, i.e.\ properties that do not rely on traversing different paths lockstepwise.
We now sketch a few scenarios in which this is useful.
One potential application of hyperlogics is the analysis of multithreaded software.
In this scenario, different path variables used in a formula refer to different threads of the system and both the interaction between threads as well as the specification are captured by the formula.
For example, let $\pi_0$ and $\pi_1$ refer to executions of two different threads of a system that synchronise via locking.
Then the formula $\mu X. (\psi_{\textit{error}} \lor (\psi_{\textit{move}_0} \land \bigcirc_{\pi_0} X) \lor (\psi_{\textit{move}_1} \land \bigcirc_{\pi_1} X))$ expresses that the two threads can reach an error state identified by a formula $\psi_{error}$ through a lock-sensitive interleaving.
In this example, the atomic proposition $\textit{lock}_{\pi_i}$ indicates that thread $i$ currently holds the lock and the formula $\psi_{\textit{move}_i} = \lnot \textit{lock}_{\pi_{1-i}} \lor \lnot \bigcirc_{\pi_i} \textit{lock}_{\pi_i}$ expresses that thread $i$ can perform a step.
Such a property clearly requires asynchronous traversal of the different paths.
More complex interaction strategies can be handled by modifying the formulas $\psi_{\textit{move}_i}$.

Asynchronicity is also useful in applications of hyperlogics in security, e.g.\ when steps observed by the environment do not correspond to the same number of steps in different paths of the model.
This occurs, for instance, in models of software systems that reflect internal computations.
In such a situation, the formula $\varphi = \exists \pi . \forall \pi' . \nu X . \mu Y . ((a_{\pi} \leftrightarrow a_{\pi'}) \land \bigcirc_{\pi}\bigcirc_{\pi'} X) \lor \bigcirc_{\pi'} Y$ expresses that there is a path $\pi$ such that for all paths $\pi'$, $\pi$ and $\pi'$ repeatedly agree on the atomic proposition $a$.
However, contrary to the HyperLTL formula $\exists \pi . \forall \pi' . \mathcal{G}(a_{\pi} \leftrightarrow a_{\pi'})$ a step on $\pi$ can be matched by an arbitrary number of steps on $\pi'$.
This illustrates how $\Hmu$ allows us to relate a path $\pi$
describing expected observable behaviour with paths $\pi'$ with additional unobservable steps.
A similar technique can be used to specify asynchronous variants of classical hyperproperties from the literature like generalised non-interference or observational determinism \cite{Clarkson2010}.
The formula $\varphi_{\textit{obs}} = \forall \pi \forall \pi' . ((\psi_{\textit{eqL}} \rightarrow \nu X . \mu Y . \psi_{\textit{eqL}} \land \bigcirc_{\pi}\bigcirc_{\pi'} X) \lor (\lnot \textit{obs}_{\pi} \land \bigcirc_{\pi} Y) \lor (\lnot \textit{obs}_{\pi'} \land \bigcirc_{\pi'} Y)$ with $\psi_{\textit{eqL}} = \bigwedge_{a \in L} a_{\pi} \leftrightarrow a_{\pi'}$, for instance, specifies an asynchronous variant of observational determinism, which intuitively states that a system appears to be deterministic to a low security user who can only see propositions from the set $L$.
More specifically, it states that all pairs of executions which agree on the atomic propositions visible to a low security observer at the beginning of their computation agree on these atomic propositions in all observable situations.
Compared to its synchronous counterpart, it skips over unobservable states in both executions.


Later, in \autoref{sec:modelchecking} and \autoref{sec:satisfiability}, we adapt the two families of approximate analyses, $k$-synchronous and $k$-context-bounded analyses, from AAPA to $\Hmu$.
We now illustrate the utility of the resulting analyses.
%
%
Since any violation of the properties $\varphi$ and $\varphi_{\textit{obs}}$ from above occurs after a finite number of context switches, a $k$-context-bounded analysis with sufficiently large $k$ can be used to disprove these properties.
Another example is the property that one cannot locally distinguish whether fragments of a computation belong to one trace or another, or more precisely that for every point in $\pi_1$ there is a point in $\pi_2$ such that the next $n$ steps are indistinguishable.
In order to make this property's encoding in $\Hmu$ more readable, we introduce some additional syntactic sugar: $\mathcal{G}_{\pi} \psi \equiv \nu X . (\psi \land \bigcirc_{\pi} X)$ expresses that when progressing on $\pi$, some property $\psi$ \textit{generally} holds; $\mathcal{F}_{\pi} \psi \equiv \mu X . (\psi \lor \bigcirc_{\pi} X)$ expresses that when progressing on $\pi$, some property $\psi$ \textit{finally} holds.
With this syntactic sugar, the described property can be encoded as $\mathcal{G}_{\pi_1} \mathcal{F}_{\pi_2} \bigwedge_{i \leq n}\bigwedge_{a \in \AP} (\bigcirc_{\pi_2}^i a_{\pi_2} \leftrightarrow \bigcirc_{\pi_1}^i a_{\pi_1})$.
In this example, the $k$-context-bounded analyses with $k \geq 3$ yield precise results.
For the usefulness of the $k$-synchronous analyses, let us consider the following example:
imagine we expect a property described by a formula $\psi$ on $n$ paths $\pi_1,...,\pi_n$ in a synchronous manner after an initialisation phase that takes a different number of steps on each path.
Then, the formula $\mu X . ( \bigvee_{i \leq n} (Init_{\pi_i} \land \bigcirc_{\pi_i} X) \lor (\bigwedge_{i \leq n} (\lnot Init_{\pi_i}) \land \psi(\pi_1,...,\pi_n)))$ describes this expectation in a natural way and can be precisely analysed by the $k$-synchronous setup, if the lengths of the initialisation phases differ by at most $k$.
While the  two of the above properties can also be expressed using synchronous formulas, this most likely requires exponentially larger and less intuitive formulas.

For the definition of the semantics, we introduce some notation.
We call a function $\Pi: N \to \Paths(\mathcal{K})$ for a set of path variables $N$ a path assignment and $\mathcal{V}: \chi \to \PA \to 2^{\mathbb{N}_0^n}$ a predicate valuation, where we use $\PA$ to denote the set of all path assignments.
By $\Pi[\pi \mapsto p]$ (resp. $\mathcal{V}[X \mapsto M]$) we denote a path assignment $\Pi'$ (resp. predicate valuation $\mathcal{V}'$) with $\Pi'(\pi') = \Pi(\pi')$ for $\pi' \neq \pi$ and $\Pi'(\pi) = p$ (resp. $\mathcal{V}'(X') = \mathcal{V}(X')$ for $X' \neq X$, $\mathcal{V}'(X) = M$).
Also, we extend the notion of word shifts according to vectors from $\omega$-words to path assignments:
given a path assignment $\Pi$ binding path variables $\pi_1,...,\pi_n$ and a vector $v = (v_1,...,v_n) \in \mathbb{N}_0^n$, we use $\Pi[v]$ to denote the path assignment where each path $\pi_i$ is shifted left according to $v_i$.
For the definition of fixpoints, we define an order $\sqsubseteq$ on functions $\PA \to 2^{\mathbb{N}_0^n}$ such that $\xi \sqsubseteq \xi'$ iff $\xi(\Pi) \subseteq \xi'(\Pi)$ for all $\Pi$.
In this way, $(\PA \to 2^{\mathbb{N}_0^n},\sqsubseteq)$ forms a complete lattice with $\bot \equiv \lambda \Pi . \emptyset$ as its smallest element.

We now define the semantics of $\Hmu$. 
The semantics is indexed by a $k \in \mathbb{N}_0 \cup \{\infty\}$ restricting how far different paths can diverge from each other.
In the semantics indexed by $k$, we only consider situations where the foremost path is ahead of the rearmost path at most $k$ steps, or more formally, we restrict ourselves to index tuples from the set $\kvari := \{(j_1,...,j_n) \in \mathbb{N}_0^n | \forall i,i'. |j_i - j_{i'}| \leq k\}$.
Note that for $k = \infty$, $\kvari$ contains all index tuples from $\mathbb{N}_0^n$.
We distinguish between a quantifier semantics and a path semantics for quantified and quantifier-free formulas, respectively.
We write $\Pi \models^{\mathcal{K}}_k \varphi$ to denote that a path assignment $\Pi$ in the context of a Kripke Structure $\mathcal{K}$ satisfies a quantified formula $\varphi$ in the $k$-semantics.
This is extended to Kripke Structures: we write $\mathcal{K} \models_k \varphi$ iff $\{\} \models_k^{\mathcal{K}} \varphi$ for the empty path assignment $\{\}$.
Path semantics on the other hand applies to quantifier-free formulas $\psi$ with possibly free predicates.
It is defined in the context of predicate valuations and captures which index combinations from $\kvari$ fulfill the formula for the given path assignment.
We write $(j_1,...,j_n) \in \llbracket \varphi \rrbracket^{\mathcal{V}}_k(\Pi)$ for $(j_1,...,j_n) \in \kvari$ to denote that in the context of a predicate valuation $\mathcal{V}$, when we consider a path assignment $\Pi$ mapping the variables $\pi_1,...,\pi_n$ to paths $p_1,...,p_n$, the combination of suffixes $p_1[j_1],...,p_n[j_n]$ satisfies the formula $\varphi$.
Since no restrictions are imposed on index tuples for $k = \infty$, we omit the subscript in this situation and write $\mathcal{K} \models_{\infty} \varphi$ as $\mathcal{K} \models \varphi$, $\Pi \models_{\infty}^{\mathcal{K}} \varphi$ as $\Pi \models^{\mathcal{K}} \varphi$ and $\llbracket \psi \rrbracket^{\mathcal{V}}_{\infty}$ as $\llbracket \psi \rrbracket^{\mathcal{V}}$, respectively.

\begin{definition}[Quantifier Semantics]\label{def:quantifiersemantics}
	\begin{align*}
	\Pi \models_k^{\mathcal{K}} \exists \pi . \varphi  \text{ iff }& \Pi[\pi \mapsto p] \models_k^{\mathcal{K}} \varphi \text{ for some } p \in Paths(\mathcal{K}) \\
	\Pi \models_k^{\mathcal{K}} \forall \pi . \varphi  \text{ iff }& \Pi[\pi \mapsto p] \models_k^{\mathcal{K}} \varphi \text{ for all } p \in Paths(\mathcal{K}) \\
	\Pi \models_k^{\mathcal{K}} \psi  \text{ iff }& (0,...,0) \in \llbracket \psi \rrbracket_{k}^{\mathcal{V}}(\Pi) \text{ for some } \mathcal{V}
	\end{align*}
	for a quantified formula $\varphi$ and a quantifier-free formula $\psi$.
\end{definition}
\begin{definition}[Path Semantics]\label{def:pathsemantics}
	\begin{align*}
	\llbracket a_{\pi_i} \rrbracket^{\mathcal{V}}_{k} :=\;& \lambda \Pi . \{(j_1,...,j_n) \in \kvari \mid a \in L(\Pi(\pi_i)(j_i))\} \\
	\llbracket X \rrbracket^{\mathcal{V}}_{k} :=\;& \mathcal{V}(X) \\
	\llbracket \bigcirc_{\pi_i} \psi \rrbracket^{\mathcal{V}}_{k} :=\;& \lambda \Pi . \{(j_1,...,j_n) \in \kvari \mid 
	(j_1,...,j_i + 1,...,j_n) \in \llbracket \psi \rrbracket^{\mathcal{V}}_{k}(\Pi)\} \\
	\llbracket \psi \lor \psi' \rrbracket^{\mathcal{V}}_{k} :=\;& \lambda \Pi .\llbracket \psi \rrbracket^{\mathcal{V}}_{k}(\Pi) \cup \llbracket \psi' \rrbracket^{\mathcal{V}}_{k}(\Pi) \\
	\llbracket \lnot \psi \rrbracket^{\mathcal{V}}_{k} :=\;& \lambda \Pi . \kvari \setminus \llbracket \psi \rrbracket^{\mathcal{V}}_{k}(\Pi) \\
	\llbracket \mu X . \psi \rrbracket^{\mathcal{V}}_{k} :=\;& \bigsqcap \{\xi : PA \to 2^{\kvari} \mid \xi \sqsupseteq \llbracket \psi \rrbracket^{\mathcal{V}[X \mapsto \xi]}_{k}\}
	\end{align*}
\end{definition}

We now establish some properties of $\Hmu$'s semantics.
The first one is that the semantics of $\mu X. \psi$ indeed characterises a fixpoint.

\begin{theorem}\label{thm:monotone}
	$\alpha(\xi) := \llbracket \psi \rrbracket^{\mathcal{V}[X \mapsto \xi]}_k$ is monotone for all $k$, $\mathcal{V}$, $X$ and $\psi$ in positive normal form.
\end{theorem}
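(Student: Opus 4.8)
The plan is to prove the statement by structural induction on $\psi$, establishing the slightly more uniform claim that for every $k$, every predicate valuation $\mathcal{V}$, and every predicate variable $X$, the map $\alpha : \xi \mapsto \llbracket\psi\rrbracket^{\mathcal{V}[X\mapsto\xi]}_k$ is monotone with respect to the order $\sqsubseteq$ on the complete lattice $(\PA \to 2^{\kvari}, \sqsubseteq)$. Fixing $\xi \sqsubseteq \xi'$, I would show $\alpha(\xi) \sqsubseteq \alpha(\xi')$ case by case. The base cases are immediate: for $a_{\pi_i}$ the semantics does not mention $\mathcal{V}$ at all, so $\alpha$ is constant; for a predicate $Y$ the map is either the identity (if $Y = X$) or the constant $\mathcal{V}(Y)$ (if $Y \neq X$), both monotone. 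The decisive observation is that since $\psi$ is in positive normal form, the only negations are of the shape $\lnot a_{\pi_i}$, whose semantics is again independent of $\xi$; thus negation never occurs over a subformula containing $X$ and can never flip monotonicity into antitonicity.

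For the inductive cases over the remaining propositional and modal connectives I would argue routinely. For $\bigcirc_{\pi_i}\psi'$, membership of a tuple in $\alpha(\xi)(\Pi)$ is defined by membership of the shifted tuple in $\llbracket\psi'\rrbracket^{\mathcal{V}[X\mapsto\xi]}_k(\Pi)$, so the induction hypothesis for $\psi'$ transfers pointwise. For $\psi_1 \lor \psi_2$, the map $\alpha$ is the pointwise union of the two maps supplied by the induction hypothesis, and union preserves monotonicity.

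The main work, and the step I expect to be the principal obstacle, is the fixpoint case $\mu Y.\psi'$, where I would exploit that bound predicates are distinct so that $Y \neq X$. Writing $f_\xi(\zeta) := \llbracket\psi'\rrbracket^{\mathcal{V}[X\mapsto\xi][Y\mapsto\zeta]}_k$, I would first invoke the induction hypothesis for $\psi'$ in the variable $Y$ to conclude that each $f_\xi$ is monotone in $\zeta$; by Knaster--Tarski on the complete lattice this guarantees that $f_\xi$ has a least fixpoint and that the meet defining $\llbracket\mu Y.\psi'\rrbracket^{\mathcal{V}[X\mapsto\xi]}_k$ indeed equals $\mathrm{lfp}(f_\xi)$ (which is also what retroactively justifies that the displayed semantics of $\mu$ is a genuine least fixpoint). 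I would then invoke the induction hypothesis for $\psi'$ in the variable $X$ to obtain $f_\xi(\zeta) \sqsubseteq f_{\xi'}(\zeta)$ for every $\zeta$. The conclusion follows from the standard monotonicity of the least-fixpoint operator in its parameters: since $f_\xi(\mathrm{lfp}(f_{\xi'})) \sqsubseteq f_{\xi'}(\mathrm{lfp}(f_{\xi'})) = \mathrm{lfp}(f_{\xi'})$, the element $\mathrm{lfp}(f_{\xi'})$ is a pre-fixpoint of $f_\xi$, and as $\mathrm{lfp}(f_\xi)$ is the least pre-fixpoint we obtain $\mathrm{lfp}(f_\xi) \sqsubseteq \mathrm{lfp}(f_{\xi'})$, i.e.\ $\alpha(\xi) \sqsubseteq \alpha(\xi')$. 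The only real subtlety is the book-keeping: the fixpoint case needs the induction hypothesis simultaneously in two different predicate variables, which is precisely why I would phrase the inductive claim uniformly over all predicate variables rather than for the single fixed $X$ of the statement.
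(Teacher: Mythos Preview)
Your proposal is correct and follows essentially the same structural-induction approach as the paper. The only minor difference is in the fixpoint case: the paper works directly with the meet-over-pre-fixpoints definition and needs the induction hypothesis only in the variable $X$ (any pre-fixpoint for the $\xi'$-version is automatically a pre-fixpoint for the $\xi$-version, so the right-hand meet is over a smaller set), whereas you additionally invoke the induction hypothesis in $Y$ to pass through Knaster--Tarski; both arguments are standard and amount to the same thing.
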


The Knaster-Tarski fixpoint theorem \cite{Tarski1955,Cousot1979} then gives a constructive characterisation of the semantics of fixpoint formulas via ordinal numbers:
\begin{corollary}\label{cor:welldefined}
	$\llbracket \mu X. \psi \rrbracket^{\mathcal{V}}_k$ is the least fixpoint of $\alpha$.
	It can be characterised by its approximants $\bigsqcup_{\kappa \geq 0} \alpha^\kappa(\bot)$ with $\alpha^0(\xi) = \xi$, $\alpha^{\kappa+1}(\xi) = \alpha(\alpha^\kappa(\xi))$ and $\alpha^{\lambda}(\xi) = \lambda \Pi. \bigcup_{\kappa < \lambda} \alpha^\kappa(\xi)(\Pi)$ where $\kappa$ ranges over ordinals and $\lambda$ over limit ordinals.
\end{corollary}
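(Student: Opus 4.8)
The plan is to obtain the corollary as a direct instantiation of the Knaster--Tarski fixpoint theorem for the monotone operator $\alpha$. First I would observe that, for fixed $k$, $\mathcal{V}$ and $X$, the function space $\PA \to 2^{\kvari}$ ordered by $\sqsubseteq$ is a complete lattice: it is the pointwise lifting of the powerset lattice $(2^{\kvari}, \subseteq)$, whose meets and joins are pointwise intersection and union and whose least element is $\bot = \lambda \Pi. \emptyset$. Since \autoref{thm:monotone} guarantees that $\alpha(\xi) = \llbracket \psi \rrbracket^{\mathcal{V}[X \mapsto \xi]}_k$ is monotone with respect to $\sqsubseteq$, the operator $\alpha$ is a monotone self-map on a complete lattice, so its least fixpoint exists.

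Next I would identify the defining expression with the Knaster--Tarski characterisation. By \autoref{def:pathsemantics}, $\llbracket \mu X. \psi \rrbracket^{\mathcal{V}}_k = \bigsqcap \{\xi \mid \xi \sqsupseteq \alpha(\xi)\}$, i.e. the meet of all pre-fixed points of $\alpha$ (those $\xi$ with $\alpha(\xi) \sqsubseteq \xi$). The Knaster--Tarski theorem states precisely that, on a complete lattice, this meet is itself a fixpoint of $\alpha$ and is the least one; this yields the first claim of the corollary essentially for free.

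For the constructive characterisation via approximants, I would argue by transfinite induction that the ordinal-indexed chain $\alpha^\kappa(\bot)$ is $\sqsubseteq$-increasing: the base and successor steps follow from monotonicity of $\alpha$ together with $\bot \sqsubseteq \alpha(\bot)$, and the limit steps hold by definition of $\alpha^\lambda$ as a join. Since $\PA \to 2^{\kvari}$ is a set, a strictly increasing chain cannot continue through all ordinals, so the chain stabilises at some ordinal $\kappa_0$ with $\alpha(\alpha^{\kappa_0}(\bot)) = \alpha^{\kappa_0}(\bot)$; a routine induction then shows this stationary value lies below every pre-fixed point and hence coincides with the least fixpoint $\bigsqcup_{\kappa \geq 0}\alpha^\kappa(\bot)$.

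The only point requiring care --- and the main, if modest, obstacle --- is that \autoref{thm:monotone} supplies only monotonicity of $\alpha$, not continuity, so one cannot in general stop the iteration at $\omega$; the approximant join must range over all ordinals, which is exactly what the statement of the corollary allows. Everything else is a routine application of Knaster--Tarski together with the standard transfinite-iteration lemma on complete lattices, so I do not expect any genuine difficulty beyond carefully matching the lattice-theoretic boilerplate to the present semantics.
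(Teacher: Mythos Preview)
Your proposal is correct and follows exactly the approach the paper takes: the paper's proof simply cites that $(\PA \to 2^{\mathbb{N}_0^n},\sqsubseteq)$ is a complete lattice, invokes \autoref{thm:monotone} for monotonicity of $\alpha$, and appeals to the Knaster--Tarski fixpoint theorem (in its constructive, ordinal-iteration form) for both claims. Your write-up merely spells out this appeal in more detail, including the transfinite-iteration argument and the remark that lack of continuity forces iteration beyond $\omega$.
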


The second property is that the $k$-semantics properly approximates the full semantics.

\begin{theorem}\label{thm:monotonek}
	$\beta(k) := \llbracket \psi \rrbracket^{\mathcal{V}}_k$ is monotone for all $\psi$ in positive normal form and $\mathcal{V}$.
\end{theorem}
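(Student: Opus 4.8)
The plan is to prove the statement by structural induction on $\psi$. Monotonicity of $\beta$ means that $\llbracket \psi \rrbracket^{\mathcal{V}}_k \sqsubseteq \llbracket \psi \rrbracket^{\mathcal{V}}_{k'}$ whenever $k \leq k'$, and here the key geometric fact is that $\kvari \subseteq G_{k'}^n$ in this case. Because the fixpoint case will force a comparison of semantics computed under \emph{different} predicate valuations, I would first strengthen the claim so that the induction carries the valuation along: for all $k \leq k'$ and all valuations $\mathcal{V}, \mathcal{V}'$ with $\mathcal{V}(Y) \sqsubseteq \mathcal{V}'(Y)$ for every predicate $Y$, one has $\llbracket \psi \rrbracket^{\mathcal{V}}_k \sqsubseteq \llbracket \psi \rrbracket^{\mathcal{V}'}_{k'}$. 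Instantiating $\mathcal{V}' = \mathcal{V}$ then recovers the theorem.

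The base cases are the literals $a_{\pi_i}$ and $\lnot a_{\pi_i}$, which are the only places negation may appear in positive normal form. In both cases the defining membership condition ($a \in L(\Pi(\pi_i)(j_i))$, respectively its negation) does not depend on $k$, so the satisfying set simply grows together with the ambient index set as we pass from $\kvari$ to $G_{k'}^n$. This is exactly the step where positive normal form is essential: for an unrestricted $\lnot \psi$ the semantics subtracts the $k$-growing set $\llbracket \psi \rrbracket^{\mathcal{V}}_k(\Pi)$ from the $k$-growing ambient set $\kvari$, and monotonicity in $k$ would fail. For the predicate case $X$ the claim is precisely the valuation hypothesis $\mathcal{V}(X) \sqsubseteq \mathcal{V}'(X)$. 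For $\psi_1 \lor \psi_2$ it follows by taking unions of the two induction hypotheses, and for $\bigcirc_{\pi_i}\psi$ it follows directly, noting that a base tuple $(j_1,\dots,j_n)$ lies in $\kvari \subseteq G_{k'}^n$ while its shift $(j_1,\dots,j_i+1,\dots,j_n)$ lies in $\llbracket \psi \rrbracket^{\mathcal{V}}_k(\Pi)$, to which the induction hypothesis applies.

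The case $\mu X . \psi$ is the main obstacle, since the two least fixpoints live in different complete lattices, $\PA \to 2^{\kvari}$ and $\PA \to 2^{G_{k'}^n}$. I would handle this through the approximant characterisation of \autoref{cor:welldefined}. Writing $\alpha_k(\xi) = \llbracket \psi \rrbracket^{\mathcal{V}[X \mapsto \xi]}_k$ and $\alpha'_{k'}(\xi) = \llbracket \psi \rrbracket^{\mathcal{V}'[X \mapsto \xi]}_{k'}$, both of which are monotone by \autoref{thm:monotone}, I show by transfinite induction on the ordinal $\kappa$ that $\alpha_k^\kappa(\bot) \sqsubseteq (\alpha'_{k'})^\kappa(\bot)$. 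The successor step applies the strengthened structural induction hypothesis to $\psi$ with the valuation pair $\mathcal{V}[X \mapsto \alpha_k^\kappa(\bot)]$ and $\mathcal{V}'[X \mapsto (\alpha'_{k'})^\kappa(\bot)]$, whose ordering on the $X$-component is supplied by the transfinite induction hypothesis and on every other component by the assumption on $\mathcal{V}, \mathcal{V}'$; the limit step follows pointwise by taking unions of the approximants. Passing to the supremum over all ordinals gives $\llbracket \mu X . \psi \rrbracket^{\mathcal{V}}_k \sqsubseteq \llbracket \mu X . \psi \rrbracket^{\mathcal{V}'}_{k'}$, which closes the induction and hence establishes the monotonicity of $\beta$.
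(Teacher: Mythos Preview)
Your proof is correct. The structural induction, the handling of the base cases and of $\bigcirc_{\pi_i}$, and the transfinite induction on approximants for the $\mu$-case all go through as you describe. Your strengthening of the induction hypothesis to a simultaneous ordering on valuations is exactly what is needed to make the successor step of the transfinite induction work, since the two approximant chains are computed under different valuations at each stage.

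The paper proceeds along the same structural induction but handles the fixpoint case more directly. Instead of strengthening the hypothesis and comparing approximants stagewise, it simply observes that the structural induction hypothesis for $\psi_1$, instantiated at the valuation $\mathcal{V}[X \mapsto \xi]$ for \emph{each fixed} $\xi$, already yields
\[
\beta_k(\xi) \;=\; \llbracket \psi_1 \rrbracket^{\mathcal{V}[X \mapsto \xi]}_k \;\sqsubseteq\; \llbracket \psi_1 \rrbracket^{\mathcal{V}[X \mapsto \xi]}_{k'} \;=\; \beta_{k'}(\xi),
\]
and then invokes the standard lattice-theoretic fact that a pointwise inequality between two monotone operators forces the same inequality between their least fixpoints. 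This avoids both the strengthened hypothesis and the transfinite induction. What your route buys is explicitness: it makes no appeal to an external fixpoint lemma and keeps the argument self-contained with respect to \autoref{cor:welldefined}. What the paper's route buys is brevity and the observation that no strengthening is actually needed, since the original hypothesis is already universally quantified over $\mathcal{V}$.
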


\begin{corollary}\label{cor:synchronoushierarchy}
	For all Kripke Structures $\mathcal{K}$, formulas $\varphi$ in positive normal form and $k,k' \in \mathbb{N}_0 \cup \{\infty\}$ with $k \leq k'$, we have: $\mathcal{K} \models_{k} \varphi$ implies $\mathcal{K} \models_{k'} \varphi$.
\end{corollary}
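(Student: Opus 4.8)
The plan is to derive the corollary directly from \autoref{thm:monotonek} by an induction on the quantifier prefix of $\varphi$. First I would restate the content of \autoref{thm:monotonek} in the concrete form I need: since $k \leq k'$ gives the inclusion $\kvari \subseteq G_{k'}^n$, and the codomains of both interpretations are subsets of their respective index sets, the monotonicity of $\beta(k) := \llbracket \psi \rrbracket^{\mathcal{V}}_k$ unfolds via the definition of $\sqsubseteq$ to $\llbracket \psi \rrbracket^{\mathcal{V}}_k(\Pi) \subseteq \llbracket \psi \rrbracket^{\mathcal{V}}_{k'}(\Pi)$ for every path assignment $\Pi$, every predicate valuation $\mathcal{V}$, and every quantifier-free $\psi$ in positive normal form. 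This pointwise inclusion is the only fact the corollary borrows.

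Next I would set up the induction on the quantifier structure, proving the more general statement that $\Pi \models^{\mathcal{K}}_k \varphi$ implies $\Pi \models^{\mathcal{K}}_{k'} \varphi$ for all $\Pi$. For the base case $\varphi = \psi$ quantifier-free, \autoref{def:quantifiersemantics} says $\Pi \models^{\mathcal{K}}_k \psi$ iff $(0,\dots,0) \in \llbracket \psi \rrbracket^{\mathcal{V}}_k(\Pi)$ for some $\mathcal{V}$. The key small observation is that the all-zero tuple satisfies $|0-0| \leq k$ and hence lies in $\kvari$ for every $k \in \mathbb{N}_0 \cup \{\infty\}$, so it is a legal index at both levels; the reformulated monotonicity then moves the witnessing $(0,\dots,0)$ from the $k$- to the $k'$-interpretation using the \emph{same} $\mathcal{V}$, giving $\Pi \models^{\mathcal{K}}_{k'} \psi$. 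For the inductive cases I would simply read off the quantifier clauses: if $\Pi \models^{\mathcal{K}}_k \exists \pi.\varphi'$, a witness path $p$ with $\Pi[\pi \mapsto p] \models^{\mathcal{K}}_k \varphi'$ satisfies $\Pi[\pi \mapsto p] \models^{\mathcal{K}}_{k'} \varphi'$ by the induction hypothesis, so the same $p$ witnesses $\Pi \models^{\mathcal{K}}_{k'} \exists \pi.\varphi'$; the $\forall$ case is identical but quantifies over all $p$. Instantiating with the empty path assignment $\Pi = \{\}$ yields $\mathcal{K} \models_k \varphi \Rightarrow \mathcal{K} \models_{k'} \varphi$.

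I do not expect a serious obstacle at the level of the corollary itself: the quantifier induction is routine, and the only subtleties are the harmless ``for some $\mathcal{V}$'' clause (which is immaterial because every predicate in $\psi$ is bound by a fixpoint inside $\psi$ in a closed formula) and the trivial check that $(0,\dots,0)$ is $k$-admissible. The genuine work lives inside \autoref{thm:monotonek}, which I am entitled to assume; there the delicate case is $\mu X.\psi$, where monotonicity in $k$ cannot be read off structurally. I would handle it by a transfinite induction on the approximants of \autoref{cor:welldefined}, showing $\alpha_k^{\kappa}(\bot) \sqsubseteq \alpha_{k'}^{\kappa}(\bot)$ for all ordinals $\kappa$ (with $\alpha_k,\alpha_{k'}$ the approximant operators at levels $k$ and $k'$), combining the structural hypothesis for the smaller formula $\psi$ with the monotonicity in the predicate valuation supplied by \autoref{thm:monotone}. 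Should the reduction to the assumed theorem be judged insufficient, this transfinite step is precisely the part I would flesh out in full.
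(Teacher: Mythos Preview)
Your proposal is correct and follows essentially the same approach as the paper: prove the stronger statement $\Pi \models_k^{\mathcal{K}} \varphi \Rightarrow \Pi \models_{k'}^{\mathcal{K}} \varphi$ by induction on the quantifier prefix, handle $\exists$ and $\forall$ by reusing the same witness path and the induction hypothesis, and in the quantifier-free base case invoke \autoref{thm:monotonek} with the same $\mathcal{V}$ to transport $(0,\dots,0)$ from the $k$- to the $k'$-semantics. Your additional remarks on how one would discharge the fixpoint case inside \autoref{thm:monotonek} via transfinite induction on approximants are also in line with the paper's treatment (the paper phrases it as $\mathrm{lfp}(\beta_k) \sqsubseteq \mathrm{lfp}(\beta_{k'})$ from $\beta_k(\xi) \sqsubseteq \beta_{k'}(\xi)$, which is equivalent).
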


For the sake of reasoning about satisfiability, we define a variant of the semantics for $\Hmu$ formulas on traces in the straightforward way.
Instead of considering path assignments $\Pi: N \to \Paths(\mathcal{K})$, we instead consider trace assignments $\Pi: N \to \mathcal{T}$ for a set of traces $\mathcal{T}$.
Existential and universal quantifiers then quantify traces $t \in \mathcal{T}$ instead of paths $p\in \Paths(\mathcal{K})$.
Also, atomic propositions require the proposition to be included in the respective set of atomic propositions directly instead of in the set obtained by applying the labelling function to a state.
We write $\mathcal{T} \models \varphi$ to denote that a set of traces $\mathcal{T}$ fulfills a formula.
By choosing $\mathcal{T} = \Traces(\mathcal{K})$, the two semantics coincide.
For a more formal specification of this semantics variation, we refer the reader to the \switchExtended{extended version \cite{}}{appendix}.

Given these two variants of $\Hmu's$ semantics, we consider two decision problems:
\begin{itemize}
	\item \textit{Model Checking}: given a closed $\Hmu$ formula $\varphi$ and a Kripke Structure $\mathcal{K}$, does $\mathcal{K} \models \varphi$ hold?
	\item \textit{Satisfiability}: given a closed $\Hmu$ formula $\varphi$, is there a non-empty set of traces $\mathcal{T}$ such that $\mathcal{T} \models \varphi$ holds?
\end{itemize}
For this purpose we first establish a connection between $\Hmu$ and AAPA.
This allows us to apply the results on AAPA that we have established already.
In order to transfer results for restricted classes of AAPA we define corresponding fragments of the logic $\Hmu$ next.

\begin{definition}
	We call a $\Hmu$ formula $k$-synchronous for a Kripke Structure $\mathcal{K}$ if the following condition holds: $\mathcal{K} \models \varphi$ implies $\mathcal{K} \models_k \varphi$.
	A $0$-synchronous formula is called synchronous.
\end{definition}

As a small technical detail, $\llbracket \bigcirc_{\pi} \psi \rrbracket^{\mathcal{V}}_0$, i.e. the $0$-semantics of $\bigcirc_{\pi} \psi$ is always empty.
Strictly speaking, this makes the above definition useless for the case $k = 0$.
In order to cure this defect, we allow a synchronous next operator $\bigcirc \psi$ that advances all direction simultaneously in synchronous formulas.

\begin{definition}[Synchronous syntactic fragment]
	A $\Hmu$ formula belongs to the synchronous fragment if it uses the synchronous next operator $\bigcirc$ instead of the indexed next operators $\bigcirc_{\pi}$.
\end{definition}

For the next two fragments, we need a notion of \textit{extended syntax tree} of a formula $\varphi$.
Thereby, we mean the (infinite) tree obtained by repeatedly substituting fixpoint variables by their respective formula in the syntax tree, i.e. for every fixpoint expression $\mu X . \psi$ (or $\nu X . \psi$), replacing every occurrence of $X$ in $\psi$ with $\psi$.
It is easy to see that membership in the two syntactic fragments defined next is decidable in polynomial time.

\begin{definition}[$k$-synchronous syntactic fragment]
	A $\Hmu$ formula belongs to the $k$-synchronous syntactic fragment if the difference between the number of occurrences of $\bigcirc_{\pi}$ and $\bigcirc_{\pi'}$ for $\pi \neq \pi'$ on any path starting in the root of the formula's extended syntax tree is at most $k$.
\end{definition}

\begin{definition}[$k$-context-bounded syntactic fragment]
	A $\Hmu$ formula is from the $k$-context-bounded syntactic fragment if on every path starting in the root of the formula's extended syntax tree, the number of switches between directions $\pi$ for $\bigcirc_{\pi}$ constructs is at most $k-1$.
\end{definition}

\section{Expressive Equality}\label{sec:connection}

In this section we establish the correspondence between AAPA and $\Hmu$, a relation that is essential for transferring the results from \autoref{sec:automata} to $\Hmu$.
Crucial for this correspondence is the fact that a path assignment $\Pi$ for paths $\pi_1,...,\pi_n$ over the set of states $S$ can be encoded into a word over the alphabet $S^n$.
Then, free path variables in a formula correspond to components of the input alphabet and free path predicates correspond to holes in the automaton, where a fitting semantics can be plugged in once the predicate is bound.
Because of this correspondence, we use the same name for a predicate and its corresponding hole in an automaton.

Thus, given a path assignment $\Pi$ with $\Pi(\pi_i) = s_i^0 s_i^1 ...$, we define its translation into a word over the alphabet $\mathcal{S} = S^n$ as $w_{\Pi} = (s_1^0,...,s_n^0)(s_1^1,...,s_n^1)... \in \mathcal{S}^{\omega}$.
While viewing path assignments as such words, we handle predicate valuations $\mathcal{V}$ by languages $\mathcal{L}(\mathcal{V}(X_j)(\Pi)) = \{(w_{\Pi}[v] \in \mathcal{S}^{\omega} | v \in \mathcal{V}(X_j)(\Pi)\}$.
For these languages to be well-defined, we need to restrict ourselves to \textit{well-formed} valuations with the property that for all vectors $v,v' \in \mathbb{N}_0^n$, path assignments $\Pi,\Pi'$ and predicates $X$, we have that $\Pi[v] = \Pi'[v']$ implies $v \in \mathcal{V}(X)(\Pi)$ iff $v' \in \mathcal{V}(X)(\Pi')$.
However, we can show via induction that only valuations with this property occur during fixpoint iterations of $\Hmu$.

We introduce the notion of $\mathcal{K}$-equivalence:

\begin{definition}[$\mathcal{K}$-equivalence]
	Given a Kripke Structure $\mathcal{K} = (S,s_0,\delta,L)$, a $\Hmu$ formula $\psi$ over $\{\pi_1,...,\pi_n\}$ with free predicates $X_1,...,X_m$ and an alternating (asynchronous) parity automaton $\mathcal{A}$ with holes $X_1,...,X_m$ over the alphabet $S^n$, we call $\mathcal{A}$ $\mathcal{K}$-equivalent to $\psi$, iff the following condition holds: for all path assignments $\Pi$, well-formed predicate valuations $\mathcal{V}$ and vectors $v \in \mathbb{N}_0^n$, we have $v \in \llbracket \psi \rrbracket^{\mathcal{V}}(\Pi)$ iff $w_{\Pi}[v] \in \mathcal{L}(\mathcal{A}[X_1 : \mathcal{L}(\mathcal{V}(X_1)(\Pi)) ,...,X_m : \mathcal{L}(\mathcal{V}(X_m)(\Pi))])$.
\end{definition}

The definition of $\mathcal{K}$-equivalence is straightforwardly extended to quantified formulas $\varphi$: $\mathcal{A}$ is called $\mathcal{K}$-equivalent to $\varphi$ iff for all $\Pi$ the statements (i) $\Pi \models^{\mathcal{K}} \varphi$ and (ii) $w_{\Pi} \in \mathcal{L}(\mathcal{A})$ are equivalent.

This notion allows us to formulate the following theorem:

\begin{theorem}\label{thm:aapahmuequivalence}
	Let $\mathcal{K}$ be a Kripke Structure.
	\begin{enumerate}
		\item For every quantifier-free $\Hmu$ formula $\psi$ in positive normal form, there is an AAPA $\mathcal{A}_{\psi}$ of linear size in $|\psi|_d$ (and hence also in $|\psi|_t$) such that $\mathcal{A}_{\psi}$ is $\mathcal{K}$-equivalent to $\psi$.
		\item For every AAPA $\mathcal{A}$ over the alphabet $\Sigma$, there is a quantifier-free $\Hmu$ formula $\psi_{\mathcal{A}}$ with $|\psi_{\mathcal{A}}|_d$ linear and $|\psi_{\mathcal{A}}|_t$ exponential in $|\mathcal{A}|$ such that $\mathcal{A}$ is $\mathcal{K}$-equivalent to $\psi_{\mathcal{A}}$.
	\end{enumerate}
\end{theorem}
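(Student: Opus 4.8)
The plan is to prove both directions by structural induction, exploiting the observation stated before the theorem that a path assignment $\Pi$ encodes as a word $w_{\Pi}$ over $S^n$, that free predicates correspond to holes, and that the indexed next-operator $\bigcirc_{\pi_i}$ corresponds exactly to the direction-$i$ steps of an AAPA. The key structural correspondence to establish and maintain is that the semantic vector $v = (j_1,\dots,j_n)$ appearing in $\llbracket\psi\rrbracket^{\mathcal{V}}(\Pi)$ matches the tuple of offset counters $(c_1^t,\dots,c_n^t)$ reached in a run of the AAPA on $w_{\Pi}$; equivalently, accepting from a shifted word $w_{\Pi}[v]$ corresponds to membership of $v$ in the semantics. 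I would phrase both claims so the induction hypothesis is exactly the $\mathcal{K}$-equivalence statement for immediate subformulas.

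For direction (1), I would build $\mathcal{A}_{\psi}$ by induction on the positive-normal-form structure of $\psi$. The atomic case $a_{\pi_i}$ becomes a two-state gadget that inspects the $i$-th component of the current letter and moves to $\true$ or $\false$. The predicate case $X$ becomes a hole $X$ (reusing the name, as the paper sets up). For $\bigcirc_{\pi_i}\psi'$, I prepend a single transition in direction $i$ leading into $\mathcal{A}_{\psi'}$, so that one direction-$i$ step is consumed, matching the $j_i \mapsto j_i+1$ in the semantics. Disjunction $\psi'\lor\psi''$ merges the two automata under a disjunctive initial combination $\rho_0$, and negation $\lnot\psi'$ is handled by complementation via \autoref{thm:paritycomplement} (applied to subautomata, with holes treated as in the commented-out hole-aware version); since $\psi$ is in positive normal form, negations sit only in front of atomic propositions, so this only ever complements the trivial atomic gadgets and never a hole. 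The fixpoint $\mu X.\psi'$ is the crucial and most delicate case: I would translate it by the standard linear-time $\mu$-calculus-to-parity-automaton construction in the flavour of \cite{Lange2005,Vardi1988}, using the strictly guarded form (each $X$ is directly preceded by some $\bigcirc_{\pi}$, guaranteeing progress) to set priorities so that least fixpoints receive odd priorities closed under the alternation depth and greatest fixpoints (from the $\nu$ sugar) receive even ones. Linearity in $|\psi|_d$ follows because the DAG representation lets syntactically identical subformulas share the same automaton states, giving one state per DAG node up to a constant factor. The correctness argument for the fixpoint case is where I expect the real work: one must show that the parity acceptance condition computes exactly the ordinal approximants from \autoref{cor:welldefined}, which I would do by relating accepting runs to the signature/unfolding of the fixpoint and invoking \autoref{thm:monotone}.

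For direction (2), I would go the reverse way, translating an AAPA $\mathcal{A}=(Q,\rho_0,\rho,\Omega)$ into a formula $\psi_{\mathcal{A}}$ using a fixpoint variable $X_q$ for each control location $q\in Q$. Each $X_q$ is defined by a fixpoint whose body mirrors the transition formula: a direction-$d$ transition $\rho(q,\sigma,d)$ becomes a disjunction over input letters $\sigma$ (a boolean test on the current $S^n$-letter) conjoined with $\bigcirc_{\pi_d}$ applied to the boolean combination of successor variables given by $\rho(q,\sigma,d)$. The priorities $\Omega$ dictate the fixpoint nesting: I would order the locations by priority and nest the $\mu$/$\nu$ binders accordingly, alternating $\mu$ for odd and $\nu$ for even priorities, so that the parity condition is faithfully captured by the fixpoint alternation, as is standard for parity automaton emptiness games. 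The DAG size stays linear because each of the $|Q|$ equations contributes a bounded-size body (sharing common subexpressions), whereas the tree size becomes exponential precisely because unfolding the nested fixpoint equations duplicates subformulas, giving the claimed $|\psi_{\mathcal{A}}|_t$ exponential bound. The main obstacle in both directions is the same: verifying that parity acceptance over the branching, asynchronous run tree corresponds exactly to the interleaved least/greatest fixpoint semantics of $\Hmu$ over index vectors in $\mathbb{N}_0^n$ --- in particular checking that the asynchronous offset counters and the per-path indices $(j_1,\dots,j_n)$ stay in lockstep through the $\bigcirc_{\pi_i}$/direction-$i$ correspondence, and that well-formedness of valuations (needed for $\mathcal{L}(\mathcal{V}(X)(\Pi))$ to be well-defined) is preserved, so that the hole-substitution semantics matches predicate substitution in the fixpoint unfolding.
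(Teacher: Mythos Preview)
Your proposal is correct and follows essentially the same approach as the paper: the same inductive construction for Part~1 (atomic gadgets, holes for predicates, a direction-$i$ step for $\bigcirc_{\pi_i}$, disjunctive $\rho_0$ for $\lor$, and a back-edge with a fresh minimal odd/even priority for $\mu/\nu$), and for Part~2 the same state-variable encoding with binders nested in priority order (the paper credits this to a construction of Bozzelli). One small simplification relative to your plan: since the formula is in positive normal form, the paper never invokes a general complementation for $\lnot$; it just treats $\lnot a_{\pi_i}$ as the dual atomic gadget directly, so \autoref{thm:paritycomplement} is not needed here.
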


We dedicate the following two subsections to the constructions underlying the proof of this theorem.
\switchExtended{}{A detailed correctness proof for these constructions can be found in the appendix.}

\subsection{From $\Hmu$ to AAPA: Construction for \autoref{thm:aapahmuequivalence}, Part 1}\label{subsec:hmutoaapa}

Intuitively, the AAPA $\mathcal{A}_{\psi}$ has a node for each node in $\psi$'s syntax DAG.
For most constructs, the node can straightforwardly check the semantics of the formula, either directly or by transitioning to nodes for subformulas in a suitable manner.
The most interesting cases are those for predicates and fixpoint expressions.
As in the definition of $\mathcal{K}$-equivalence, free predicates correspond to holes in the automaton and are thus translated to such.
In the same definition, bound predicates do not occur as holes in the automaton; indeed, the construction for fixpoints fills the corresponding holes.
This is done by a backwards edge to the start of the automaton.
Taking this backwards edge corresponds to one unfolding of a fixpoint iteration.
Depending on whether we have a least or greatest fixpoint, the iteration can be performed a finite or infinite number of times.
This is captured by assigning the predicate state a priority that is lower than any other in the automaton and that is odd for least and even for greatest fixpoints.

We inductively construct the AAPA $\mathcal{A}_{\psi} = (Q,\rho_0,\rho,\Omega)$ for a formula $\psi$.
In each step, we assume that automata for all subformulas of $\psi$ are already constructed.
The construction is linear in $|\psi|_d$ since in each step at most a constant number of states is added to the already existing partial automata and automata for syntactically (and thus semantically) equivalent subformulas can be shared.
We now describe each case of the inductive construction.
Recall that we assume that states $\true$ and $\false$ are present in every automaton.
We therefore do not mention them explicitly.

\begin{description}
\item[Case $\psi = a_{\pi_i}$:]${}$\\ 
	For atomic propositions, we let the automaton enter an accepting self-loop if the proposition is fulfilled at the current index and a rejecting self-loop otherwise. 
	We set $Q = \{a_{\pi_i}\}$ and $\rho_0 = a_{\pi_i}$.
	Furthermore, we let	$\rho(a_{\pi_i},s,d) = \true$ if $a \in L(s)$ and $d = i$ and let $\rho(a_{\pi_i},s,d) = \false$, 
	otherwise. 
	For the priority assignments, we set 
	 $\Omega(a_{\pi_i}) = 1$.
	The case $\psi = \lnot a_{\pi_i}$ is analogous.
\item[Case $\psi = X$:]${}$\\
	A predicate is transformed into a hole in the automaton: we set
	$Q = \{X\}$,
	$\rho_0 = X$,
	$\rho(X,s,d) = \bot$ and 
	$\Omega(X) = \bot$. 
\item[Case $\psi = \psi_1 \lor \psi_2$:]${}$\\
From the induction hypothesis, we have automata $\mathcal{A}_{\psi_i} = (Q_i,\rho_{0,i},\rho_i,\Omega_i)$ for the formulas $\psi_i$. 
We set 
	$Q = Q_1 \cup Q_2$ and 
	$\rho_0 = \rho_{0,1} \lor \rho_{0,2}$.
	The transition function is induced by the transition functions of the automata $\mathcal{A}_i$, i.e. 
	$\rho(q,s,d) = \rho_i(q,s,d)$ for $q \in Q_i$.
	For the priorities, we pick
	$\Omega(q) = \Omega_i(q)$ for $q \in Q_i$.
	This automaton accepts iff one of the automata $\mathcal{A}_i$ accepts.
\item[Case $\psi = \bigcirc_i \psi_1$:] ${}$\\
From the induction hypothesis, we obtain $\mathcal{A}_{\psi_1} = (Q_1,\rho_{0,1},\rho_1,\Omega_1)$ for $\psi_1$. 
We then set
	$Q = \{\psi\} \cup Q_1$ and 
	$\rho_0 = \psi$.
	Furthermore, we set $\rho(\psi,s,d) = 
	\rho_{0,1}$ for  $d = i$ and $\rho(\psi,s,d) = \false$ otherwise.
	For the states $q \in Q_1$ and priorities, we choose
	$\rho(q,s,d) = \rho_1(q,s,d)$,
	$\Omega(\psi) = 0$ and $\Omega(q) = \Omega_1(q)$ for $q \in Q_1$.
	The choice of priority for state $\psi$ does not matter, since every run visiting this node will either visit it only a finite number of times or visit states with lower or equal priority infinitely often.
	This is ensured by the construction for fixpoint formulas.
	
\item[Case $\psi = \mu X. \psi_1$ or $\psi = \nu X. \psi_1$:]${}$\\
By the induction hypothesis, we have an automaton $\mathcal{A}_{\psi_1} = (Q_1,\rho_{0,1},\rho_1,\Omega_1)$ for $\psi_1$. 
We can assume w.l.o.g. that there is only one hole for the path predicate $X$, since all holes have the same behaviour.
Let $p := min\{\Omega_1(q_i) | q_i \in Q_1\}$.
Let $p_{even} = p$ and $p_{odd} = p - 1$ if $p$ is even; otherwise, let $p_{even} = p - 1$ and $p_{odd} = p$.
Intuitively, $p_{even}$ (resp. $p_{odd}$) is an even (resp. odd) lower bound on the priorities occuring in $\mathcal{A}_{\psi}$.
For the case where the priority used in the automaton below is negative, all priorities in the automaton are shifted by a multiple of $2$ such that the lowest priority is $0$ or $1$.
We set $Q = Q_1$, $\rho_0 = X$, $\rho(X,s,d) = \rho_1(\rho_{0,1},s,d)$ and $\rho(q,s,d) = \rho_{1}(q,s,d)$ for $q \in Q_1 \setminus \{X\}$.
For the priorities, we choose $\Omega(q) = \Omega_1(q)$ for $q \in Q_1 \setminus \{X\}$.
The state $X$ is assigned the priority 	$\Omega(X) = p_{odd}$ if  $\psi$ is a $\mu$ formula and $\Omega(X) = p_{even}$ otherwise. 
In this definition, we use $\rho_1(\rho_{0,1},s,d)$ to denote a variant of $\rho_{0,1}$ in which every occurrence of $q$ is substituted by $\rho_1(q,s,d)$.
The automaton evaluates the subformula $\psi_1$, but switches to the start in case a predicate $X$ is encountered. The choice of priority for $X$ reflects that an infinite unfolding of a $\mu$-formula should lead to a rejecting run, while an infinite unfolding of a $\nu$-formula should lead to an accepting run.
\end{description}

\subsection{From AAPA to $\Hmu$: Construction for \autoref{thm:aapahmuequivalence}, Part 2}\label{subsec:aapatohmu}

Given an AAPA $\mathcal{A} = (Q, \rho_0, \rho, \Omega)$, we construct formulas $\psi_i^h$ by induction on $i$.
Our construction is inspired by a construction from \cite{Bozzelli2007} in the context of a fixpoint logic for visibly pushdown languages.\footnote{More precisely, the construction is found in the appendix of an extended version of that paper that was provided to us by the author.
While the paper states that the construction can be performed in linear time and thus with at most linear size increase, it is not discernable how the construction can be performed without an at worst exponential blowup when using the syntax tree of a formula as the basis for measuring its size.}
We fix an ordering $q_1,...,q_n$ on states of $Q$ such that $\Omega(q_i) \geq \Omega(q_j)$ for $i < j$ for non-hole states and holes are the last $m$ states of the ordering.
Intuitively, the formula $\psi_0^h$ describes the local behaviour of a state $q_h$ and the formula $\psi_i^h$ expresses the existence of an accepting run of $\mathcal{A}$ starting in $q_h$, where only states with a priority higher than $\Omega(q_i)$ are visited infinitely often.
For each state $q_i$, we introduce a predicate $X_i$ which is bound in the $i$-th step of the inductive construction.
In the construction, $i$ ranges from $0$ to $n-m$ and $h$ ranges from $1$ to $n$.
Therefore, when the construction is finished, only the hole states of $\mathcal{A}$ remain as unbound predicates in the formula and we can choose $\rho_0[q_1 / \psi_{n-m}^1]...[q_n / \psi_{n-m}^n]$ as the desired formula.

\textbf{Construction of $\psi_0^h$:}
for holes, that is for $h > n-m$, the formula $\psi_0^h$ is given as the predicate $X_h$.
For non-holes, that is for $h \leq n-m$, we first construct $\hat{\rho}(q_h,\sigma,d)$ from $\rho(q_h,\sigma,d)$ by substituting every occurrence of a state $q_l$ with $X_l$.
$\psi_0^h$ is then given as $\bigvee_{\sigma \in \Sigma} \bigvee_{d \in M} (\sigma_{\pi_d} \land \bigcirc_d \hat{\rho}(q_h,\sigma,d))$ to describe that some $\sigma$ is currently being read in direction $d$ and in the next step we are in some combination of states satisfying $\rho(q_h,\sigma,d)$ after moving on in direction $d$.

\textbf{Construction of $\psi_i^h$ for $i > 0$:}
we assume that $\psi_{i-1}^h$ for all $h$ is already constructed.
As a first step, we construct the formula $\psi_i^i$ by binding the predicate $X_i$.
We differentiate two cases based on the priority of state $q_i$.
If $\Omega(q_i)$ is odd, then $\psi_i^i$ is given as $\mu X_i . \psi_{i-1}^i$.
In the other case, where $\Omega(q_i)$ is even, we construct $\psi_i^i$ as $\nu X_i . \psi_{i-1}^i$.
Then we construct $\psi_i^h$ for all $h \neq i$ by substituting $\psi_i^i$ for every occurrence of the predicate $X_i$ in the previous formula, that is: $\psi_i^h = \psi_{i-1}^h [X_i / \psi_i^i]$.

\textbf{Size of the construction:}

In the base case of the construction, a number of syntax DAG nodes is created that is linear in the size of $\rho$.
In the inductive step, only a single node is added, that is the node for the fixpoint in $\psi_i^i$.
The other formulas in this step, namely $\psi_i^h$ for $h \neq i$, can be obtained from $\psi_{i-1}^h$ by redirecting edges in the syntax DAG.
Since only a single node is added in each step and there is a step for each node of the automaton, $|Q|$ nodes are added in total.
Finally, when combining the formulas, at most $|\rho_0|$ nodes are added.
Thus, we have $|\psi_{\mathcal{A}}|_d = \mathcal{O}(|\rho| + |Q| + |\rho_0|) = \mathcal{O}(|\mathcal{A}|)$.

For the second measure of size, $|\psi_{\mathcal{A}}|_t$, the formula must be represented as a syntax tree.
Here, the substitution in the construction of $\psi_i^h$ for $h \neq i$ cannot be performed without adding nodes to the syntax tree since the nodes representing $\psi_i^i$ must be duplicated.
This results in a linear increase in each of the steps, which leads to an overall exponential size increase in the worst case.

\subsection{The Consequences of \autoref{thm:aapahmuequivalence}}

Since the emptiness problem for AAPA is $\Sigma_2^1$-hard and we can effectively build a $\Hmu$ formula for every AAPA and check it against a structure that generates $\Sigma^{\omega}$, we obtain:

\begin{theorem}
Model checking $\Hmu$ against a Kripke model is $\Sigma_2^1$-hard.
\end{theorem}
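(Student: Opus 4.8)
The plan is to reduce the emptiness problem for AAPA, which is $\Sigma_2^1$-hard by \autoref{thm:aapasigmaone}, to $\Hmu$ model checking via a computable many-one reduction. Given an AAPA $\mathcal{A}$ reading $n$ words over an alphabet $\Sigma$, I would build a fixed ``universal'' Kripke Structure $\mathcal{K}_\Sigma$ together with a closed quantified $\Hmu$ formula $\varphi_{\mathcal{A}}$ so that $\mathcal{L}(\mathcal{A}) \neq \emptyset$ if and only if $\mathcal{K}_\Sigma \models \varphi_{\mathcal{A}}$. Since AAPA emptiness is the language-level reading of the matrix of $\varphi_\mathcal{A}$ and $\mathcal{K}_\Sigma$ produces every input word, the existential quantifier block turns ``some input is accepted'' into a model-checking query.

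First I would set up $\mathcal{K}_\Sigma$ to generate all of $\Sigma^\omega$. Concretely, take $S := \Sigma \cup \{\iota\}$ with a fresh initial state $\iota \notin \Sigma$, atomic propositions $\AP := S$ and labelling $L(s) := \{s\}$ (so each symbol is uniquely identifiable by an atomic proposition, which is what the translation of \autoref{subsec:aapatohmu} needs for the tests $\sigma_{\pi_d}$), and transition relation $\delta := (\{\iota\} \cup \Sigma) \times \Sigma$. Every state then has outgoing edges (assuming $\Sigma \neq \emptyset$), and $\Paths(\mathcal{K}_\Sigma)$, read as words over $S$, is exactly $\iota \cdot \Sigma^\omega$; hence for a path assignment $\Pi$ of $\pi_1,\dots,\pi_n$ the encoded word $w_\Pi$ ranges precisely over $(\iota,\dots,\iota)\cdot(\Sigma^n)^\omega$ as $\Pi$ varies. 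To absorb the leading $\iota$-column, I would first pad $\mathcal{A}$ into an AAPA $\mathcal{A}^+$ over $S$ that takes one initial step in each direction reading $\iota$ and then simulates $\mathcal{A}$; clearly $\mathcal{L}(\mathcal{A}^+)\neq\emptyset$ iff $\mathcal{L}(\mathcal{A})\neq\emptyset$, and $w_\Pi \in \mathcal{L}(\mathcal{A}^+)$ iff its $(\Sigma^n)^\omega$-suffix lies in $\mathcal{L}(\mathcal{A})$.

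Next I would apply \autoref{thm:aapahmuequivalence}, part 2, to $\mathcal{A}^+$ over $S$: as $\mathcal{A}^+$ has no holes, it yields a quantifier-free $\Hmu$ formula $\psi_{\mathcal{A}^+}$ over $\pi_1,\dots,\pi_n$ with no free predicates that is $\mathcal{K}_\Sigma$-equivalent to $\mathcal{A}^+$. By the definition of $\mathcal{K}$-equivalence, instantiated at $v = (0,\dots,0)$ where the hole-substitution is vacuous, $(0,\dots,0)\in\llbracket\psi_{\mathcal{A}^+}\rrbracket^{\mathcal{V}}(\Pi)$ iff $w_\Pi \in \mathcal{L}(\mathcal{A}^+)$ for every path assignment $\Pi$, the valuation $\mathcal{V}$ being irrelevant since there are no free predicates. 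I would then close the formula by prefixing the $n$ existential quantifiers, setting $\varphi_{\mathcal{A}} := \exists\pi_1\dots\exists\pi_n.\,\psi_{\mathcal{A}^+}$. Unfolding \autoref{def:quantifiersemantics}, $\mathcal{K}_\Sigma\models\varphi_{\mathcal{A}}$ holds iff there are paths $p_1,\dots,p_n \in \Paths(\mathcal{K}_\Sigma)$ with $(0,\dots,0)\in\llbracket\psi_{\mathcal{A}^+}\rrbracket^{\mathcal{V}}(\Pi)$, i.e.\ iff some $w_\Pi$ lies in $\mathcal{L}(\mathcal{A}^+)$; by surjectivity of $\Pi\mapsto w_\Pi$ onto $(\iota,\dots,\iota)\cdot(\Sigma^n)^\omega$ together with the padding, this is equivalent to $\mathcal{L}(\mathcal{A})\neq\emptyset$.

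Finally, since $\mathcal{K}_\Sigma$ and $\varphi_{\mathcal{A}}$ are computable from $\mathcal{A}$ (the construction of \autoref{subsec:aapatohmu} is effective, and both the padding and the universal structure are trivial), the map $\mathcal{A}\mapsto(\mathcal{K}_\Sigma,\varphi_{\mathcal{A}})$ is a computable many-one reduction from AAPA emptiness to $\Hmu$ model checking, transferring the $\Sigma_2^1$-hardness of the former to the latter. I expect the only genuinely delicate points to be verifying that $\Pi\mapsto w_\Pi$ is onto the intended set of words (handled by the complete transition relation plus the $\iota$-padding) and that the existential block over a closed quantifier-free matrix exactly expresses ``some input word is accepted''; everything else is bookkeeping layered on top of the already-established equivalence and hardness results.
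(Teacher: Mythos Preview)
Your proposal is correct and follows essentially the same approach as the paper, which merely states that one builds the $\Hmu$ formula for the AAPA via \autoref{thm:aapahmuequivalence} and checks it against a structure generating $\Sigma^{\omega}$. You supply the details the paper omits---in particular the $\iota$-padding to reconcile the fixed initial state of a Kripke structure with the need for $\Pi\mapsto w_\Pi$ to hit all of $(\Sigma^n)^\omega$, and the explicit unfolding of the existential prefix---so the argument is a faithful elaboration rather than a different route.
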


Likewise, the emptiness problem of AAPA can be reduced to the satisfiability problem for $\Hmu$:

\begin{theorem}\label{thm:satisfiability}
The satisfiability problem for $\Hmu$ is $\Sigma_2^1$-hard.
\end{theorem}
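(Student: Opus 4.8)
The plan is to reduce the emptiness problem for AAPA, which is $\Sigma_2^1$-hard by \autoref{thm:aapasigmaone}, to the satisfiability problem for $\Hmu$. Fix an AAPA $\mathcal{A}$ over an alphabet $\Sigma$ with $n$ directions and without holes. First I would identify $\Sigma$ with $2^{\AP}$ for a suitable finite set $\AP$ of atomic propositions; padding $\Sigma$ with junk letters whose transitions all go to $\false$ (which leaves emptiness unchanged) lets me assume without loss of generality that $\Sigma = 2^{\AP}$, so that every trace symbol over $2^{\AP}$ is a genuine $\Sigma$-letter. By \autoref{thm:aapahmuequivalence}, part 2, I obtain a quantifier-free $\Hmu$ formula $\psi_{\mathcal{A}}$ over path variables $\pi_1,\dots,\pi_n$ with no free predicates that is $\mathcal{K}$-equivalent to $\mathcal{A}$. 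The crucial point is that this construction never inspects the transition relation of a particular Kripke structure and only reads the atomic propositions at each position, so the equivalence transfers to the trace semantics used for satisfiability. I then close the formula by prefixing existential quantifiers, setting $\varphi_{\mathcal{A}} := \exists \pi_1 \cdots \exists \pi_n . \psi_{\mathcal{A}}$, which is a closed $\Hmu$ formula since $\mathcal{A}$ has no holes and hence $\psi_{\mathcal{A}}$ has no free predicates.

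The key claim is that $\varphi_{\mathcal{A}}$ is satisfiable if and only if $\mathcal{L}(\mathcal{A}) \neq \emptyset$. For the easy direction, suppose $\mathcal{L}(\mathcal{A})$ contains a word $w \in (\Sigma^n)^{\omega}$; decomposing $w$ into its $n$ component traces $t_1,\dots,t_n$ over $\Sigma$ and taking $\mathcal{T} := \{t_1,\dots,t_n\}$ gives a non-empty trace set. For the trace assignment $\Pi$ with $\Pi(\pi_i) = t_i$ we have $w_{\Pi} = w \in \mathcal{L}(\mathcal{A})$, so by $\mathcal{K}$-equivalence $\Pi$ satisfies $\psi_{\mathcal{A}}$, whence $\mathcal{T} \models \varphi_{\mathcal{A}}$. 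Conversely, if some non-empty $\mathcal{T}$ satisfies $\varphi_{\mathcal{A}}$, then the existential quantifiers provide witness traces $t_1,\dots,t_n \in \mathcal{T}$ whose combination $w_{\Pi}$ satisfies $\psi_{\mathcal{A}}$, and $\mathcal{K}$-equivalence yields $w_{\Pi} \in \mathcal{L}(\mathcal{A})$, so the language is non-empty. Since $\varphi_{\mathcal{A}}$ is produced from $\mathcal{A}$ by the effective construction of \autoref{thm:aapahmuequivalence}, this is a computable many-one reduction; as $\Sigma_2^1$-hardness only requires recursive reductions, polynomiality is immaterial, and the $\Sigma_2^1$-hardness of AAPA emptiness transfers to $\Hmu$ satisfiability.

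The main obstacle I anticipate is making precise that the $\mathcal{K}$-equivalence of \autoref{thm:aapahmuequivalence}, which is stated for the path semantics over a fixed Kripke structure, soundly carries over to the trace semantics that underlies satisfiability. This should reduce to a definitional observation, namely that the trace semantics is exactly the path semantics instantiated on the universal trace set $(2^{\AP})^{\omega}$ and that the formula $\psi_{\mathcal{A}}$ is $\mathcal{K}$-equivalent uniformly in $\mathcal{K}$, but it must be argued carefully. The second delicate point is the contrast with the model-checking argument preceding this theorem: there one fixes the single structure generating $\Sigma^{\omega}$, whereas here the reduction exploits the freedom to \emph{choose} the trace set $\mathcal{T}$. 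It is precisely this freedom that allows a single accepted word of $\mathcal{A}$ to serve as a satisfying witness via a two-element (or smaller) trace set, and I would make sure the equivalence is set up so that no spurious non-$\Sigma$ symbols can appear, which is guaranteed by the assumption $\Sigma = 2^{\AP}$.
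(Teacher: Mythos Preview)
Your proposal is correct and is exactly the approach the paper takes: it reduces AAPA (non-)emptiness to $\Hmu$ satisfiability via the formula $\exists \pi_1 \cdots \exists \pi_n . \psi_{\mathcal{A}}$, which is precisely the reduction the paper sketches before the theorem and later spells out as item (iii) in the proof of \autoref{thm:alternationfreesatisfiability}. Your additional care about identifying $\Sigma$ with $2^{\AP}$ and transferring $\mathcal{K}$-equivalence to the trace semantics fills in details the paper leaves implicit.
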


\section{Approximate Model Checking and Decidable Fragments}\label{sec:modelchecking}

\begin{table}
	\centering
		\begin{tabular}{l l l}
			\toprule
			fragment & complexity & \\
			\midrule
			full & $\UNDECIDABLE$ & \\
			synchronous & $\SPACE(g(d,|\varphi|_d))$-complete & $\NSPACE(g(d-1,|\mathcal{K}|))$-complete\\
			$k$-synchronous & $\SPACE(g(d+1,|\varphi|_d))$-complete & $\NSPACE(g(d-1,|\mathcal{K}|))$-complete\\
			$k$-context-bounded & $\SPACE(g(d+k-2,|\varphi|_d))$ & $\NSPACE(g(d-1,|\mathcal{K}|))$ \\
			& $\SPACE(g(\tilde{d}+k-2,|\varphi|_d))$-hard & $\tilde{d} = min(d,1)$ \\
			\bottomrule
		\end{tabular}
	\caption{Complexity results for model checking.} 
	\label{table:complexity}
\end{table}

The $\Sigma_2^1$-hardness of model checking and satisfiability imply that these problems can neither be algorithmically solved nor can a complete approximation procedure (e.g. via bounded model checking) be developed for the full logic.
However, the concepts of $k$-synchronicity and $k$-context-boundedness have already proved to lead to decidability for AAPA.
We therefore consider similarly restricted analyses for $\Hmu$ and explain how they relate to the matching analyses of AAPA.

For model checking $\Hmu$ against Kripke structures, two main problems have to be solved: the quantifier-free part of the input formula has to be suitably represented and the representation has to take the quantifiers into account.
As representation, we use the AAPA derived from the construction of \autoref{subsec:hmutoaapa}. 
The main idea for handling quantifiers is to take a nondeterministic automaton representing the inner formula
and to form a product with the Kripke structure in order to check for the existence of appropriate paths in the structure.
Additionally, the automata need to be complemented in the case of universal quantifiers.
Both $k$-synchronous and $k$-context-bounded AAPA are suitable for this purpose as we can dealternate them to NPA. 
In this way, we lift our approximate analyses from automata to formulas.
Moreover, we even obtain precise results on several fragments of the logic.
This stems from the following facts:

\begin{theorem}\label{thm:constructionproperties}
	\begin{itemize}
			\item The automaton $\mathcal{A}_{\psi}$ for a synchronous $\Hmu$ formula $\psi$ can be transformed into an APA of asymptotically the same size.
			\item The automaton $\mathcal{A}_{\psi}$ for a $k$-synchronous $\Hmu$ formula $\psi$ is a $k$-synchronous AAPA.
			\item The automaton $\mathcal{A}_{\psi}$ for a $k$-context-bounded $\Hmu$ formula $\psi$ is a $k$-context-bounded AAPA.
	\end{itemize}
\end{theorem}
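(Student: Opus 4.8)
The plan is to prove each of the three bullet points by inspecting the inductive construction of $\mathcal{A}_{\psi}$ from \autoref{subsec:hmutoaapa} and verifying that the relevant structural property of the formula $\psi$ is mirrored, case by case, in the automaton. The key observation tying everything together is that the construction is essentially syntax-directed: each node of $\psi$'s syntax DAG corresponds to a state of $\mathcal{A}_{\psi}$, the indexed next operator $\bigcirc_{\pi_i}$ is the only construct that introduces a step in direction $i$, and the fixpoint construction realises the extended syntax tree by a backward edge that unfolds $\mu X.\psi_1$ (resp.\ $\nu X.\psi_1$). Consequently, a path through $\mathcal{A}_{\psi}$'s transition graph that passes through a $\bigcirc_{\pi_i}$-state and follows backward edges corresponds exactly to a path in the extended syntax tree, and the directions used along it are precisely the subscripts of the $\bigcirc$-operators encountered.

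For the first bullet, if $\psi$ is synchronous it uses only the synchronous next operator $\bigcirc$. In the construction I would replace the case for $\bigcirc_i$ by a case that advances all $n$ directions simultaneously; since every transition then moves in lockstep, the offset counters $c_1^t,\dots,c_n^t$ coincide at every node, so the resulting AAPA behaves exactly as an APA over the product alphabet $\Sigma = S^n$ read synchronously. Formally, one identifies an input vector $(s_1,\dots,s_n)$ with a single letter of an APA and checks that the transition function and priorities are preserved, yielding an APA of asymptotically the same size.

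For the second and third bullets, I would argue by a structural correspondence between paths in the extended syntax tree and paths in $\mathcal{A}_{\psi}$'s transition graph. For the $k$-synchronous fragment, membership means that on every path from the root of the extended syntax tree, the difference between the numbers of occurrences of $\bigcirc_{\pi}$ and $\bigcirc_{\pi'}$ is at most $k$. Since each $\bigcirc_{\pi_i}$-state increments exactly $c_i$ by one and no other transition changes any counter, the counter difference $|c_i^t - c_j^t|$ at any node $t$ equals the difference in the numbers of $\bigcirc$-operators for directions $i$ and $j$ along the corresponding extended-syntax-tree path. Hence $|c_i^t - c_j^t| \le k$ for all $i,j$ and every run is $k$-synchronous. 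The $k$-context-bounded case is entirely analogous: a context switch in a run of $\mathcal{A}_{\psi}$ (a transition where the active direction changes) occurs exactly when the corresponding extended-syntax-tree path switches from $\bigcirc_{\pi}$ to $\bigcirc_{\pi'}$ with $\pi \neq \pi'$, so the number of contexts on any path of any run is bounded by one plus the number of direction-switches, which the fragment bounds by $k-1$.

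\textbf{The main obstacle} I anticipate is making the correspondence between \emph{infinite} paths of a run and paths of the \emph{extended} (and hence infinite) syntax tree fully rigorous through the fixpoint construction, where the backward edge for $\mu X.\psi_1$ collapses the unfolding into a finite graph. One must check that following this backward edge in $\mathcal{A}_{\psi}$ corresponds precisely to re-entering the subformula $\psi_1$ in place of $X$ in the extended syntax tree, and in particular that the syntactic requirement ``$\bigcirc_{\pi}$ directly precedes every $X$'' (the strictly guarded form assumed for $\Hmu$) guarantees that each fixpoint unfolding contributes exactly the $\bigcirc$-operators one would count on the extended-syntax-tree path. Once this bookkeeping is set up, the three claims follow by the counter/direction-tracking argument above; I would therefore devote the bulk of the proof to establishing this path correspondence carefully and treat the counting inequalities as immediate corollaries.
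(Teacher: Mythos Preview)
Your proposal is correct and follows essentially the same approach as the paper: for the synchronous case you collapse the per-direction steps of $\bigcirc$ into a single vector-reading transition to obtain an APA, and for the $k$-synchronous and $k$-context-bounded cases you exploit the correspondence between nodes of a run of $\mathcal{A}_{\psi}$ and nodes of the extended syntax tree of $\psi$, so that the syntactic bounds on $\bigcirc_\pi$-occurrences transfer directly to the offset-counter and context bounds on runs. Your write-up is in fact more detailed than the paper's own proof, which simply asserts this correspondence and concludes; the ``main obstacle'' you flag (tracking the correspondence through the backward edge introduced by the fixpoint construction) is exactly the point the paper glosses over.
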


Let us first consider the $k$-synchronous analysis.
It can be applied to formulas by using the following fact:

\begin{theorem}\label{thm:approximateconstructionproperties}
	For a quantifier-free formula $\psi$, a path assignment $\Pi$ and a well-formed predicate valuation $\mathcal{V}$, the automaton $\mathcal{A}_{\psi}$ from \autoref{thm:aapahmuequivalence} has a $k$-synchronous accepting run over $w_{\Pi}[v]$ with holes filled according to $\mathcal{V}$ iff $v \in \llbracket \psi \rrbracket_{k}^{\mathcal{V}}(\Pi)$.
\end{theorem}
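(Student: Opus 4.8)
The plan is to prove the statement by structural induction on the quantifier-free formula $\psi$ (in positive normal form), run in parallel with the inductive construction of $\mathcal{A}_{\psi}$ from \autoref{subsec:hmutoaapa}. The argument is a refinement of the correctness proof for \autoref{thm:aapahmuequivalence}, Part 1, where the additional bookkeeping is the $k$-synchronicity constraint. The central invariant to maintain is the correspondence between a run over $w_{\Pi}[v]$ and the index tuples explored by the semantics: a node $t$ with offset counters $c^t$ is identified with the index tuple $v + c^t$, and the run is $k$-synchronous precisely when every such tuple lies in $\kvari$. Since positions only increase along a run, leaving $\kvari$ is irreversible, so $k$-synchronicity can be checked prefix by prefix; this is what makes the constraint commute with the recursive structure of $\psi$ and with the restriction built into the $k$-indexed path semantics of \autoref{def:pathsemantics}.

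First I would dispatch the base cases. For $\psi = a_{\pi_i}$ (and dually $\lnot a_{\pi_i}$) the automaton moves to $\true$ or $\false$ according to whether $a \in L(\Pi(\pi_i)(v_i))$, matching $\llbracket a_{\pi_i}\rrbracket_k^{\mathcal{V}}$ verbatim; no direction diverges, so $k$-synchronicity is immediate. For a predicate $\psi = X$ the hole is filled by $\mathcal{L}(\mathcal{V}(X)(\Pi))$, and the equivalence $v \in \mathcal{V}(X)(\Pi) = \llbracket X \rrbracket_k^{\mathcal{V}}(\Pi)$ holds by definition once well-formedness of $\mathcal{V}$ is invoked to guarantee the language is well-defined. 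The disjunction case is routine: a $k$-synchronous accepting run of $\mathcal{A}_{\psi_1 \lor \psi_2}$ exists iff one of the two component automata admits such a run, matching the union in the semantics.

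The delicate inductive cases are the next-operator and the fixpoints. For $\psi = \bigcirc_{\pi_i}\psi_1$, the automaton first advances direction $i$ and then behaves like $\mathcal{A}_{\psi_1}$; the combined run is $k$-synchronous iff the induced tuple $v + e_i$ stays in $\kvari$ and the continuation over $w_{\Pi}[v+e_i]$ is $k$-synchronous. This is exactly the clause requiring $(j_1,\dots,j_i+1,\dots,j_n) \in \llbracket \psi_1 \rrbracket_k^{\mathcal{V}}(\Pi)$ together with the ambient requirement that tuples lie in $\kvari$, so the induction hypothesis applied at $v + e_i$ closes the case; crucially, if advancing direction $i$ would break $k$-synchronicity, both sides fail simultaneously. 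For $\psi = \mu X.\psi_1$ and $\psi = \nu X.\psi_1$, I would relate the priorities introduced by the construction (the lowest, odd priority $p_{odd}$ for $\mu$, the lowest, even priority $p_{even}$ for $\nu$) to the ordinal approximants of \autoref{cor:welldefined}: finitely many unfoldings of the backward edge correspond to a successor approximant of the least fixpoint, while an infinite unfolding is accepting exactly for $\nu$ and rejecting for $\mu$. Because the $k$-synchronicity test is nodewise and monotone in $k$ (\autoref{thm:monotonek}), restricting to $k$-synchronous runs and then forming the fixpoint yields the same set as forming the least fixpoint of the $k$-restricted operator $\alpha$, which equals $\llbracket \mu X.\psi_1\rrbracket_k^{\mathcal{V}}$ by \autoref{thm:monotone} and \autoref{cor:welldefined}.

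I expect the fixpoint case to be the main obstacle, since it requires showing that the $k$-synchronicity restriction is consistent with the transfinite fixpoint iteration: one must verify that a $k$-synchronous accepting run over a $\mu$-subformula decomposes into finitely many $k$-synchronous unfoldings (well-foundedness forced by the odd lowest priority) and, conversely, that each approximant of the $k$-restricted least fixpoint is witnessed by such a run. The remaining work — confirming that the offset-counter/index-tuple correspondence is preserved under every construction and that $\kvari$-membership is never silently lost — is routine but must be carried explicitly through each case.
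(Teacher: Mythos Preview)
Your overall plan matches the paper's brief sketch: both exploit the correspondence between the offset counters appearing in runs of $\mathcal{A}_\psi$ and the index tuples explored by the $k$-semantics. However, your compositional induction has a genuine gap in the $\bigcirc_{\pi_i}$ case. The notion of a $k$-synchronous run is defined in terms of the offset counters $c^t$ of the run itself (all of which start at $0$), i.e.\ $c^t \in \kvari$; it is \emph{not} the condition ``$v + c^t \in \kvari$'' that your central invariant asserts. When you apply the induction hypothesis at $v+e_i$, you obtain a $k$-synchronous accepting run of $\mathcal{A}_{\psi_1}$ on $w_\Pi[v+e_i]$, meaning its counters $c'^t$ (restarted at $0$) satisfy $c'^t \in \kvari$. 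But the continuation inside the run of $\mathcal{A}_\psi$ on $w_\Pi[v]$ carries counters $e_i + c'^t$, and $k$-synchronicity of the combined run requires $e_i + c'^t \in \kvari$. These two conditions are not equivalent: with $n=2$, $i=1$, $k=1$, the counter $c'=(1,0)$ lies in $G^2_1$ while $e_1+c'=(2,0)$ does not, and conversely $c'=(0,2)$ lies outside $G^2_1$ while $e_1+c'=(1,2)$ lies inside. So neither direction of your $\bigcirc$ step goes through as written, and the same obstruction recurs at every unfolding in the fixpoint case.

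The paper's sketch avoids this by not decomposing into standalone sub-runs with reset counters. It argues globally about a single run of $\mathcal{A}_\psi$: each node labelled by a subformula $\psi'$ at counter $c^t$ corresponds to the tuple $c^t$ in the unrestricted semantics of $\psi'$, and restricting both sides to $\kvari$ simultaneously gives the equivalence between $k$-synchronous runs and the $k$-semantics. If you want to keep a structural induction, the fix is to strengthen the invariant so that it speaks about subtrees of a fixed ambient run (with their inherited, non-reset counters), rather than about fresh runs on shifted words; equivalently, prove the correspondence for $v=(0,\dots,0)$ only and track the absolute counters throughout.
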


This generalises the definition of $\mathcal{K}$-equivalence.
The proof relies on the observation that each node in $\mathcal{A}_{\psi}$'s runs corresponds to a subformula of $\psi$ and that for each such subformula, the offset counters that appear in an accepting run correspond to the tuples in the semantics of $\psi$.
Thus, the restriction induced by the $k$-semantics and $k$-synchronous runs deliver the same tuples.

\autoref{thm:approximateconstructionproperties} implies that we can use the $k$-synchronous analysis underlying \autoref{thm:synchronousemptiness} on the automaton $\mathcal{A}_{\psi}$ for a formula $\psi$ to determine that formula's $k$-semantics.
Since the approximations given by the $k$-semantics improve with increasing $k$ for quantifier-free (\autoref{thm:monotonek}) as well as quantified (\autoref{cor:synchronoushierarchy}) formulas, this procedure yields increasingly precise analyses of arbitrary $\Hmu$ formulas.
Indeed, any approximate analysis parameterised in some $k$ on AAPA that supplies us, for increasing $k$, with an increasing number of offsets $v$ from which there is an accepting run for a fixed $w_{\Pi}$ can be used to approximate $\Hmu$ as well.
This is due to the fact that such an analysis directly corresponds to a monotone semantics approximator as in \autoref{thm:monotonek} and is thus applicable in a proof like the one from \autoref{cor:synchronoushierarchy}.
In particular, this applies to the $k$-context-bounded analysis underlying \autoref{thm:contextboundedcomplexity} using the above argument about the correspondence between offset counters and tuples in the semantics.

\subsection{Constructions for Quantifiers}

Given a $k$-synchronous  or $k$-context-bounded AAPA for a quantified formula, the quantifier can be handled in the same way as for HyperLTL in \cite{Finkbeiner2015}.

\textbf{Construction for existential quantifiers:}
we perform the construction for a formula $\exists \pi_{n+1} . \varphi$ and a structure $\mathcal{K} = (S,s_0,\delta,L)$.
Our construction simulates one input component of $\mathcal{A}_{\varphi}$ in the state space of $\mathcal{A}_{\exists_{\pi_{n+1}} . \varphi}$.
In order to allow us to do so, we assume $\mathcal{A}_{\varphi}$ to be given as an NPA.
For the fragments we consider, a translation to NPA is possible, as seen in \autoref{sec:automata}.
Therefore, $\mathcal{A}_{\varphi}$ has the form $(Q_1,q_{0,1},\rho_1,\Omega_1)$ and input alphabet $S^{n+1}$.
We construct the NPA $\mathcal{A}_{\exists \pi_{n+1} . \varphi} = (Q,q_0,\rho,\Omega)$ with input alphabet $S^n$,
 	states $Q = Q_1 \times S$,
	initial state $q_0 = (q_{0,1},s_0)$,
	transition function $\rho((q,s),\mathpzc{s}) = \{(q',s') \in Q_1 \times S \mid q' \in \rho_1(q,\mathpzc{s} + s), s' \in \delta(s)\}$,
	and priorities $\Omega((q,s)) = \Omega_1(q)$.
	
As mentioned, the last component of $\mathcal{S}$ is now simulated in the last component of the state space of $\mathcal{A}_{\exists \pi_{n+1} . \varphi}$.
Simultaneously this component makes sure that transitions are taken according to the transition function $\delta$ of $\mathcal{K}$.
Choosing $(q_{0,1},s_0)$ as the starting state ensures that the path which is simulated in this way starts in the initial state $s_0$ of $\mathcal{K}$.

\textbf{Construction for universal quantifiers:}
in order to handle universal quantifiers, we complement the automaton $\mathcal{A}_{\varphi}$ by swapping $\land$ and $\lor$ in transitions and increasing all priorities by one.
We then construct the automaton $\exists \pi_{n+1}. \lnot\varphi$, and complement it again by the same procedure.
Note that when several of these constructions are combined, double negations can be cancelled out to avoid unnecessary complementation constructions.

\subsection{Complexity Analysis}

In order to analyse the size of the resulting automata, we need a notion of alternation depth, which is defined as the number of switches between $\exists$ and $\forall$ quantifiers in the quantifier-prefix of the formula.
Since our quantifier construction involves transforming the automaton into an NPA, we formulate our evaluation of $\mathcal{A}_{\varphi}$'s size for the resulting NPA: 

\begin{lemma}\label{lem:constructionsize}
	The NPA $\mathcal{A}_{\varphi}$ for a closed formula $\varphi$ with alternation depth $d$ has size:
	\begin{itemize}
		\item $\mathcal{O}(g(d,|\mathcal{K}| \cdot g(1,|\varphi|_d)))$ in a synchronous analysis,
		\item $\mathcal{O}(g(d,|\mathcal{K}| \cdot g(2,|\varphi|_d)))$ in a $k$-synchronous analysis, and
		\item $\mathcal{O}(g(d,|\mathcal{K}| \cdot g(k-1,|\varphi|_d)))$ in a $k$-context-bounded analysis.
	\end{itemize}
\end{lemma}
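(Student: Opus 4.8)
The plan is to compute the size of $\mathcal{A}_{\varphi}$ by induction on the quantifier-prefix structure, tracking how each quantifier operation (existential product, universal complement-product-complement) affects the state count. The base case is the quantifier-free core: by \autoref{thm:aapahmuequivalence}, Part 1, we obtain an AAPA $\mathcal{A}_{\psi}$ of size linear in $|\varphi|_d$. The three analyses differ only in how this AAPA is dealternated to an NPA, so I would first pin down the base NPA size for each case and then apply a uniform quantifier induction.

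For the base case, I would invoke the dealternation constructions from \autoref{sec:automata}. In the synchronous case, \autoref{thm:constructionproperties} says $\mathcal{A}_{\psi}$ is (essentially) an APA, so \autoref{thm:paritydealternation} yields an NBA of size $2^{\mathcal{O}(n \cdot k \cdot \log n)} = \mathcal{O}(g(1,|\varphi|_d))$ — one exponential. In the $k$-synchronous case, \autoref{thm:aapaksynchronoustosynchronous} first produces an APA with $\mathcal{O}(l \cdot m \cdot |\Sigma|^{k \cdot n})$ states — already one exponential in $|\varphi|_d$ — and then dealternation adds a second exponential, giving $\mathcal{O}(g(2,|\varphi|_d))$. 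In the $k$-context-bounded case, \autoref{cor:contextboundedtoapa} directly delivers an NPA with $\mathcal{O}(g(k-1,|\mathcal{A}|)) = \mathcal{O}(g(k-1,|\varphi|_d))$ states (no further dealternation is needed since it is already nondeterministic). This accounts for the differing innermost arguments $g(1,\cdot)$, $g(2,\cdot)$, $g(k-1,\cdot)$ in the three bounds.

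The quantifier induction is then uniform across the three cases. Using the constructions from the subsection on quantifiers: an existential quantifier forms a product with $\mathcal{K}$, multiplying the state count by $|\mathcal{K}|$ while keeping the automaton nondeterministic (no exponential blowup). A universal quantifier requires complement–product–complement; the two complementations preserve size for APA (\autoref{thm:paritycomplement}), but to re-enter the nondeterministic world for the next product we must dealternate, costing one exponential (\autoref{thm:paritydealternation}). Thus each alternation between $\exists$ and $\forall$ in the prefix contributes exactly one additional exponential, while same-type quantifiers only accumulate factors of $|\mathcal{K}|$ absorbed into the base of the nested exponential. Over a prefix of alternation depth $d$, this produces $d$ nested exponentials wrapped around a base that combines $|\mathcal{K}|$ with the quantifier-free NPA size, yielding $\mathcal{O}(g(d,|\mathcal{K}| \cdot g(c,|\varphi|_d)))$ with $c \in \{1,2,k-1\}$ for the three respective analyses.

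The main obstacle I expect is bookkeeping the interaction between the product-with-$\mathcal{K}$ step and the nested-exponential $g$ notation: one must verify that multiplying by $|\mathcal{K}|$ at each of the $d$ quantifier layers can be consolidated into a single factor $|\mathcal{K}|$ inside the base of the outermost $g$, rather than producing $|\mathcal{K}|^d$ or scattering $|\mathcal{K}|$ across every exponential level. This works because $g$ is defined via iterated exponentiation and the $\mathcal{O}(g(d,n))$ convention (from the preliminaries) permits absorbing polynomial and even fixed multiplicative factors into the polynomial $p$ and constant $c$ at each level; carefully checking that the $|\mathcal{K}|$ factors telescope correctly through the definition $g_{c,p}(d+1,n) = c^{g_{c,p}(d,n)}$ is the delicate point, but it follows from the explicitly stated closure of the $\mathcal{O}(g(\cdot,\cdot))$ notation under varying constants and polynomials at different levels.
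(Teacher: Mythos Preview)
Your approach is essentially the same as the paper's: derive the base NPA size from the relevant AAPA-to-NPA translation (\autoref{thm:paritydealternation}, \autoref{thm:aapaksynchronoustosynchronous}, \autoref{cor:contextboundedtoapa}) and then argue that each quantifier alternation contributes one further exponential while same-type blocks contribute only polynomial factors. The overall shape and conclusion are correct.

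There is, however, a small gap in your justification for the universal case. You write that a universal quantifier is handled by complement--product--complement and that ``to re-enter the nondeterministic world for the next product we must dealternate''. Read literally, this says every universal quantifier costs one dealternation (since the product step inside each $\forall$-construction needs an NPA), which would give one exponential per $\forall$ rather than per alternation. The reason a block of consecutive $\forall$'s costs only \emph{one} exponential is that the outer complement of one $\forall$-step and the inner complement of the next cancel: after the first $\forall$ you hold a universal automaton, the next $\forall$ begins by complementing it back to an NPA, and the product proceeds without dealternation. The paper makes exactly this point (``for universal quantifiers this is due to the elimination of double negations''), and without it your ``Thus each alternation \dots\ contributes exactly one additional exponential'' does not follow from the preceding sentence. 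Once you insert this observation, your argument matches the paper's.
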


For the decision problems corresponding to our analyses, we obtain the following complexity results.
The upper bounds also apply for the respective analyses.

\begin{theorem}
	Model checking a closed synchronous $\Hmu$ formula $\varphi$ with alternation depth $d$ against a Kripke Structure is complete for $\SPACE(g(d,|\varphi|_d))$, $\SPACE(g(d,|\varphi|_t))$ and $\NSPACE(g(d-1,|\mathcal{K}|))$\footnote{
	For the case of $d = 0$, we use the definition from \cite{Finkbeiner2015}, where $\NSPACE(g(-1, n))$ was defined as $\NLOGSPACE$.
	For $d > 0$, we can use Savitch's Theorem to see that the problems are actually complete for the deterministic space classes.
	}.
\end{theorem}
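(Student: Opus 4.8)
The plan is to establish the three completeness results by combining the automaton size bound from \autoref{lem:constructionsize} with the emptiness test for APA (\autoref{thm:parityemptiness}) and the product construction with the Kripke structure. First I would argue the upper bounds. By \autoref{thm:constructionproperties}, the AAPA $\mathcal{A}_{\psi}$ for the quantifier-free matrix $\psi$ of a synchronous formula $\varphi$ can be viewed as an APA, and \autoref{lem:constructionsize} gives an NPA $\mathcal{A}_{\varphi}$ of size $\mathcal{O}(g(d,|\mathcal{K}| \cdot g(1,|\varphi|_d)))$ after processing the $d$ alternations via the quantifier constructions. Since emptiness of the resulting automaton is testable in $\NLOGSPACE$ (\autoref{thm:parityemptiness}) and model checking reduces to checking whether $w_{\Pi}$ is accepted, i.e.\ to a (non)emptiness test on this automaton, the space needed is logarithmic in the automaton size, hence $\SPACE(g(d,|\varphi|_d))$ when measuring in the formula and $\NSPACE(g(d-1,|\mathcal{K}|))$ when isolating the dependence on $\mathcal{K}$. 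Here I must be careful to track which exponential levels come from the Kripke structure (contributing the outermost $d$ levels) versus the formula (contributing the innermost constant number of levels), so that the nested exponential $g$ splits correctly into the two stated bounds; Savitch's Theorem then converts the nondeterministic bounds to deterministic ones for $d > 0$, and the $d = 0$ base case follows the $\NLOGSPACE$ convention of \cite{Finkbeiner2015}.

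Next I would handle the size measure $|\varphi|_t$. The point is that $|\varphi|_d \leq |\varphi|_t$ always holds, so the upper bound stated for $|\varphi|_d$ transfers verbatim to $|\varphi|_t$, giving $\SPACE(g(d,|\varphi|_t))$ as an upper bound with no extra work. The subtlety is only on the hardness side: the lower bound must be witnessed by formulas whose two size measures are polynomially related, so that hardness in $|\varphi|_t$ genuinely follows from the same reduction used for $|\varphi|_d$.

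For the lower bounds I would adapt the Turing machine simulation. The natural approach is to encode the acceptance problem of a $g(d,p(n))$ space-bounded (alternating, to get the right space class with quantifier alternations) Turing machine into a synchronous $\Hmu$ formula, mirroring the reduction used for HyperLTL model checking in \cite{Finkbeiner2015} but now exploiting the fixpoint operators of $\Hmu$ for the innermost level of counting. Each of the $d$ quantifier alternations in the formula contributes one exponential level (capturing alternating space via the standard quantifier-to-space correspondence), and the $\mu$-calculus matrix over the Kripke structure captures the remaining level; consistency of successive machine configurations is checked synchronously across path variables exactly as in the lockstep HyperLTL setting. I would take care that the constructed formulas lie in the synchronous fragment (using the synchronous next operator $\bigcirc$) and have $|\varphi|_t$ polynomial in the machine description, so that hardness holds for both size measures and for the $\mathcal{K}$-parameterised bound.

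The main obstacle will be bookkeeping the nested-exponential accounting precisely: showing that the $d$ alternations in the quantifier construction contribute exactly $d$ levels of the $g$ hierarchy (and not $d+1$), that the synchronous matrix adds only a constant number of levels which get absorbed into the polynomial $p$ inside the $\mathcal{O}(g(\cdot,\cdot))$ notation, and that when the complexity is attributed to $|\mathcal{K}|$ the formula contributes only to the innermost, absorbable level so that the bound drops to $g(d-1,|\mathcal{K}|)$. Getting these level shifts to line up on both the upper-bound (via \autoref{lem:constructionsize} and \autoref{thm:parityemptiness}) and lower-bound (via the Turing machine simulation) sides, uniformly across the two size measures, is the delicate part of the argument.
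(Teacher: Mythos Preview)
Your upper-bound argument is essentially the same as the paper's: invoke \autoref{lem:constructionsize} to bound the NPA size and then use the $\NLOGSPACE$ emptiness test for NPA.

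For the lower bounds, however, the paper takes a much shorter route than you propose. Rather than re-engineering a Turing machine simulation inside synchronous $\Hmu$, the paper simply reduces from HyperLTL model checking \cite{Rabe2016}. Since HyperLTL embeds into synchronous $\Hmu$ via a direct (polynomial-size) translation that uses only the synchronous next operator, the known hardness results for HyperLTL model checking---which are already $\SPACE(g(d,|\varphi|))$-hard in the formula and $\NSPACE(g(d-1,|\mathcal{K}|))$-hard in the structure---transfer immediately. This also handles both size measures at once, because the HyperLTL-to-$\Hmu$ translation produces formulas whose syntax tree and syntax DAG sizes are polynomially related; you do not need to verify separately that your constructed formulas have polynomial $|\varphi|_t$.

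Your proposed direct encoding would work, but it duplicates effort already invested in the HyperLTL literature and introduces the bookkeeping worries you identify in your final paragraph. The paper's approach sidesteps all of that: the ``delicate'' level-counting you anticipate is precisely what the HyperLTL hardness proof in \cite{Rabe2016} already establishes, so there is no need to redo it here.
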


\begin{proof}
	The upper bounds follow from \autoref{lem:constructionsize} and the $\NLOGSPACE$ complexity of emptiness tests on NPA.
	For hardness, we can reduce from HyperLTL Model Checking \cite{Rabe2016} to obtain the desired results.
\end{proof}

\begin{theorem}\label{thm:ksynchronousmodelchecking}
	Model checking a closed $k$-synchronous $\Hmu$ formula $\varphi$ with alternation depth $d$ against a Kripke Structure is complete for $\SPACE(g(d+1,|\varphi|_d))$ and $\NSPACE(g(d-1,|\mathcal{K}|))$. 
\end{theorem}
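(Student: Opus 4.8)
The plan is to read the two stated bounds, as is standard for model checking, as the \emph{expression complexity} (space as a function of $|\varphi|_d$ for a fixed Kripke Structure) and the \emph{data complexity} (space as a function of $|\mathcal{K}|$ for a fixed formula), and to establish membership and hardness for each separately.

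For the upper bounds I would first observe that since $\varphi$ is $k$-synchronous, the definition of $k$-synchronicity together with \autoref{cor:synchronoushierarchy} yields $\mathcal{K} \models \varphi$ iff $\mathcal{K} \models_k \varphi$, so it suffices to decide $\mathcal{K} \models_k \varphi$. By \autoref{thm:constructionproperties} and \autoref{thm:approximateconstructionproperties} the $k$-synchronous analysis of $\mathcal{A}_{\varphi}$ is exact, and combined with the quantifier constructions and \autoref{lem:constructionsize} it produces an NPA of size $\mathcal{O}(g(d,|\mathcal{K}| \cdot g(2,|\varphi|_d)))$ that is non-empty iff $\mathcal{K} \models_k \varphi$. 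I would then run the $\NLOGSPACE$ emptiness test for NPA (\autoref{thm:parityemptiness}) on this automaton, generating its states and transitions on the fly so that only $\mathcal{O}(\log(\mathrm{size}))$ bits are ever stored. For the expression complexity $|\mathcal{K}|$ is constant, so the NPA has size $\mathcal{O}(g(d,g(2,|\varphi|_d))) = \mathcal{O}(g(d+2,|\varphi|_d))$ using $g(a,g(b,n)) = g(a+b,n)$; its logarithm is $g(d+1,|\varphi|_d)$, and Savitch's theorem (applicable since $d+1 \geq 1$, so squaring stays within the class) makes the test deterministic, giving $\SPACE(g(d+1,|\varphi|_d))$. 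For the data complexity $\varphi$ is constant, so the NPA has size $\mathcal{O}(g(d,|\mathcal{K}|))$, whose logarithm is $g(d-1,|\mathcal{K}|)$, giving $\NSPACE(g(d-1,|\mathcal{K}|))$.

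For the lower bounds I would obtain the data-complexity bound directly from HyperLTL model checking \cite{Rabe2016}: a HyperLTL formula is synchronous and hence $k$-synchronous, so fixing it transfers the matching $\NSPACE(g(d-1,|\mathcal{K}|))$-hardness in $|\mathcal{K}|$. For the expression-complexity bound I would fix the universal Kripke Structure generating $\Sigma^{\omega}$ and vary the formula. In the base case $d = 0$ I would invoke \autoref{thm:synchronousemptiness}: translate the $\EXPSPACE{}$-hard $2$-synchronous AAPA built there from a Turing machine input into a quantifier-free formula via \autoref{thm:aapahmuequivalence}(2), of $|\cdot|_d$ linear in the automaton, and prefix it with existential quantifiers for its $n$ input words; model checking the resulting alternation-depth-$0$ formula against the universal structure is exactly the $k$-synchronous emptiness test and is therefore $\SPACE(g(1,|\varphi|_d))$-hard. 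For $d > 0$ I would prepend $d$ quantifier alternations following the HyperLTL lower bound of \cite{Rabe2016}, each alternation contributing one exponential through the complementation step of the quantifier construction, while the $k$-synchronous counter gadget of \autoref{thm:synchronousemptiness} supplies the innermost level; reducing from acceptance of $g(d+1,p(n))$-space bounded alternating Turing machines in this fashion should yield $\SPACE(g(d+1,|\varphi|_d))$-hardness.

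The hard part will be this expression-complexity lower bound: I would need to interleave the path quantifiers that climb the tower with the $k$-synchronous counter gadget used for the innermost exponential, and then carefully verify that the resulting formula is genuinely $k$-synchronous (so that the exact analysis applies), has alternation depth exactly $d$, and remains of size polynomial in the Turing machine input.
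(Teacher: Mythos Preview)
Your proposal is correct and follows essentially the same route as the paper. The upper bounds via \autoref{lem:constructionsize} and the NPA emptiness test, as well as the data-complexity hardness via HyperLTL model checking, match the paper verbatim; for the expression-complexity hardness the paper also combines the $2$-synchronous counter gadget of \autoref{thm:synchronousemptiness} (yielding an exponential yardstick at the quantifier-free level) with the inductive HyperLTL/QPTL yardstick construction of Sistla and Rabe for the remaining $d$ alternations, observing that the latter preserves $2$-synchronicity since HyperLTL advances all paths synchronously---precisely the ``hard part'' you flag at the end.
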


\begin{proof}
The upper bounds follow from \autoref{lem:constructionsize} and the $\NLOGSPACE$ complexity of emptiness tests on NPA.

For hardness, we reduce from the acceptance problem for $g(k,p(n))$ space bounded deterministic Turing machines.
The reduction is based on Sistla's classical yardstick construction for the satisfiability problem of Quantified Propositional Temporal Logic (QPTL) \cite{Sistla1983} and the adaptation of that reduction to HyperLTL \cite{Rabe2016}.
More concretely, in the reduction for HyperLTL, a formula $\varphi_{k,m}(P_x, P_y)$ is constructed such that $P_x$ and $P_y$ are true at exactly one point and the points at which both propositions are fulfilled are exactly $N_{k,m}$ steps away where $N_{k,m} \geq g(k,m)$.
While in that reduction, an alternation free formula is constructed for a polynomial $N_{k,m}$, we can instead construct a quantifier free formula for an exponential $N_{k,m}$ by building a $2$-synchronous AAPA as in the proof of \autoref{thm:synchronousemptiness} that can check for two propositions whether they are exponentially many indices away from each other.
For this AAPA, we can obtain a $2$-synchronous $\Hmu$ formula in polynomial time.
We can then inductively construct the formulas as in the proof for HyperLTL and translate it to $\Hmu$.
Note that since HyperLTL progresses on different paths synchronously, the inductive construction can easily be adapted to preserve $2$-synchronicity of the formula.
Accordingly, we need one quantifier alternation less in our $\Hmu$ formula than in the corresponding HyperLTL or QPTL formulas to build a yardstick of length $N_{k,m}$ and obtain the desired lower bound.

The hardness claim for fixed formulas can be obtained by a direct reduction from HyperLTL model checking \cite{Rabe2016}.
\end{proof}

By the space hierarchy theorem, the completeness for $\SPACE(g(d+1,|\varphi|_d))$ implies that when measuring size of formulas $\varphi$ by $|\cdot|_t$, at least space $\mathcal{O}(g(d,|\varphi|_t))$ is needed. Note that this does not imply hardness for the corresponding class since the implication is based on an exponential time reduction, but hardness requires polynomial time reductions. A similar reasoning applies to later theorems in which completeness results based on $|\cdot|_d$ are given.

\begin{theorem}\label{thm:contextboundedmodelchecking}
	Model checking a closed $k$-context-bounded $\Hmu$ formula $\varphi$ with alternation depth $d$ against a Kripke Structure is in $\SPACE(g(d+k-2,|\varphi|_d))$ and $\NSPACE(g(d-1,|\mathcal{K}|))$ and is hard for $\SPACE(g(\tilde{d}+k-2,|\varphi|_d))$ where $\tilde{d} = min(d,1)$. 
\end{theorem}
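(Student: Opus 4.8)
The plan is to treat this exactly like the synchronous and $k$-synchronous cases before it: read the upper bounds off the automaton-size bound of \autoref{lem:constructionsize} combined with the cheap emptiness test for \text{NPA}, and obtain the lower bounds by encoding space-bounded Turing machine acceptance using the $k$-context-bounded machinery of \autoref{sec:automata}.

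\textbf{Upper bounds.} By \autoref{lem:constructionsize} the closed formula $\varphi$ of alternation depth $d$ yields an \text{NPA} $\mathcal{A}_\varphi$ of size $N = \mathcal{O}(g(d,|\mathcal{K}|\cdot g(k-1,|\varphi|_d)))$ such that $\mathcal{K}\models\varphi$ iff $\mathcal{L}(\mathcal{A}_\varphi)\neq\emptyset$. Since emptiness of \text{NPA} is in $\NLOGSPACE$ (\autoref{thm:parityemptiness}), the test runs in nondeterministic space $\mathcal{O}(\log N)$. I would then carry out the nested-exponential arithmetic using $\log g(j,m)=\mathcal{O}(g(j-1,m))$ and the composition identity $g(a,g(b,m))=\mathcal{O}(g(a+b,m))$ for $b\geq 1$ (valid because the inner term is then at least exponential, so leading constants and polynomials are absorbed by the $\mathcal{O}(g(\cdot))$ convention). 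For the formula measure I fix $\mathcal{K}$, obtaining $\log N=\mathcal{O}(g(d+k-2,|\varphi|_d))$, and remove the nondeterminism by Savitch's theorem, whose squaring stays inside the same tower level; this gives membership in $\SPACE(g(d+k-2,|\varphi|_d))$ (the boundary case $d=0$ being handled by the footnote convention, where $\log N$ is $\mathcal{O}(g(k-2,|\varphi|_d)+\log|\mathcal{K}|)$). For the structure measure I fix $\varphi$ and get $\log N=\mathcal{O}(g(d-1,|\mathcal{K}|))$, i.e. $\NSPACE(g(d-1,|\mathcal{K}|))$.

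\textbf{Lower bound for $d=0$.} Here $\tilde{d}=0$, so I must show $\SPACE(g(k-2,|\varphi|_d))$-hardness, which I get by reducing from emptiness of $k$-context-bounded AAPA ($\SPACE(g(k-2,n))$-complete by \autoref{thm:contextboundedcomplexity}). Given such an AAPA $\mathcal{A}$ over $S^n$, I translate it to a quantifier-free $\Hmu$ formula $\psi_{\mathcal{A}}$ through \autoref{thm:aapahmuequivalence}(2), whose $|\cdot|_d$-size is linear in $|\mathcal{A}|$; one verifies that the construction of \autoref{subsec:aapatohmu} maps each direction-context of $\mathcal{A}$ to the corresponding $\bigcirc_\pi$-context, so $\psi_{\mathcal{A}}$ stays $k$-context-bounded. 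Closing it with an all-existential prefix $\exists\pi_1\dots\exists\pi_n.\psi_{\mathcal{A}}$ (alternation depth $0$) and model checking against a universal Kripke structure whose paths realise every word over $S$, I obtain $\mathcal{K}\models\varphi$ iff $\mathcal{A}$ has a $k$-context-bounded accepting run, completing the reduction.

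\textbf{Lower bound for $d\geq 1$.} Now $\tilde{d}=1$ and I need $\SPACE(g(k-1,|\varphi|_d))$-hardness, i.e. a reduction from acceptance of $g(k-1,p(n))$-space machines. The idea is to spend a \emph{single} quantifier alternation in the yardstick style of Sistla and Rabe that underlies \autoref{thm:ksynchronousmodelchecking}. The quantifier-free $k$-context-bounded fragment can already address positions up to $g(k-1,m)$ via the nested index encoding of \autoref{lem:contextboundedtransducer}(1); in the alternation-free setting this entire counting budget is consumed by the local consistency checks between successive configurations, which is why $d=0$ reaches only $g(k-2)$. A single alternation lets a quantifier take over the ``challenge a tape position'' role, freeing the full counting capability for the configuration length and so verifying a $g(k-1,p(n))$-space computation. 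I would then pad with dummy quantifiers over unused path variables to reach any prescribed depth $d\geq 1$ without changing the truth value or the polynomial size. The main obstacle — and the reason this is a hardness rather than a completeness statement for $d\geq 2$ — is exactly this step: one must show that one alternation buys precisely one extra tower level while the formula stays $k$-context-bounded, and that the usual inductive yardstick doubling cannot be iterated, since its lockstep two-path comparisons are inherently synchronous and destroy context-boundedness. Hence the lower bound saturates at $g(k-1,|\varphi|_d)=g(\tilde{d}+k-2,|\varphi|_d)$ while the upper bound keeps growing as $g(d+k-2,|\varphi|_d)$.
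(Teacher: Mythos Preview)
Your upper bounds and the $d=0$ hardness are exactly the paper's argument.

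For $d\geq 1$, your core intuition --- one universal quantifier ``challenges a tape position'' so that the full $k$-context-bounded counting capability becomes available for configurations of length $g(k-1,p(n))$ --- is right, and your remark on why this does not iterate beyond $\tilde d=1$ matches the paper's discussion after the theorem. But the paper does \emph{not} build a Sistla/Rabe yardstick; you even note yourself that those lockstep comparisons would break context-boundedness, so invoking that construction leaves a gap. What the paper actually does is write down an explicit $\exists\pi_{seq}\,\exists\pi_{help}\,\forall\pi_{index}$ formula: $\pi_{seq}$ carries a sequence of level-$(k{-}1)$ Stockmeyer-encoded Turing-machine configurations (each bit annotated by a transducer state); $\pi_{help}$ is a second existential path supplying the auxiliary index trace that the $\Hmu$-translation $\psi_{stock}^{k-1}$ of \autoref{lem:contextboundedtransducer} needs to certify that each word on $\pi_{seq}$ is a correct Stockmeyer encoding; and $\pi_{index}$ is universally quantified over all level-$(k{-}2)$ Stockmeyer-encoded indices. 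For every such index the body uses the $\Hmu$-translation $\psi_{same}^{k-2}$ of \autoref{lem:contextboundedcomparison} (costing $k{-}1$ context switches starting in a $\pi_{seq}$-context) to locate the matching bit in each $w_h$ on $\pi_{seq}$, read bit and transducer state, pick a transducer transition, check the successor state at the next symbol, and walk to the same index in $w_{h+1}$ (again via $\psi_{same}^{k-2}$) to verify the output bit. All surrounding modalities ($\mathcal G_{\pi_{seq}}$, $\mathcal F_{\pi_{seq}}^{word}$, $\mathcal U_{\pi_{seq}}$, $\bigcirc^{symbol}$, $\bigcirc^{word}$) progress only on $\pi_{seq}$, so together with the AAPA-derived subformulas the body stays $k$-context-bounded. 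That concrete construction --- in particular the existential helper path $\pi_{help}$ forced by \autoref{lem:contextboundedtransducer} --- is what your sketch is missing.
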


\begin{proof}
	The upper bounds follow from \autoref{lem:constructionsize} and the $\NLOGSPACE$ complexity of emptiness tests on NPA.
	For hardness, we show the claim for $d = 0$ and $d \geq 1$ separately.
	
	In the first case where $d = 0$, we reduce from the emptiness problem for $k$-context-bounded AAPA.
	More specifically, an AAPA $\mathcal{A}$ is non-empty iff $\mathcal{K} \models \exists \pi_1 \dots \exists \pi_n . \psi_{\mathcal{A}}$ for the structure $\mathcal{K}$ whose traces correspond to arbitrary words over $\mathcal{A}$'s input alphabet.
	
	In the second case, where $d \geq 1$, we reduce from the acceptance problem for $g(k-1,p(n))$ space bounded deterministic Turing machines.
	Similar to the proof of \autoref{thm:contextboundedcomplexity}, we show a more general result about regular transductions on level $k-1$ Stockmeyer encoded words as in \autoref{lem:contextboundedtransducer}.
	For a regular transducer and two regular language acceptors of size polynomial in $n$, we construct a Kripke Structure $\mathcal{K}$ and a formula $\varphi$ such that $\mathcal{K} \models \varphi$ iff there is a sequence of level $k-1$ Stockmeyer encoded words $w_0,w_1,...,w_l$ such that
	\begin{enumerate}
		\item[(a)] $w_0$ is accepted by the first regular language acceptor,
		\item[(b)] $w_l$ is accepted by the second regular language acceptor, and
		\item[(c)] for all $i < l$, $w_{i+1}$ is obtained from $w_{i}$ by applying the regular transducer $\mathcal{T} = (Q,q_o,\gamma)$.
	\end{enumerate}

	As the first step, we construct the Kripke Structure $\mathcal{K}$.
	Its set of atomic propositions is given as $AP = \{0,1\} \cup \{[_i , ]_i | i \leq k\} \cup Q$, where $\{0,1\}$ is the alphabet for the word itself, $\{[_i , ]_i | i \leq k\}$ is used for the Stockmeyer encodings and $Q$ is the set of states of the regular transducer.
	For this set of atomic propositions, we choose $\mathcal{K}$ as the structure that produces arbitrary traces over $AP$.
	
	To make the formula $\varphi$ more readable, we introduce some modalities as syntactic sugar that can easily be encoded in $\Hmu$.
	First, we use the LTL-like modality $\mathcal{G}_{\pi}$  that was previously defined as well as the Until-modality $\mathcal{U}_{\pi}$ that can be defined analogously.
	Also, we introduce two variations of $\bigcirc_{\pi}$ namely $\bigcirc_{\pi}^{symbol}$ and $\bigcirc_{\pi}^{word}$.
	In a sequence of Stockmeyer encoded words as inspected here, $\bigcirc_{\pi}^{symbol} \psi$ expects $\psi$ to hold at the next encoded symbol and $\bigcirc_{\pi}^{word} \psi$ expects $\psi$ to hold at the start of the next encoded word in the sequence.
	These modalities can straightforwardly be encoded in $\Hmu$.
	Finally, we use $\bigcirc_{\pi}^{word} \psi$ to encode modalities $\mathcal{G}_{\pi}^{word} \psi$ and $\mathcal{F}_{\pi}^{word} \psi$ which express that $\psi$ holds for each or one encoded word on $\pi$, respectively.
	
	To increase readability of the formula even further, we also introduce some auxiliary formulas:
	\begin{itemize}
		\item $\psi_{stock}^{k-1}(\pi,\pi')$ is obtained from \autoref{lem:contextboundedtransducer} and expresses that the next word of length $g(k-1,p(n))$ on $\pi$ represents a level $k-1$ Stockmeyer encoded word with the help of $\pi'$ and $k-1$ context switches
		\item $\psi_{same}^{k-2}(\pi,\pi')$ is obtained from \autoref{lem:contextboundedcomparison} and expresses that the first level $k-2$ Stockmeyer encoded words on $\pi$ and $\pi'$ match each other with $k-1$ context switches (starting with a $\pi$ context)
		\item $\psi_{start}(\pi)$ and $\psi_{end}(\pi)$ state that the first encoded word on $\pi$ is accepted by the first and second regular language acceptor respectively
		\item $\psi_{enc}(\pi)$ checks that $\pi$ respects a specific encoding. This means:
		\begin{itemize}
			\item On every index, exactly one atomic proposition from $\{0,1\} \cup \{[_i , ]_i | i \leq k\}$ is true
			\item On every index of an encoded word, exactly one atomic proposition from $Q$ is true
			\item The first $Q$-poposition in every encoded word is $q_0$
		\end{itemize}
	\end{itemize}
	
	Now we are able to construct the formula $\varphi$:
	\begin{align*}
		&\exists \pi_{seq} . \exists \pi_{help} . \forall \pi_{index} .\quad \psi_{stock}^{k-2}(\pi_{index},\pi_{help}) \rightarrow &\quad(1) \\
		&\qquad \psi_{enc}(\pi_{seq}) \land \mathcal{G}_{\pi_{seq}}^{word}\psi_{stock}^{k-1}(\pi_{seq},\pi_{help}) \;\land &\quad(2)\\
		&\qquad \psi_{start}(\pi_{seq}) \land \mathcal{F}_{\pi_{seq}}^{word} \psi_{end}(\pi_{seq}) \;\land &\quad(3) \\
		&\qquad \mathcal{G}_{\pi_{seq}} \psi_{same}^{k-2}(\pi_{seq},\pi_{index}) \rightarrow \\ 
		&\qquad \qquad \bigcirc_{\pi_{seq}}^{symbol} \bigvee_{i \in \{0,1\},q \in Q} i_{\pi_{seq}} \land q_{\pi_{seq}} \;\land \bigvee_{(q',i') \in \gamma(q,i)} \bigcirc_{\pi_{seq}}^{symbol} q'_{\pi_{seq}} \land &\quad(4) \\
		&\qquad \qquad \qquad \bigcirc_{\pi_{seq}}^{word} \lnot \psi_{same}^{k-2}(\pi_{seq},\pi_{index}) \mathcal{U}_{\pi_{seq}} \psi_{same}^{k-2}(\pi_{seq},\pi_{index}) \land \bigcirc_{\pi_{seq}}^{symbol} i'_{\pi_{seq}} &\quad
	\end{align*}
	Intuitively, this formula quantifying the three paths $\pi_{seq},\pi_{help}$ and $\pi_{index}$ can be understood in the following way:
	on $\pi_{seq}$, we check for the existence of the sequence of Stockmeyer encoded words.
	We use $\pi_{help}$ as a path yielding indices of lower level encodings to be used for checking the presence of Stockmeyer encodings as in \autoref{lem:contextboundedtransducer} or more specifically the translation of the AAPA referred to in that lemma into $\Hmu$.
	The path $\pi_{index}$ is universally quantified and is used to obtain all Stockmeyer indices for a level $k$ encoding.
	The condition in $(1)$ ensures that it is indeed a level $k-1$ Stockmeyer encoded word and each such word over $\{0,1\}$ represents an index for a level $k$ encoding.
	In $(2)$, we ensure that $\pi_{seq}$ has the required encoding.
	Part $(3)$ of the formula expresses the two conditions $(a)$ and $(b)$ that the sequence of words must fulfill.
	Finally, $(4)$ expresses condition $(c)$ of the sequence, i.e. that each word is obtained from the previous one by applying the regular transducer.
	This is done by checking the transductions for each index of the words separately (which is achieved by the universal quantification over $\pi_{index}$):
	whenever we find the index we are currently checking, we determine the current symbol of the word and the state the transducer is in.
	We choose one of $\mathcal{T}$'s possible transitions for the current symbol and state to determine its output and next state.
	The next state is then expected at the next symbol of the current encoded word.
	The output on the other hand is expected the next time this index occurs in the sequence.
\end{proof}

A few remarks are in order about these last results.

First, since the formula constructed for the reduction for \autoref{thm:contextboundedmodelchecking} works in a similar way as the automaton from \autoref{lem:contextboundedtransducer}, one could ask whether the hardness estimates can be inductively lifted to $\tilde{d} = d$
in a similar way as was done in \autoref{lem:contextboundedtransducer}.
This would require the formula $\psi_{same}$ used in our construction to handle longer nested index encodings.
As the number of context switches can not be increased further, this seems to require quantification.
Indeed, by adding suitable prenex alternating quantifiers, a corresponding formula could be defined.
However, this formula could no longer be used inside fixpoints since this would lead to formulas with non-prenex quantifiers and it is unclear whether such alternating quantifiers could be moved to the front of the formula without changing the semantics.
	
Secondly, while the inability to inductively lift the hardness result to an arbitrary number of quantifier alternations raises the question whether quantifier alternations increase the complexity of a context bounded analysis at all, our result shows that this is at least the case for the first one.
A complete answer to this question remains for future work.
	
Finally, the $k$-synchronous and $k$-context-bounded analyses can be combined by interpreting the quantifier free part of the given formula as a boolean combination of subformulas each of which is analysed with one of these analyses.
Our setup can easily handle this since the resulting automata for the analyses of these formulas are all synchronous automata which can be combined straightforwardly.
In this setting, the hardness proof for the $k$-context-bounded analysis of automata can be lifted with quantifier alternations in the same way as in the proof of \autoref{thm:ksynchronousmodelchecking} since the subformulas added in the inductive step of the proof are synchronous.
This yields a hardness result for the combined analysis that matches the upper bound for the $k$-context-bounded analysis. 

\section{Approximate Satisfiability Checking for Fragments of the Logic}\label{sec:satisfiability}

\begin{table}
	\centering
		\begin{tabular}{l l}
			\toprule
			fragment & complexity \\
			\midrule
			full & $\UNDECIDABLE$ \\
			alternation free synchronous & $\PSPACE$-complete \\
			alternation free $k$-synchronous & $\EXPSPACE{}$-complete \\
			alternation free $k$-context-bounded & $\EXPSPACE{(k-2)}$-complete \\
			$\exists^* \forall^*$ synchronous & $\EXPSPACE{}$-complete \\
			$\exists^* \forall^*$ $k$-synchronous & $\EXPSPACE{}$-complete \\
			$\exists^* \forall^*$ $k$-context-bounded & $\EXPSPACE{(k-2)}$-complete \\
			\bottomrule
		\end{tabular}
	\caption{Complexity results for satisfiability when representing a formula with $rep_d(\cdot)$.} 
	\label{table:satisfiability}
\end{table}

Since certain combinations of path quantifiers lead to undecidability already for synchronous hyperlogics like HyperLTL \cite{Finkbeiner2016}, we consider satisfiability for formulas with restricted quantifier prefixes only.
We say that a formula is in the $\exists^*$ fragment of $\Hmu$ if it only has $\exists$ quantifiers in its quantifier prefix.
The $\forall^*$ fragment is definied analogously.
Furthermore, we say that a formula is in the $\exists^* \forall^*$ fragment if its quantifier prefix has the form $\exists \pi_1 ... \exists \pi_n \forall \pi_1' ... \forall \pi_m'$.
Note that the same restrictions have been considered in \cite{Finkbeiner2016} for HyperLTL in order to obtain decidable fragments.
However, as our proof of \autoref{thm:satisfiability} shows, a quantifier prefix with just existential quantifiers suffices for undecidability in the case of $\Hmu$, if the quantifier-free part of the formula is not restricted or approximated as well.
Thus, just like for model checking, we make use of the approximate analyses developed in \autoref{sec:automata} for the quantifier-free part of the formula.
We now present complexity results for the decision problems corresponding to the resulting approximate analyses.
The upper bounds apply to these analyses as well.
An overview can be found in \autoref{table:satisfiability}.

\begin{theorem}\label{thm:alternationfreesatisfiability}
	The satisfiability problem for the $\exists^*$- and $\forall^*$-fragment of
	\begin{enumerate}
		\item synchronous $\Hmu$ is $\PSPACE$-complete
		\item $k$-synchronous $\Hmu$ is $\EXPSPACE{}$-complete
		\item $k$-context-bounded $\Hmu$ is $\EXPSPACE{(k-2)}$-complete
	\end{enumerate}
	when using $rep_d(\cdot)$ to represent formulas.
	The statement in (1) also holds for $rep_t(\cdot)$.
\end{theorem}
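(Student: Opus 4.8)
The plan is to reduce every satisfiability question to a (possibly diagonal-restricted) emptiness test on the AAPA $\mathcal{A}_{\psi}$ that \autoref{thm:aapahmuequivalence}(1) associates with the quantifier-free matrix $\psi$. For a formula $\exists \pi_1 \dots \exists \pi_n.\,\psi$ the quantified traces may serve as their own witnesses, so the formula is satisfiable iff there are traces $t_1,\dots,t_n$ whose induced assignment satisfies $\psi$, i.e.\ iff $\mathcal{L}(\mathcal{A}_{\psi}) \neq \emptyset$. For a formula $\forall \pi_1 \dots \forall \pi_n.\,\psi$ a non-empty trace set satisfies it iff every $n$-tuple of its traces does; choosing a single trace $t$ from such a set shows this to be equivalent to $(t,\dots,t)$ satisfying $\psi$ for one trace $t$. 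Hence $\forall^{*}$-satisfiability is exactly non-emptiness of $\mathcal{A}_{\psi}$ restricted to the diagonal $\{(a,\dots,a) \mid a \in \Sigma\}^{\omega}$, a per-symbol condition enforced by restricting the input alphabet to diagonal letters, which changes neither the asymptotic size nor the class of the automaton. By \autoref{thm:constructionproperties}, $\mathcal{A}_{\psi}$ is an APA in the synchronous case and a $k$-synchronous (resp.\ $k$-context-bounded) AAPA in the respective fragments, and by \autoref{thm:aapahmuequivalence} its size is linear in $|\psi|_d$.

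For the upper bounds I would run the matching emptiness procedure on $\mathcal{A}_{\psi}$: \autoref{thm:parityemptiness} gives $\PSPACE$ in the synchronous case, \autoref{thm:synchronousemptiness} gives $\EXPSPACE{}$ in the $k$-synchronous case, and \autoref{thm:contextboundedcomplexity} gives $\EXPSPACE{(k-2)}$ in the $k$-context-bounded case. As $|\mathcal{A}_{\psi}|$ is linear in $|\psi|_d$ and the diagonal restriction is free, these bounds carry over to satisfiability; the synchronous upper bound in (1) also holds for $|\cdot|_t$ because $|\psi|_d \le |\psi|_t$.

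The $\exists^{*}$ lower bounds reverse this reduction. For the $k$-synchronous and $k$-context-bounded fragments I would take a hard instance $\mathcal{A}$ for the emptiness problem of the respective class from \autoref{thm:synchronousemptiness} resp.\ \autoref{thm:contextboundedcomplexity}, apply \autoref{thm:aapahmuequivalence}(2) to obtain a $\psi_{\mathcal{A}}$ of size $\mathcal{O}(|\mathcal{A}|)$ under $|\cdot|_d$ that stays in the matching syntactic fragment, and observe that $\exists \pi_1 \dots \exists \pi_n.\,\psi_{\mathcal{A}}$ is satisfiable iff $\mathcal{A}$ is non-empty. The synchronous case, and with it the $rep_t$ part of (1), is handled uniformly for both quantifier fragments by a single path: a synchronous matrix forces all offsets to coincide on the diagonal and therefore cannot separate its path variables, so both fragments collapse to satisfiability of the linear-time $\mu$-calculus \cite{Vardi1988}, which is $\PSPACE$-complete and embeds into $\Hmu$ preserving both $|\cdot|_d$ and $|\cdot|_t$.

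The main obstacle is the $\forall^{*}$ lower bound for the two genuinely asynchronous fragments, where satisfiability is diagonal non-emptiness and a single word $t$, read at $n$ offsets, must impersonate all $n$ input tapes. For the $k$-context-bounded fragment this is attainable because offsets may diverge unboundedly across the at most $k$ contexts: I would concatenate the nested index word and the configuration sequence of \autoref{lem:contextboundedtransducer} into one marker-separated word and use context switches to steer the path variables into the region each inspects, so that the diagonal automaton verifies that $t$ encodes an accepting $g(k-2,p(n))$-space Turing machine run. The $k$-synchronous fragment is the crux: here the offsets stay within the fixed bound $k$, so the configuration tape and the auxiliary $(01)^{\omega}$ counter gadgets of \autoref{thm:synchronousemptiness} can no longer be placed in far-apart parts of $t$. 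The remedy I would pursue is to interleave all tracks into the single diagonal word and let each path advance in blocks, and the technically hardest point is to arrange this interleaving so that the exponential yardstick is retained while the construction still witnesses $\EXPSPACE{}$-hardness within a fixed $k$-synchronous fragment.
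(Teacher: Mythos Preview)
Your $\exists^*$ argument and all the upper bounds are correct and essentially match the paper. For $\forall^*$ you diverge: the paper routes through the complement, reducing satisfiability of $\forall\pi_1\dots\forall\pi_n.\psi$ to emptiness of $\mathcal{A}_{\lnot\psi}$ and, for hardness, reducing emptiness of an AAPA $\mathcal{A}$ to satisfiability of $\forall\pi_1\dots\forall\pi_n.\lnot\psi_{\mathcal{A}}$. You instead characterise $\forall^*$-satisfiability as non-emptiness of $\mathcal{A}_\psi$ on the diagonal $\{(t,\dots,t)\mid t\in(2^{AP})^\omega\}$ --- which is a correct equivalence --- and then try to push the multi-tape hardness instances onto the diagonal by \emph{physically} merging their tapes into one trace.

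This is where your proposal has a real gap. In the $k$-synchronous case all $n$ heads read the same word and stay within distance $k$ of one another, so there is no room to spatially separate the Turing-machine configuration track from the $(01)^\omega$ counter tracks of \autoref{thm:synchronousemptiness}; your interleaving sketch does not explain how the exponential yardstick survives once every track must fit inside a window of bounded width, and you concede as much. The idea you are missing is much simpler than any positional encoding: do not separate the tapes in \emph{time}, separate them in the \emph{alphabet}. Enlarge the propositions to $AP' = AP\times\{1,\dots,n\}$ and replace every atom $a_{\pi_d}$ in $\psi_{\mathcal{A}}$ by the tagged atom $(a,d)_{\pi_d}$. On a diagonal assignment $(t,\dots,t)$ each path variable $\pi_d$ now inspects its own track of $t$, so $\forall\pi_1\dots\forall\pi_n.\psi'$ is satisfiable iff some $t$ has projections $(t|_1,\dots,t|_n)\in\mathcal{L}(\mathcal{A})$, i.e.\ iff $\mathcal{A}$ is non-empty. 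The relabelling preserves both the syntactic fragment and $|\cdot|_d$, so the $\forall^*$ lower bounds for items (2) and (3) follow immediately and uniformly from the AAPA emptiness bounds --- no concatenation, no interleaving, no case split.
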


\begin{proof}
	Using \autoref{thm:aapahmuequivalence}, the three problems become interreducible to the emptiness problems for the corresponding AAPA restrictions.
	This is done in the following way:
	
	(i) An $\exists^*$ formula $\exists \pi_1 ... \exists \pi_n . \psi$ is satisfiable iff $\mathcal{A}_{\psi}$ is non-empty.
	(ii) A $\forall^*$ formula $\forall \pi_1 ... \forall \pi_n . \psi$ is satisfiable iff $\mathcal{A}_{\lnot \psi}$ is empty.
	(iii) An AAPA $\mathcal{A}$ is non-empty iff the formula $\exists \pi_1 ... \exists \pi_n . \psi_{\mathcal{A}}$ is satisfiable.
	(iv) An AAPA $\mathcal{A}$ is empty iff the formula $\forall \pi_1 ... \forall \pi_n . \lnot \psi_{\mathcal{A}}$ is satisfiable.
	
	This yields both upper and lower bounds for each of the problems using $rep_d(\cdot)$ as representation.
	The hardness result for item (1) and $rep_t(\cdot)$ can be obtained by reducing from the satisfiability problem for alternation-free HyperLTL instead.
\end{proof}

\begin{theorem}
	The satisfiability problem for the $\exists^*\forall^*$-fragment of
	\begin{enumerate}
		\item synchronous $\Hmu$ is $\EXPSPACE{}$-complete
		\item $k$-synchronous $\Hmu$ is $\EXPSPACE{}$-complete
		\item $k$-context-bounded $\Hmu$ is $\EXPSPACE{(k-2)}$-complete
	\end{enumerate}
	when using $rep_d(\cdot)$ to represent formulas.
\end{theorem}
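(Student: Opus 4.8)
The plan is to follow the reduction strategy of \autoref{thm:alternationfreesatisfiability}, turning satisfiability into an emptiness test on a synchronous automaton obtained from the approximate analyses, but with an additional layer to deal with the block of universal quantifiers. The starting point is a \emph{small model property}: if $\exists \pi_1 \dots \exists \pi_n \forall \pi_1' \dots \forall \pi_m' . \psi$ is satisfiable by a nonempty trace set $\mathcal{T}$, then it is already satisfiable by the $n$ traces $t_1,\dots,t_n \in \mathcal{T}$ witnessing the existential quantifiers. Passing to $\mathcal{T}' = \{t_1,\dots,t_n\}$ retains these witnesses, while the universal quantifiers now range over the smaller set $\mathcal{T}' \subseteq \mathcal{T}$, so every instantiation already occurred over $\mathcal{T}$ and still satisfies $\psi$. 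Hence it suffices to search for a model consisting of at most $n$ traces, encoded as a single word $w$ over $(2^{\AP})^n$.

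First I would build, for the quantifier-free matrix $\psi$ over the $n+m$ path variables, the AAPA $\mathcal{A}_{\psi}$ of \autoref{thm:aapahmuequivalence} and \autoref{thm:constructionproperties}, which is synchronous, $k$-synchronous or $k$-context-bounded according to the fragment. For each function $f : \{1,\dots,m\} \to \{1,\dots,n\}$ assigning every universally quantified variable to one of the $n$ existential traces, I define an automaton $\mathcal{D}_f$ over $(2^{\AP})^n$ that runs $\mathcal{A}_{\psi}$ but lets logical direction $n+j$ read input component $f(j)$ with its own offset counter, so that $\mathcal{D}_f$ accepts $w$ iff the assignment $\pi_i \mapsto w_i$, $\pi_j' \mapsto w_{f(j)}$ satisfies $\psi$. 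It is essential that the two read heads on $w_{f(j)}$ (one for $\pi_{f(j)}$, one for $\pi_j'$) remain \emph{independent}: a naive syntactic substitution $\pi_j'/\pi_{f(j)}$ would wrongly couple their $\bigcirc$-advances, so the aliasing has to happen at the level of directions. Since $\mathcal{D}_f$ only relabels which physical word each direction reads, it has the same states, priorities and direction profile as $\mathcal{A}_{\psi}$ and therefore inherits $k$-synchronicity resp.\ $k$-context-boundedness. The model-recognising automaton is $\mathcal{B} = \bigcap_{f} \mathcal{D}_f$ over the $n^m$ assignments, and the formula is satisfiable iff $\mathcal{B}$ is nonempty (for $n \ge 1$ this also forces $\mathcal{T} \ne \emptyset$; the degenerate case $n=0$ is the $\forall^*$ fragment of \autoref{thm:alternationfreesatisfiability}).

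For the upper bounds the order of operations is crucial. I would first apply the per-fragment synchronisation to each \emph{polynomial-size} automaton $\mathcal{D}_f$ — via \autoref{thm:aapaksynchronoustosynchronous} in the $k$-synchronous case and \autoref{cor:contextboundedtoapa} in the $k$-context-bounded case — and only then combine the results. In the synchronous and $k$-synchronous cases each matrix automaton yields an APA of polynomial resp.\ single-exponential size; intersecting $n^m$ (single-exponentially many) such APA is linear by \autoref{theorem:AAPAclosure}, so $\mathcal{B}$ stays single-exponential, and the $\PSPACE$ emptiness test of \autoref{thm:parityemptiness} gives $\EXPSPACE{}$. In the $k$-context-bounded case each $\mathcal{D}_f$ becomes an NPA of size $\mathcal{O}(g(k-1,|\varphi|_d))$; the product of $n^m$ of them has size $\big(g(k-1,\cdot)\big)^{n^m}$, but for $k \ge 2$ the single-exponential exponent $n^m$ is absorbed, leaving size $\mathcal{O}(g(k-1,\cdot))$, whose $\NLOGSPACE$ emptiness test lands in $\NSPACE(g(k-2,\cdot)) = \EXPSPACE{(k-2)}$. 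This specialises correctly at $k=2$ to $\PSPACE = \EXPSPACE{(0)}$ and explains why the extra $\forall$-block costs one exponential synchronously but is swallowed by the tower for larger context bounds.

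For the lower bounds I would inherit hardness from special cases. The $\EXPSPACE{(k-2)}$-hardness of the $k$-context-bounded fragment follows by taking $m=0$, i.e.\ the $\exists^*$ fragment, for which \autoref{thm:alternationfreesatisfiability} already gives $\EXPSPACE{(k-2)}$-hardness. For the synchronous and $k$-synchronous cases a purely existential prefix only yields $\PSPACE$-hardness, so the genuine $\EXPSPACE{}$ bound must exploit the alternation; here I would reduce from the $\EXPSPACE{}$-hard satisfiability problem of $\exists^*\forall^*$ HyperLTL, using that HyperLTL embeds into synchronous (hence $k$-synchronous) $\Hmu$. I expect the main obstacle to be exactly the complexity bookkeeping of the intersection step: one must apply the potentially $(k-1)$-fold exponential synchronisation construction to the \emph{small} per-assignment automata and combine afterwards, since intersecting first and synchronising a single-exponential automaton would spuriously add one exponential and yield $\EXPSPACE{(k-1)}$ instead of the claimed $\EXPSPACE{(k-2)}$.
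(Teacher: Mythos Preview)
Your proposal follows the same overall strategy as the paper: reduce to a small model of $n$ traces, instantiate the universal block by all $n^m$ assignments of universal to existential variables, build one matrix automaton per assignment with \emph{aliased} (not syntactically substituted) directions, synchronise each before combining, and inherit the lower bounds from $\exists^*\forall^*$ HyperLTL resp.\ the $\exists^*$ fragment. Your emphasis that the aliasing must happen at the direction level rather than by textual substitution $\pi_j'/\pi_{f(j)}$ is exactly the point the paper makes when it says that ``a direct instantiation of the universal quantifiers via substitution is possible only for synchronous $\Hmu$''.

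The one place your technical route genuinely diverges is the $k$-context-bounded upper bound. You apply \autoref{cor:contextboundedtoapa} to each per-arrangement automaton and then take the product of the resulting $n^m$ NPAs. The paper instead stays at the level of the intermediate structure $\mathcal{S}(a,g)$ from the context-bounded analysis: it gives each arrangement $a_i$ its own guess parameter $g_i$, conjoins the $n^m$ structures into a single APA $\mathcal{S}(g_1,\dots,g_{n^m})$, performs \emph{one} alternation removal, and only afterwards takes a nondeterministic union over all guess tuples $(g_1,\dots,g_{n^m})\in G^{n^m}$. Both orderings yield an NPA of size $2^{\,\Theta(n^m\cdot g(k-2,|\varphi|_d))}$, so the complexity outcome is the same; your product-of-NPAs route is arguably more modular, while the paper's route makes explicit that different arrangements may need different guesses. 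For the $k$-synchronous case you intersect APAs obtained from \autoref{thm:aapaksynchronoustosynchronous}, whereas the paper describes a bespoke variant of that construction with extra markers on shared directions; again the effect and the size bound coincide.
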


\begin{proof}
	For the upper bounds, we adapt the idea from \cite{Finkbeiner2016} to test satisfiability of HyperLTL formulas only with a minimal set of traces: the set of traces chosen for the existential quantifiers.
	In such a set, the universal quantifiers can be instantiated in every possible combination and thus eliminated.
	More specifically, a HyperLTL formula $\exists \pi_1 ... \exists \pi_n \forall \pi_1' ... \forall \pi_m' . \psi$ is transformed into the equisatisfiable formula $\exists \pi_1 ... \exists \pi_n . \bigwedge_{j_1=1}^n ... \bigwedge_{j_m=1}^n \psi[\pi_{j_1}/\pi_1']...[\pi_{j_m}/\pi_m']$, which can then be tested for satisfiability using the same method as for $\exists^*$ formulas.
	
	However, in our setting, a direct instantiation of the universal quantifiers via substitution is possible only for synchronous $\Hmu$ formulas since tests for atomic propositions can occur with different offsets otherwise.
	For $k$-synchronous and $k$-context-bounded formulas, we have to incorporate the conjunctive test for each of these \textit{arrangements}, i.e. each set of substitutions, directly into our analysis of the corresponding automata.
	
	In the case of $k$-synchronous formulas $\exists \pi_1 ... \exists \pi_n \forall \pi_1' ... \forall \pi_m'. \psi$, we first create $n^m$ copies of $\mathcal{A}_{\psi}$ on $n+m$ input words, one for each arrangement.
	Then, for each arrangement $a$, we transform one copy of the AAPA into an APA $\mathcal{A}_{\psi}(a)$ with $n$ input directions and size $\mathcal{O}(|\psi|_d^2 \cdot |\Sigma|^{n \cdot k})$.
	This is done using a variation of the procedure from \autoref{thm:aapaksynchronoustosynchronous} which eliminates $m$ input directions and substitutes the corresponding checks according to the arrangement $a$.
	When we substitute a path $\pi_i$ with a path $\pi_j$, we have to make sure that all moves that were previously performed on direction $j$ are now performed on direction $i$ without manipulating the moves that were previously made on $i$.
	Thus, we introduce a second input marker for $\pi_j$ on direction $i$ in the $k \cdot n$ window from the proof of \autoref{thm:aapaksynchronoustosynchronous}.
	It is advanced whenever a symbol from direction $j$ is read and does not affect the other markers on the same direction.
	These additional $m$ markers do not asymptotically increase the size of the construction, thus we obtain the previously mentioned size.
	To perform the satisfiability test for the original formula, we now perform an emptiness test on the APA that makes a conjunctive move to $\mathcal{A}_{\psi}(a)$ for all arrangements $a$.
	Since there are $n^m$ different arrangements, this APA has size $\mathcal{O}(n^m \cdot |\psi|_d^2 \cdot |\Sigma|^{k \cdot n})$ and the test can be performed in $\EXPSPACE{}$.
	
	For a $k$-context-bounded analysis, we also have to incorporate the conjunctive test for all arrangements into the analysis.
	We first construct the structure $\mathcal{S}(g)$ for each guess $g$ without identifying any directions. 
	We combine this structure with an arrangement $a$ by replacing the test from \autoref{lem:contextboundedtest} with  $\bigwedge_{d'/d \in a} \bigwedge \{g' \in F_{d'}(\{q_0,g\}) | q_0 \in Q_0\}$ for all $d$ and some $Q_0$ with $Q_0 \models \rho_0$.
	Here, we use $d'/d \in a$ to denote that in the arrangement $a$, $d'$ is substituted by $d$.
	When using this notation, we assume that every arrangement contains the substitution $d/d$ such that the original paths are still considered.
	This integration of $a$ into $\mathcal{S}(g)$ to obtain $\mathcal{S}(a,g)$ does not increase its size beyond $\mathcal{O}(g(k-2,|\psi|_d))$ asymptotically.
	When combining the tests for all arrangements $a_1, \dots, a_{n^m}$, we have to keep in mind that each test can use a different guess.
	Thus, we construct APA $\mathcal{S}(g_1,...,g_{n^m})$ that are parameterised in guesses $g_1,...,g_{n^m}$ and conjunctively move into $\mathcal{S}(a_i,g_i)$. 
	Since they consist of $n^m$ APA of size $\mathcal{O}(g(k-2,|\psi|_d))$ and an initial state, they asymptotically have size $\mathcal{O}(g(k-2,|\psi|_d))$ as well.
	We remove alternation from these parameterised APA to obtain NPA $\mathcal{S}'(g_1,...,g_{n^m})$ of size $\mathcal{O}(g(k-1,|\psi|_d))$.
	The final NPA, which we test for emptiness to solve the satisfiability problem, nondeterministically guesses $g_1$ to $g_{n^m}$ and moves to $\mathcal{S}'(g_1,...,g_{n^m})$.
	Since there are $|G| = \mathcal{O}(g(k-1,|\psi|_d))$ possible guesses, we have $|G|^{n^m} = \mathcal{O}(g(k-1,|\psi|_d))$ possible combinations of guesses. 
	Thus, the final NPA also has an asymptotical size of $\mathcal{O}(g(k-1,|\psi|_d))$.
	This yields an emptiness test in $\EXPSPACE{(k-2)}$.
	
	For the lower bound, we need two different reductions.
	A reduction from $\exists^* \forall^*$ HyperLTL satisfiability yields the first and second lower bound.
	The third one is obtained from the fact that a $k$-context-bounded $\exists^*$ formula is especially an $\exists^* \forall^*$ formula.
\end{proof}
\section{Conclusion}\label{sec:conclusion}

In this paper, we introduced Alternating Asynchronous Parity Automata (AAPA) and the novel fixpoint logic $\Hmu$ as tools for the analysis of asynchronous hyperproperties.
We showed the most interesting decision problems for both models to be highly undecidable in general, but exhibited families of increasingly precise under- and overapproximations for both AAPA and $\Hmu$ and presented asymptotically optimal algorithms for most corresponding decision problems.
We also identified syntactic fragments where these analyses yield precise results.

Several questions remain for future work.
Firstly, while we have established an equivalence between AAPA and $\Hmu$ formulas over fixed path assignments, an interesting question is whether there is a natural model of tree automata possibly extending AAPA and equivalent to the full logic with quantifiers, analogous to the correspondence between the modal $\mu$-calculus and Alternating Parity Tree Automata \cite{Emerson1991}.
This could possibly lead to a more direct automata-theoretic approach to $\Hmu$ model checking.
Secondly, it would be interesting to identify further approximate analyses and corresponding decidable fragments.

\begin{acks}
	This work was partially funded by DFG project Model-Checking of Navigation Logics (MoNaLog) (MU 1508/3).
	We thank the reviewers for their helpful comments and Roland Meyer and Sören van der Wall for valuable discussions.
	We also thank Laura Bozzelli for providing us with an extended version of \cite{Bozzelli2007}.
\end{acks}
\bibliography{sections/conclusion/citations}
\clearpage
\appendix

\section{Missing proofs from Section \ref{sec:automata}}






\subsection{Recursion Theory of AAPA}

We briefly outline some elementary notions of recursion theory and the theory of analytic sets. We refer the reader to \cite{Rogers1967} for a thorough introduction.

A $2$-tape Büchi automaton is a sextuple $\mathcal{T} = (K, \Sigma_1, \Sigma_2, \Delta, q_0, F)$ where $K$ is a finite set of states, $\Sigma_1, \Sigma_2$ are finite alphabets, $\Delta$ is a finite subset of $K \times \Sigma_1^* \times \Sigma_2^* \times K$, $q_0$ is the initial state and $F \subseteq K$ is the set of final states.
A computation $\mathcal{C}$ of $\mathcal{T}$ is an infinite sequence of transitions $(q_0, u_1, v_1, q_1)(q_1, u_2, v_2, q_2) \dots$.
A computation is accepting if a state $q \in F$ is visited infinitely often. 
The input word then is $u = u_1 u_2 \dots$ and the output word is $v = v_1 v_2 \dots$.
The infinitary rational relation $\mathcal{R}(\mathcal{T}) \subseteq \Sigma_1^{\omega} \times \Sigma_2^{\omega}$ accepted by $\mathcal{T}$ is the set of tuples $(u, v)$ for which there is an accepting computation of $\mathcal{T}$. A 2-tape Büchi automaton can be considered a NAPA in which transitions are allowed to depend on input words and emit output words instead of single symbols only.

Let $\Sigma_0^1 = \Pi_0^1$ be the set of formulas of second order arithmetic with no set quantifiers. A formula in the language of second order arithmetic is $\Sigma_{n+1}^1$ if it is logically equivalent to a formula of the form $\exists X_1 \dots \exists X_n \psi$ where $\psi$ is $\Pi_{n}^1$ and $\Pi_{n+1}^1$ if it is logically equivalent to a formula of the form $\forall X_1 \dots \forall X_n \psi$ where $\psi$ is $\Sigma_{n}^1$. As usual, capital notation for variables indicates that they are second order variables. A set of natural numbers is said to be $\Sigma_n^1$ (resp. $\Pi_n^1$) if there is a $\Sigma_n^1$ (resp. $\Pi_n^1$) formula defining it. 
Given two sets $A, B \subseteq \mathbb{N}$, we say that $A$ is $1$-reducible to $B$ (written $A \leq_1 B$) if there is a total (i), computable (ii) and injective (iii) function $f: \mathbb{N} \rightarrow \mathbb{N}$ such that $A = f^{-1}(B)$ (iv). 
A set $A \subseteq \mathbb{N}$ of natural numbers is called $\Sigma_n^1$-hard (resp. $\Pi_n^1$-hard) if every $\Sigma_n^1$ (resp.  $\Pi_n^1$) set $B$, $B \leq_1 A$ holds. $A$ is called $\Sigma_n^1$-complete (resp. $\Pi_n^1$-complete) if $A$ is $\Sigma_n^1$-hard (resp. $\Pi_n^1$-hard) and $A$ is a $\Sigma_n^1$ (resp. $\Pi_n^1$) set. 
If $A$ is $\Sigma_n^1$-hard, then $\overline{A}$ is $\Pi_n^1$-hard and vice versa.
We will make use of the following fact:

\begin{proposition}[\cite{Finkel2009}]
For two-tape Büchi automata, the inclusion problem, i.e. the language $\mathcal{L} = \{(\mathcal{T}, \mathcal{T}') \mid \mathcal{R}(\mathcal{T}) \subseteq \mathcal{R}(\mathcal{T}')\}$ is $\Pi_2^1$-complete. Thus, the complement, $\overline{\mathcal{L}}$, is $\Sigma_2^1$-complete.
\end{proposition}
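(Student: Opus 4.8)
The plan is to establish the two directions of $\Pi_2^1$-completeness separately, after which the claim about $\overline{\mathcal{L}}$ is immediate from the recorded fact that a set is $\Pi_2^1$ iff its complement is $\Sigma_2^1$, and that $\Pi_2^1$-hardness of a set yields $\Sigma_2^1$-hardness of its complement. So it suffices to show (a) $\mathcal{L}\in\Pi_2^1$ and (b) $\mathcal{L}$ is $\Pi_2^1$-hard.

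For the upper bound (a), I would write \emph{non-}inclusion as a $\Sigma_2^1$ predicate and read off that $\mathcal{L}$ is $\Pi_2^1$. The key observation is that an accepting computation $C$ of a 2-tape Büchi automaton already determines its input/output pair $(u,v)$, so there is no need to quantify over $(u,v)$ on top of quantifying over runs. Concretely, $(\mathcal{T},\mathcal{T}')\notin\mathcal{L}$ iff there is a witness pair in $\mathcal{R}(\mathcal{T})\setminus\mathcal{R}(\mathcal{T}')$, which unfolds to
\[
\exists C\,\forall C'\;\big[\,C\text{ is an accepting computation of }\mathcal{T},\text{ and }C'\text{ is not an accepting computation of }\mathcal{T}'\text{ with the same input/output as }C\,\big].
\]
Here $C,C'$ range over infinite sequences of transitions (i.e.\ over reals), while the bracketed matrix is \emph{arithmetical}: validity of a computation and the Büchi condition ``some final state occurs infinitely often'' are first-order over the run, and ``$C'$ has the same input/output as $C$'' is the symbol-by-symbol comparison of two infinite concatenations. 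Thus non-inclusion is $\exists C\,\forall C'\,(\text{arithmetical})$, i.e.\ $\Sigma_2^1$, and $\mathcal{L}$ is $\Pi_2^1$.

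For hardness (b), I would reduce from a known $\Pi_2^1$-complete problem about $\omega$-computations, such as the inclusion/universality problem for $\omega$-languages of nondeterministic Turing machines (established by Castro and Cucker). The natural $\Pi_2^1$ normal form to target is $e\in B \iff \forall X\,R_e(X)$, where $R_e(X)$ is the $\Sigma_1^1$ condition ``the $X$-recursive tree $T_{e,X}$ has an infinite branch''. The reduction maps $e$ to a pair $(\mathcal{T}_e,\mathcal{T}'_e)$ so that the outer universal real quantifier $\forall X$ is carried by the universal quantifier implicit in relation inclusion, while the inner $\Sigma_1^1$ witness (the infinite branch) is supplied by the guessing power of an accepting \emph{run} of $\mathcal{T}'_e$. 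Here $\mathcal{T}_e$ accepts exactly the well-formed codes $(u,v)$ in which $u$ encodes $X$ and $v$ carries a transcript certifying the successive stages of the construction of $T_{e,X}$, and $\mathcal{T}'_e$ accepts such a code iff, reading $X$ and the transcript off its tapes, it can guess along its run an infinite branch of $T_{e,X}$ --- the branch-membership checks being \emph{local, rational} comparisons against the transcript and the ``infinite branch'' requirement being enforced by the automaton's Büchi condition. With this set-up, $\mathcal{R}(\mathcal{T}_e)\subseteq\mathcal{R}(\mathcal{T}'_e)$ holds iff for every $X$ the tree $T_{e,X}$ has an infinite branch, i.e.\ iff $e\in B$. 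I would finish by checking that $e\mapsto(\mathcal{T}_e,\mathcal{T}'_e)$ is total and computable and padding the construction to make it injective, so that it is a genuine $1$-reduction.

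The main obstacle is precisely reconciling the finite memory of a 2-tape automaton with the need to verify an \emph{unbounded}, $X$-relative computation inside the matrix. This is what forces the ``transcript on the tapes'' device: because the tape content is \emph{universally} quantified through inclusion, one may let it carry a candidate transcript and verify only \emph{local} consistency of consecutive configuration blocks (a rational relation), so that arbitrarily long computations are checked for free by the universal quantifier rather than by the automaton itself. The delicate points are (i) ensuring that \emph{relation} inclusion, rather than mere projection inclusion of the first tapes, tracks the source condition, which is why the existential witness is placed in the run and the universal object in the tapes; (ii) handling malformed inputs so that inclusion is neither trivially true nor trivially false, e.g.\ by making $\mathcal{T}_e$ reject non-codes; and (iii) verifying that the Büchi acceptance of the target machine transfers faithfully to the automaton's acceptance. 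Getting these three conditions to hold simultaneously is the technical heart of the argument.
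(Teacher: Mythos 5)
First, a structural remark: the paper does not prove this proposition at all --- it is imported as a known result from Finkel \cite{Finkel2009} and used as a black box in the proof of \autoref{thm:aapasigmaone}. So your attempt can only be measured against Finkel's original argument, not against anything internal to this paper. Your upper bound is essentially right: since an accepting computation determines its input/output pair, non-inclusion has the form $\exists C\,\forall C'\,(\text{arithmetical})$, hence is $\Sigma_2^1$, and $\mathcal{L}$ is $\Pi_2^1$ (you should additionally require, arithmetically, that the concatenated input and output words are genuinely infinite, since the $u_i, v_i$ in transitions are finite and possibly empty).

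The hardness direction, however, has a genuine gap, and it sits exactly where you located the ``technical heart''. Your construction requires $\mathcal{T}_e$ to accept \emph{exactly} the well-formed codes, where well-formedness includes that $v$ is a correct stage-by-stage transcript of the construction of $T_{e,X}$. A 2-tape B\"uchi automaton has one head per tape, so comparing two consecutive unbounded configuration blocks lying \emph{on the same tape} is not within its power; contrary to your claim, ``local consistency of consecutive configuration blocks'' is a rational relation only when the two blocks can be placed under the two \emph{different} heads, which forces a shifted duplication of the transcript across both components --- and then $u$ can no longer simply be a free code of $X$, so the universal quantification over $X$ must be re-threaded through the transcript itself. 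Worse, even with such a coding, a nondeterministic B\"uchi automaton can \emph{guess the existence} of a coding fault and verify it locally, but it cannot certify the \emph{absence} of faults; the set of exactly-correct codings is therefore in general not recognizable, and your division of labor puts this exactness burden on the wrong machine.

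The standard repair inverts your arrangement: let $\mathcal{T}_e$ accept a trivial regular over-approximation (all pairs of the right outer shape), and let $\mathcal{T}'_e$ accept the union of (i) all ill-formed or faulty pairs --- which \emph{is} rational, because a fault can be guessed nondeterministically and checked by positioning the two heads at the corresponding blocks --- and (ii) well-formed pairs along whose transcript an infinite branch of $T_{e,X}$ is found, with the B\"uchi condition enforcing infinitude. Then $\mathcal{R}(\mathcal{T}_e) \subseteq \mathcal{R}(\mathcal{T}'_e)$ iff every pair is faulty or admits a branch, i.e. iff $e \in B$, and no machine ever has to recognize correctness exactly. Finkel's actual route is different again in mechanism: he starts from Castro--Cucker's $\Pi_2^1$-completeness results for $\omega$-languages of Turing machines, transfers them to real-time $1$-counter B\"uchi automata, and then simulates the counter by the \emph{distance between the two heads} of a 2-tape automaton --- so the unbounded memory you are struggling to verify on the tapes is instead realized directly in the head offset. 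Your skeleton (universal real carried by inclusion, $\Sigma_1^1$ witness carried by the run) is the right shape, but as written the left-hand machine cannot exist.
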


Using this fact, we are now ready to classify the recursion-theoretic strength of AAPA:

\begin{proof}[Proof of \autoref{thm:aapasigmaone}]
We reduce from the complement of the inclusion problem for two-tape Büchi automata $\mathcal{T}, \mathcal{T}'$.
For this purpose, we construct an AAPA $\mathcal{A}$ with two tapes.
Trivially, $\mathcal{T}$ can be converted to AAPA since the transitions depending on multiple input symbols can be simulated stepwise.
Furthermore, by \autoref{theorem:AAPAclosure}, AAPA are closed under complement.
We can thus build an AAPA accepting $\mathcal{R}(\mathcal{T})$, an AAPA accepting  $\overline{\mathcal{R}(\mathcal{T}')}$ and use conjunctive alternation to enforce that an input tuple is accepted by both automata, resulting in $\mathcal{A}$.
The reduction outlined above is a $1$-reduction since it is obviously total (i) and computable(ii), every tuple of two-tape Büchi automaton is assigned a unique AAPA (iii) and  $\mathcal{L}(\mathcal{A})$ is non-empty iff $\mathcal{R}(\mathcal{T}) \cap \overline{\mathcal{R}(\mathcal{T}')}$ is non-empty (iv). 
\end{proof}

Finally, the following well-known fact illustrates why $\Sigma_2^1$-hard problems are highly intractable and not subject to exhaustive approximation analyses:

\begin{proposition}
No $\Sigma_2^1$ hard problem is arithmetical. In particular, no $\Sigma_2^1$-hard problem is recursively enumerable or co-enumerable.
\end{proposition}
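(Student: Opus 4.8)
The plan is to derive a contradiction from the assumption that some $\Sigma_2^1$-hard set $A$ is arithmetical, exploiting the strictness of the analytical hierarchy together with the downward closure of the arithmetical sets under $1$-reductions. First I would recall two standard facts from recursion theory (see \cite{Rogers1967}): (1) every arithmetical set is $\Delta_1^1$, hence lies in $\Sigma_1^1 \cap \Pi_1^1$ -- this is Kleene's theorem placing the arithmetical sets at the bottom of the analytical hierarchy; and (2) the analytical hierarchy is proper, i.e. for each $n$ we have $\Sigma_n^1 \cup \Pi_n^1 \subsetneq \Sigma_{n+1}^1 \cap \Pi_{n+1}^1$, established by the usual universal-set-plus-diagonalisation argument. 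Instantiating (2) at level $n=1$ yields a set $U \in \Sigma_2^1$ with $U \notin \Pi_2^1$; and since a $\Sigma_1^1$ formula $\exists X\, \phi$ can be written as the $\Pi_2^1$ formula $\forall Y\, \exists X\, \phi$ with a dummy quantifier, we have $\Sigma_1^1 \subseteq \Pi_2^1$, so this $U$ is in particular not $\Sigma_1^1$. Combining with (1), $U$ is not arithmetical.

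Next I would record the closure of the arithmetical sets under $1$-reductions. If $A$ is arithmetical, say $A \in \Sigma_n^0$, and $f$ is total computable with $B = f^{-1}(A)$, then substituting $f$ -- which is arithmetically definable -- into the defining formula of $A$ shows $B \in \Sigma_n^0$, so $B$ is arithmetical as well. Thus the class of arithmetical sets is downward closed under $\leq_1$ (indeed under $\leq_m$).

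Now for the contradiction: suppose $A$ is $\Sigma_2^1$-hard and arithmetical. Since $U \in \Sigma_2^1$ and $A$ is $\Sigma_2^1$-hard, the definition of hardness gives $U \leq_1 A$. By the downward closure just noted, $U$ would then be arithmetical, contradicting the fact established above that $U$ is not arithmetical. Hence no $\Sigma_2^1$-hard set is arithmetical. For the final sentence, I would observe that the recursively enumerable sets are exactly the $\Sigma_1^0$ sets and the co-recursively-enumerable sets are exactly the $\Pi_1^0$ sets, both of which are arithmetical; so by the statement just proved, no $\Sigma_2^1$-hard problem can be recursively enumerable or co-enumerable.

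The hard part will be fact (2), the strictness of the analytical hierarchy, whose genuine proof requires constructing a universal $\Sigma_2^1$ set and diagonalising against it (from which one extracts a $\Sigma_2^1$-complete set that cannot lie in $\Pi_2^1$). Since this is classical and available in \cite{Rogers1967}, I would invoke it rather than reprove it; the remaining steps -- Kleene's placement of the arithmetical sets, the inclusion $\Sigma_1^1 \subseteq \Pi_2^1$, and closure under $\leq_1$ -- are routine substitutions into defining formulas.
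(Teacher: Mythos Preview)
Your argument is correct and follows the standard route: closure of the arithmetical sets under $\leq_1$, together with the existence of a $\Sigma_2^1$ set that is not arithmetical (obtained from the strictness of the analytical hierarchy and Kleene's placement of the arithmetical sets inside $\Delta_1^1$), yields the contradiction. The final remark on r.e.\ and co-r.e.\ sets being $\Sigma_1^0$ and $\Pi_1^0$ respectively is also fine.

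As for comparison with the paper: there is nothing to compare. The paper does not prove this proposition at all; it is stated as a ``well-known fact'' with an implicit pointer to \cite{Rogers1967} and is used only to motivate why $\Sigma_2^1$-hardness rules out exhaustive approximation. Your write-up therefore supplies strictly more than the paper does. One minor simplification you could make: rather than routing through ``$U \notin \Pi_2^1 \Rightarrow U \notin \Sigma_1^1$'', it suffices to take $U$ to be any $\Sigma_2^1$-complete set and observe directly that such a set cannot be arithmetical (else $\Sigma_2^1$ would collapse into the arithmetical hierarchy). But your version is perfectly sound as written.
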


\subsection{Proof of \autoref{lem:contextboundedtest}}

\begin{proof}
	For the first direction, assume that there is a $k$-context-bounded accepting run $T$ of $\mathcal{A}$.
	We use the accepting run on $(w_1,...,w_n)$ to construct accepting runs of $\mathcal{S}$ on $w_d$ starting in $\bigwedge \{ g' \in F_d(\{(q_0,g)\}) \mid q_0 \in Q_0 \}$, where $g$ is as well constructed from the accepting run.
	
	As the first step, we construct $g$ from the run.
	We divide the run into maximal connected sections such that each section moves forward in a single direction.
	Since $T$ is $k$-context-bounded, a path in it can switch between different sections $k-1$ times at most.
	The guess $g$ is now defined recursively.
	On the first level, we choose the set of states where $T$ enters different sections than the states entered from the states $q_0 \in Q_0$ belong to.
	These states $q$ are then each combined with a set of states where the subtree $T'$ of $T$ starting in $q$ enters a different section than $q$ belongs to.
	Repeat the process until a \textit{bottom section} is reached after at most $k-1$ steps, where the states do not have to be combined with further guesses, yielding a $g \in G$.
	
	Let $d \in \{1,...,n\}$ be an arbitrary direction.
	We show that $\mathcal{S}$ has an accepting run from $\bigwedge \{ g' \in F_d(\{(q_0,g)\}) \mid q_0 \in Q_0 \}$ on $w_d$.
	Note that due to the way $F_d$ is defined, it yields a set of states $\{q_1,...,q_m\}$ with corresponding guesses $g_1,...,g_m$ such that $T$ enters sections with direction $d$ first exactly in states $q_1,...,q_m$.
	Therefore, when reaching some $q_i$, $T$ has not moved forward in direction $d$ yet, and an accepting run from $\bigwedge \{ g' \in F_d(\{(q_0,g)\}) \mid q_0 \in Q_0 \}$ on $w_d$ consists of the conjunction of accepting runs from $q_i$ for all $i$.
	These runs are given by taking the subtree $T_i$ of $T$ starting in $q_i$ and erasing every section not belonging to direction $d$ in it.
	Whenever a section belonging to direction $d$ is disconnected from $T_i$ in this way, we reapply it at the point where the connecting sections were erased.
	After this process is completed, we insert $\true$ loops at the end of every finite path.
	This way, we indeed have a run of $\mathcal{S}$ since erasure of non-$d$ sections and reapplication of $d$-sections corresponds to conjunctive moves into $F_d(g'')$ in the definition of $\rho_S$.
	Insertion of $\true$ loops corresponds to empty sets $F_d(g'')$ and thus empty conjunctions (which are equivalent to moves to $\true$) in the definition of $\rho$.
	Also, it is indeed an accepting run.
	Infinite paths in the run that are constructed from infinite paths in $T_i$ are obtained by erasing finite subpaths.
	Therefore the parity-condition stays fulfilled.
	Paths ending in $\true$ loops are fulfilled by default.
	
	For the other direction, assume that there is a $g \in G$ such that $\mathcal{S}$ accepts $w_d$ from $\bigwedge \{ g' \in F_d(\{(q_0,g)\}) \mid q_0 \in Q_0 \}$ for all $d \in \{1,...,n\}$.
	We use all accepting runs on $w_d$ to construct a $k$-context-bounded accepting run of $\mathcal{A}$ on $(w_1,...,w_n)$.
	
	This works in exactly the opposite way multiple runs were created from a single one in the first direction.
	First we notice, that acceptance from $\bigwedge \{ g' \in F_d(\{(q_0,g)\}) \mid q_0 \in Q_0 \}$ is induced by a set of runs from $q_i$ for each $(q_i,g_i) \in \bigcup_{q_0 \in Q_0} F_d(\{(q_0,g)\})$.
	We erase $\true$ loops induced by empty $F_d(g'')$ sets.
	Next, we cut off transitions to $F_d(g'')$ and obtain even more partial accepting runs in $\mathcal{S}$.
	Given these, we reconnect them according to the guesses we have made in $g$:
	Whereever a transition to $F_d(g'')$ for some $(q',g'')$ was removed, instead transition to $q'$.
	Additionally, each labelling with $(q,g)$ is replaced with $q$.
	Then, we have indeed obtained a run of $\mathcal{A}$, since $\rho_S$ is built in a way that given a set of pairs $(q_i,g_i)$ in $g$, we have a transition $\bigwedge Q'_d \times \{g\} \land \bigwedge_i F_{\gamma(q)}(g_i)$ in $\mathcal{S}$ iff we have a transition $\bigwedge Q'_d \land \bigwedge_i q_i$ in $\mathcal{A}$.
	It is $k$-context-bounded because guesses in $g$ were only made $k-1$ levels deep and reconnection of partial runs on different directions was only performed according to the guesses.
	Also, it is indeed an accepting run since all infinite paths have infinite subpaths in one of the partial accepting runs and thus obey the parity-condition.
\end{proof}

\subsection{Reasons for restricting contexts to a single direction}

We now elaborate on the remark about $k$-context-bounded AAPA with additional synchronous steps.
We call an AAPA $\mathcal{A}$ $k$-sync-context-bounded iff it switches between sync-contexts at most $k-1$ times, where a sync-context can either be a context or a path where all directions are advanced synchronously.
More formally, we refine \autoref{definition:context}.
We call a  sequence of nodes $t_i, ..., t_j$ in a run over words $w_1, ..., w_n$ a \textit{synchronous block} if for every direction $d$, we have $c_j^d = c_i^d + 1$, i.e. every direction has been progressed by exactly one step.
A sync-context is a (possibly infinite) path $p = t_1 t_2 ...$ in a run of an AAPA over $w_1,...,w_n$ such that transitions between successive states all use the same direction and otherwise, all directions are advanced.
That means there is a $d \in M$ such that for all $i \in \{1,...,|p|\}$, either  $c_{i+1}^d = c_i^d + 1$ or otherwise, $p$ is a concatenation of synchronous blocks.
We call a run $T$ of an AAPA $k$-sync-context-bounded, if every path in $T$ switches between different sync-contexts at most $k-1$ times.
	
\begin{theorem}
The problem to decide whether there is a $k$-sync-context-bounded accepting run of an AAPA and thus the emptiness problem for $k$-sync-context-bounded AAPA is undecidable.
\end{theorem}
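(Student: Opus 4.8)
The plan is to reduce from the halting (equivalently, acceptance) problem for deterministic Turing machines, exploiting the feature—absent in the purely $k$-context-bounded setting of \autoref{thm:contextboundedcomplexity}—that a single synchronous block lets two heads march in lockstep and thus compare two \emph{unboundedly} long words position by position \emph{without counting}. Given a Turing machine $\mathcal{M}$, I would search (via the emptiness question) for a word $H = C_0 \# C_1 \# \cdots \# C_{\mathrm{halt}} \# (\cdots)^{\omega}$ encoding an accepting computation history, where all configurations $C_t$ are padded with blanks to a common length $L$ that is \emph{not} fixed in advance but chosen as large as the computation needs. The AAPA $\mathcal{A}$ I construct has two directions and requires $w_1 = w_2$ and that this common word encode such a history; using conjunctive alternation (available for AAPA, cf.\ \autoref{theorem:AAPAclosure}) it runs several independent checks, and since the sync-context bound is a per-path quantity, each conjunct may spend its own constant budget of sync-contexts.

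Concretely, $\mathcal{A}$ branches conjunctively into: (i) an \emph{equality} check that $w_1 = w_2$, realised by one synchronous context reading both heads from offset $0$ and comparing symbols; (ii) \emph{format} checks on $w_1$ (each block between markers is a well-formed configuration, $C_0$ is $\mathcal{M}$'s initial configuration, and some block is accepting), each performed inside a single direction-$1$ context since they only require scanning and finite bookkeeping, never counting; and (iii) the decisive \emph{local-consistency} check. For (iii) I would first advance the second head past the first marker—one direction-$2$ context that merely scans to the first $\#$, so no unbounded counter is needed—placing $w_2$ exactly one configuration ($L+1$ symbols) ahead of $w_1$. A single synchronous context then advances both heads in lockstep: whenever $w_1$ is at cell $j$ of $C_t$, $w_2$ is at cell $j$ of $C_{t+1}$, so by buffering the last few symbols of $w_1$ the automaton can verify, simultaneously for \emph{all} $t$ and $j$, that $C_{t+1}[j]$ is the value dictated by $\mathcal{M}$'s transition function from $C_t[j-1],C_t[j],C_t[j+1]$, and that the two markers are reached at the same step (forcing all configurations to share the length $L$). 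Every path of every branch then uses at most two sync-contexts, hence one switch, so $\mathcal{A}$ is $k$-sync-context-bounded for a fixed small $k$ (one checks $k=2$ already suffices), and $\mathcal{L}(\mathcal{A}) \neq \emptyset$ iff $\mathcal{M}$ halts.

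The main obstacle—and the crux of the whole argument—is checking consistency of successive configurations when their common length $L$ is unbounded. In the purely context-bounded model this forces a counting or nested-index comparison whose cost in context switches grows with the length being compared (as in \autoref{lem:contextboundedcomparison}), which is exactly why \autoref{thm:contextboundedcomplexity} only captures tower-bounded space and stays decidable. The synchronous block removes this barrier: maintaining the fixed one-configuration shift keeps corresponding cells aligned under lockstep for free, so configurations of arbitrary length can be compared with a constant number of sync-contexts. I would therefore spend the most care on (a) the bookkeeping at configuration boundaries, handling the marker $\#$ and the neighbours $C_t[j\pm 1]$ with a bounded buffer, and (b) arguing that letting $L$ be chosen by the word rather than fixed in advance faithfully captures unbounded-space computations while keeping the shift—and hence the alignment—constant within each candidate history; together these yield a sound and complete reduction from an undecidable problem.
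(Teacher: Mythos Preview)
Your proposal is correct and follows essentially the same approach as the paper: reduce from acceptance of a deterministic Turing machine by encoding a computation history on two equal input words and exploiting a synchronous context to compare consecutive configurations in lockstep without counting. The one noteworthy difference is organisational: you pad all configurations to a common length $L$ and perform a \emph{single} initial asynchronous shift of direction~$2$ by one block, after which one synchronous context checks all pairs $(C_t,C_{t+1})$ simultaneously; the paper instead branches conjunctively to the start of each configuration (a synchronous context), then shifts one direction to the next configuration (an asynchronous context), then compares that single pair (another synchronous context). Your variant buys a slightly smaller bound ($k=2$ sync-contexts versus the paper's $k=3$) at the price of the padding convention and the extra bookkeeping that markers stay aligned; the paper's per-pair branching avoids the uniform-length requirement and so copes directly with configurations that grow by one cell per step, but spends one more context. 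Both routes isolate the same crux you identify: the synchronous block eliminates the need for any length-dependent counting, which is precisely what keeps the purely context-bounded case decidable.
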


\begin{proof}
Let $\mathcal{M}$ be a deterministic Turing machine. For $\mathcal{M}$, we build a $3$-sync-context-bounded AAPA $\mathcal{A}$ recognizing an encoding of accepting runs of $\mathcal{M}$ with two directions as follows:

One part of $\mathcal{A}$ checks synchronously whether both directions contain the same input.
It can also check whether the sequences in the directions represent valid configurations separated by a marker, whether the first configuration is the initial configuration and whether an accepting state is eventually reached.

To check whether the sequence of configurations in the directions is constructed in accordance with the transition function of $\mathcal{M}$, we combine asynchronous with synchronous contexts in a second part of $\mathcal{A}$:
we first progress both directions to the start of a configuration via conjunctive alternation.
Then, a context switch is performed to asynchronously advance one of the directions to the next configuration.
After a second context switch, we can iterate through both configurations synchronously while checking that they satisfy the transition relation of $\mathcal{M}$.
Note that this approach of checking the transition relation does not require a size bound on the configurations of $\mathcal{M}$ since any configuration is at most one tape cell larger than its predecessor.
It also shows that only two context switches are sufficient for undecidability in the presence of synchronous contexts.
\end{proof}

\section{Missing proofs from Section \ref{sec:hypercalculus}}

\subsection{Proof of \autoref{thm:monotone}}

\begin{proof}	
	We show the claims by induction over the structure of $\psi$.
	Therefore we assume $\psi$ is given in positive normal form.
	
	Let $X$ be an arbitrary predicate, $\mathcal{V}$ be an arbitrary predicate valutaion, $\Pi$ be an arbitrary path assignment and $k \in \mathbb{N}_0 \cup \{\infty\}$.
	Let $\xi \sqsubseteq \xi'$ for $\xi,\xi' : PA \to 2^{\kvari}$
	in the following cases.
	For monotonicity, we show that $\alpha(\xi) \sqsubseteq \alpha(\xi')$ in each case.
	
	\textbf{Case 1:} $\psi = a_{\pi_i}$ or $\psi = \lnot a_{\pi_i}$ (both cases are analogous, only doing the first)
	\begin{align*}
		\alpha(\xi) &= \llbracket a_{\pi_i} \rrbracket^{\mathcal{V}[X \mapsto \xi]}_k \\
		&= \lambda \Pi .\{(j_1,...,j_n) \in \kvari | a \in L(\Pi(\pi_i)(j_i))\} \\
		&= \llbracket a_{\pi_i} \rrbracket^{\mathcal{V}[X \mapsto \xi']}_k \\
		&= \alpha(\xi')
	\end{align*}

	\textbf{Case 2:} $\psi = Y$
	\begin{itemize}
		\item Case 2.1: $X = Y$
		\begin{align*}
			\alpha(\xi) &= \llbracket Y \rrbracket^{\mathcal{V}[X \mapsto \xi]}_k \\
			&= \mathcal{V}[X \mapsto \xi](Y) \\
			&= \mathcal{V}[Y \mapsto \xi](Y) \\
			&= \xi \\
			&\sqsubseteq \xi' \\
			&= \mathcal{V}[Y \mapsto \xi'](Y) \\
			&= \mathcal{V}[X \mapsto \xi'](Y) \\
			&= \llbracket Y \rrbracket^{\mathcal{V}[X \mapsto \xi']}_k \\
			&= \alpha(\xi')
		\end{align*}

		\item Case 2.2: $X \neq Y$
		\begin{align*}
			\alpha(\xi) &= \llbracket Y \rrbracket^{\mathcal{V}[X \mapsto \xi]}_k \\
			&= \mathcal{V}[X \mapsto \xi](Y) \\
			&= \mathcal{V}[X \mapsto \xi'](Y) \\
			&= \llbracket Y \rrbracket^{\mathcal{V}[X \mapsto \xi']}_k \\
			&= \alpha(\xi')
		\end{align*}
	\end{itemize}

	\textbf{Case 3:} $\psi = \bigcirc_{\pi_i} \psi'$
	\begin{itemize}
		\item By induction hypothesis, $\alpha'(\xi) := \llbracket \psi' \rrbracket^{\mathcal{V}[X \mapsto \xi]}_k$ is monotone
		\begin{align*}
			\alpha(\xi) =& \llbracket \bigcirc_{\pi_i} \psi' \rrbracket^{\mathcal{V}[X \mapsto M]}_k \\
			=& \lambda \Pi. \{(j_1,...,j_n)\in \kvari | \\
			& (j_1,...,j_i +1,...,j_n) \in \llbracket \psi' \rrbracket^{\mathcal{V}[X \mapsto \xi]}(\Pi)\}_k \\
			=& \lambda \Pi. \{(j_1,...,j_n)\in \kvari | \\
			& (j_1,...,j_i +1,...,j_n) \in g'(\xi)(\Pi)\} \\
			\sqsubseteq& \lambda \Pi. \{(j_1,...,j_n)\in \kvari | \\
			&(j_1,...,j_i +1,...,j_n) \in g'(\xi')(\Pi)\} \\
			=& \lambda \Pi. \{(j_1,...,j_n)\in \kvari | \\
			&(j_1,...,j_i +1,...,j_n) \in \llbracket \psi' \rrbracket^{\mathcal{V}[X \mapsto \xi']}_k(\Pi)\} \\
			=& \llbracket \bigcirc_{\pi_i} \psi' \rrbracket^{\mathcal{V}[X \mapsto \xi']}_k \\
			=& \alpha(\xi')
		\end{align*}
	\end{itemize}
	
	\textbf{Case 4:} $\psi = \psi' \lor \psi''$
	\begin{itemize}
		\item By induction hypothesis, $\alpha'(\xi) := \llbracket \psi' \rrbracket^{\mathcal{V}[X \mapsto \xi]}_k$ and $\alpha''(\xi) := \llbracket \psi'' \rrbracket^{\mathcal{V}[X \mapsto \xi]}_k$ are monotone
		\begin{align*}
			\alpha(\xi) &= \llbracket \psi' \lor \psi'' \rrbracket^{\mathcal{V}[X \mapsto \xi]}_k \\
			&= \lambda \Pi. (\llbracket \psi' \rrbracket^{\mathcal{V}[X \mapsto \xi]}_k(\Pi) \cup \llbracket \psi'' \rrbracket^{\mathcal{V}[X \mapsto \xi]}_k(\Pi)) \\
			&= \llbracket \psi' \rrbracket^{\mathcal{V}[X \mapsto \xi]}_k \sqcup \llbracket \psi'' \rrbracket^{\mathcal{V}[X \mapsto \xi]}_k \\
			&\sqsubseteq \llbracket \psi' \rrbracket^{\mathcal{V}[X \mapsto \xi']}_k \sqcup \llbracket \psi'' \rrbracket^{\mathcal{V}[X \mapsto \xi']}_k \\
			&= \lambda \Pi. (\llbracket \psi' \rrbracket^{\mathcal{V}[X \mapsto \xi']}_k(\Pi) \cup \llbracket \psi'' \rrbracket^{\mathcal{V}[X \mapsto \xi']}_k(\Pi)) \\
			&= \llbracket \psi' \lor \psi'' \rrbracket^{\mathcal{V}[X \mapsto \xi']}_k \\
			&= \alpha(\xi')
		\end{align*}
	\end{itemize}

	\textbf{Case 5:} $\psi = \psi' \land \psi''$
	\begin{itemize}
		\item Monotonicity can be shown analogous to case $4$.
	\end{itemize}

	\textbf{Case 6:} $\psi = \mu Y. \psi'$ or $\psi = \nu Y. \psi'$
	\begin{itemize}
		\item Since we are assuming positive normal form, all bound path predicates are distinct. Therefore $X \neq Y$
		\item By induction hypothesis, $\alpha'(\xi) := \llbracket \psi' \rrbracket^{\mathcal{V}[Y \mapsto \xi'][X \mapsto \xi]}_k$ is monotone for all $\xi'$
		\item Doing the case for a least fixpoint, the other case is analogous
		\begin{align*}
			\alpha(\xi) &= \llbracket \mu Y . \psi' \rrbracket^{\mathcal{V}[X \mapsto \xi]}_k \\
			&= \bigsqcap \{\xi'' : PA \to 2^{\kvari} | \xi'' \sqsupseteq \llbracket \psi' \rrbracket^{\mathcal{V}[X \mapsto \xi][Y \mapsto \xi'']}_k\} \\
			&= \bigsqcap \{\xi'' : PA \to 2^{\kvari} | \xi'' \sqsupseteq \llbracket \psi' \rrbracket^{\mathcal{V}[Y \mapsto \xi''][X \mapsto \xi]}_k\} \\
			&\sqsubseteq \bigsqcap \{\xi'' : PA \to 2^{\kvari} | \xi'' \sqsupseteq \llbracket \psi' \rrbracket^{\mathcal{V}[Y \mapsto \xi''][X \mapsto \xi']}_k\} \\
			&= \bigsqcap \{\xi'' : PA \to 2^{\kvari} | \xi'' \sqsupseteq \llbracket \psi' \rrbracket^{\mathcal{V}[X \mapsto \xi'][Y \mapsto \xi'']}_k\} \\
			&= \llbracket \mu Y . \psi' \rrbracket^{\mathcal{V}[X \mapsto \xi']}_k \\
			&= \alpha(\xi')
		\end{align*}
	\end{itemize}
\end{proof}

\subsection{Proof of \autoref{cor:welldefined}}

\begin{proof}
	Both claims ensue from the fact that $(PA \to 2^{\mathbb{N}_0^n},\sqsubseteq)$ is a complete lattice, \autoref{thm:monotone} and Knaster-Tarski's fixpoint theorem.
\end{proof}

\subsection{Proof of \autoref{thm:monotonek}}

\begin{proof}
	Mostly straightforward structural induction on $\psi$ using the fact that $G_k \subseteq G_{k'}$ for $k \leq k'$.
	In the fixpoint case $\psi = \mu X. \psi_1$, we need to establish that $\beta_k(\xi) = \llbracket \psi_1 \rrbracket^{\mathcal{V}[X \mapsto \xi]}_k \sqsubseteq \beta_{k'}(\xi) = \llbracket \psi_1 \rrbracket^{\mathcal{V}[X \mapsto \xi]}_{k'}$ and therefore $lfp(\beta_k) \sqsubseteq lfp(\beta_{k'})$.
\end{proof}

\subsection{Proof of \autoref{cor:synchronoushierarchy}}

\begin{proof}
	We show that $\Pi \models_{k}^{\mathcal{K}} \varphi$ implies $\Pi \models_{k'}^{\Pi} \varphi$.
	The claim then follows immediately.
	Fix a Kripke Structure $\mathcal{K}$, a formula $\varphi$ and some $k,k'$ with $k \leq k'$.
	
	For an existential quantifier $\exists \pi . \varphi$, we have to show $\Pi[\pi \mapsto p] \models_{k}^{\mathcal{K}} \varphi$ for some $p \in Paths(\mathcal{K})$ implies $\Pi[\pi \mapsto p'] \models_{k'}^{\mathcal{K}} \varphi$ for some $p' \in Paths(\mathcal{K})$.
	Indeed, for $p = p'$, the claim follows from the induction hypothesis.
	
	The case for a universal quantifier is analogous.
	
	For a quantifier-free formula $\psi$, we have to show that $(0,...,0) \in \llbracket \psi \rrbracket^{\mathcal{V}}_{k}(\Pi)$ for some $\mathcal{V}$ implies $(0,...,0) \in \llbracket \psi \rrbracket^{\mathcal{V}'}_{k'}(\Pi)$ for some $\mathcal{V}'$.
	Indeed, by \autoref{thm:monotonek}, the claim holds for $\mathcal{V}' = \mathcal{V}$.
\end{proof}

\subsection{Formal semantics on traces}

Let $\mathcal{T}$ be a set of traces.
We call a function $\Pi: N \to \mathcal{T}$ a trace assignment and denote by $\TA$ the set of all trace assignments.
Then we use $\mathcal{V}: \chi \to \TA \to 2^{\mathbb{N}_0^n}$ to denote a predicate valuation.
Manipulations on these functions are defined as for path assignments.


We again differentiate between semantics for the two types of formulas: quantifier semantics and trace semantics.
For a quantified formula $\varphi$, we write $\mathcal{T} \models_k \varphi$ to denote that the set of traces $\mathcal{T}$ fulfills the formula $\varphi$, 
i.e. iff $\{\} \models_k^{\mathcal{T}} \varphi$ for the empty trace assignment $\{\}$.
For a quantifier-free formula $\psi$, we instead consider a semantics similar to the path semantics from \autoref{def:pathsemantics}, with the difference that we consider trace assignments instead of path assignments.

\begin{definition}[Quantifier semantics]
	\begin{align*}
	\Pi \models_k^{\mathcal{T}} \exists \pi . \varphi  \text{ iff }& \Pi[\pi \mapsto t] \models_k^{\mathcal{T}} \varphi \text{ for some } t \in \mathcal{T} \\
	\Pi \models_k^{\mathcal{T}} \forall \pi . \varphi  \text{ iff }& \Pi[\pi \mapsto t] \models_k^{\mathcal{T}} \varphi \text{ for all } t \in \mathcal{T} \\
	\Pi \models_k^{\mathcal{T}} \psi \text{ iff }& (0,...,0) \in \llbracket \psi \rrbracket_{k}^{\mathcal{V}}(\Pi) \text{ for some } \mathcal{V}
	\end{align*}
	for a quantified formula $\varphi$ and a quantifier-free formula $\psi$.
\end{definition}
\begin{definition}[Trace Semantics]
	\begin{align*}
		\llbracket a_{\pi_i} \rrbracket^{\mathcal{V}}_{k} :=& \lambda \Pi . \{(j_1,...,j_n) \in \kvari | a \in \Pi(\pi_i)(j_i)\} \\
		\llbracket X \rrbracket^{\mathcal{V}}_{k} :=& \mathcal{V}(X) \\
		\llbracket \bigcirc_{\pi_i} \varphi \rrbracket^{\mathcal{V}}_{k} :=& \lambda \Pi . \{(j_1,...,j_n) \in \kvari | \\
		&(j_1,...,j_i + 1,...,j_n) \in \llbracket \varphi \rrbracket^{\mathcal{V}}_{k}(\Pi)\} \\
		\llbracket \varphi \lor \varphi' \rrbracket^{\mathcal{V}}_{k} :=& \lambda \Pi .\llbracket \varphi \rrbracket^{\mathcal{V}}_{k}(\Pi) \cup \llbracket \varphi' \rrbracket^{\mathcal{V}}_{k}(\Pi) \\
		\llbracket \lnot \varphi \rrbracket^{\mathcal{V}}_{k} :=& \lambda \Pi . \kvari \setminus \llbracket \varphi \rrbracket^{\mathcal{V}}_{k}(\Pi) \\
		\llbracket \mu X . \varphi \rrbracket^{\mathcal{V}}_{k} :=& \bigsqcap \{\xi : TA \to 2^{\kvari} | \xi \sqsupseteq \llbracket \varphi \rrbracket^{\mathcal{V}[X \mapsto \xi]}_{k}\}
	\end{align*}
\end{definition}

\section{Missing proofs from Section \ref{sec:connection}}

\subsection{Proof of claim about well-formed valuations}

\begin{theorem}
	If $\mathcal{V}$ is a well-formed valuation, then $\llbracket \psi \rrbracket^{\mathcal{V}}_{k'}$ is well-formed, i.e. for all vectors $v,v'$ and path assignments $\Pi,\Pi'$ with $\Pi[v] = \Pi'[v']$ it holds: $v \in \llbracket \psi \rrbracket^{\mathcal{V}}_{k'} (\Pi)$ iff $v' \in \llbracket \psi \rrbracket^{\mathcal{V}}_{k'}(\Pi')$.
\end{theorem}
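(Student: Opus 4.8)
The plan is to prove the claim by structural induction on $\psi$, reading well-formedness of an arbitrary function $\xi : \PA \to 2^{\kvari}$ as the statement that membership of an index tuple in $\xi(\Pi)$ depends only on the shifted assignment $\Pi[v]$: whenever $\Pi[v] = \Pi'[v']$ we have $v \in \xi(\Pi) \iff v' \in \xi(\Pi')$. I would phrase the induction hypothesis with $\mathcal{V}$ universally quantified, i.e.\ ``$\llbracket \psi \rrbracket^{\mathcal{V}}_{k'}$ is well-formed for every well-formed $\mathcal{V}$'', so that it stays applicable to modified valuations in the fixpoint case. Throughout I assume $\psi$ in positive normal form, so that the fixpoints are well-defined in the sense of \autoref{cor:welldefined}.

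For the base cases, $\llbracket X \rrbracket^{\mathcal{V}}_{k'} = \mathcal{V}(X)$ is well-formed by assumption, and for $\psi = a_{\pi_i}$ (and $\psi = \lnot a_{\pi_i}$) I would use that $\Pi[v] = \Pi'[v']$ forces $\Pi(\pi_i)[v_i] = \Pi'(\pi_i)[v'_i]$ and hence equality of the leading states $\Pi(\pi_i)(v_i) = \Pi'(\pi_i)(v'_i)$, so the atomic test yields the same truth value. The cases $\psi = \psi_1 \lor \psi_2$ and $\psi = \lnot \psi_1$ are immediate, because union and complementation within $\kvari$ are pointwise Boolean operations that preserve a biconditional, using the induction hypotheses for $\psi_1, \psi_2$. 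The only genuinely computational case among these is $\psi = \bigcirc_{\pi_i}\psi_1$, for which I would first establish the shift-commutation lemma that $\Pi[v] = \Pi'[v']$ implies $\Pi[v + e_i] = \Pi'[v' + e_i]$ (shifting the $i$-th component one step further on both sides leaves the equality intact); combined with the induction hypothesis applied to the successor tuples this yields the claim. In the unrestricted semantics ($k' = \infty$), which is the setting in which the result is used for $\mathcal{K}$-equivalence, the successor tuple $v + e_i$ automatically remains admissible, so no side condition on the spread is needed.

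The main obstacle is the fixpoint case $\psi = \mu X . \psi_1$, the case $\nu X.\psi_1$ being analogous via the dual characterisation of the greatest fixpoint as the meet of a decreasing transfinite chain starting from $\lambda\Pi.\kvari$. Writing $\alpha(\xi) := \llbracket \psi_1 \rrbracket^{\mathcal{V}[X \mapsto \xi]}_{k'}$, the semantics equals $\bigsqcup_{\kappa} \alpha^\kappa(\bot)$ by \autoref{cor:welldefined}, and I would prove by transfinite induction on $\kappa$ that every approximant $\alpha^\kappa(\bot)$ is well-formed. The base value $\bot = \lambda\Pi.\emptyset$ is trivially well-formed. For a successor step, if $\xi$ is well-formed then $\mathcal{V}[X \mapsto \xi]$ is a well-formed valuation (as $\mathcal{V}(Y)$ is well-formed for $Y \neq X$ by assumption and $\xi$ for $X$ by hypothesis), so the structural induction hypothesis for $\psi_1$ gives that $\alpha(\xi)$ is well-formed. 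For a limit step $\lambda$, I would use that membership in $\bigsqcup_{\kappa < \lambda}\alpha^\kappa(\bot)(\Pi)$ is the existence of some $\kappa$ with $v \in \alpha^\kappa(\bot)(\Pi)$, so the biconditional transfers from each approximant to the union. Since the least fixpoint is the join of the approximants, it is well-formed, which closes the induction.

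The delicate points to get right are therefore the interplay of the two inductions — the outer structural one supplying exactly the successor step of the inner transfinite one — and the closure of the class of well-formed functions under complement, directed unions and directed intersections, which is what makes the Boolean, fixpoint and limit steps go through uniformly.
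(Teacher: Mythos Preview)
Your proposal is correct and follows essentially the same approach as the paper: a structural induction on $\psi$ with a nested transfinite induction on the approximants $\alpha^\kappa(\bot)$ in the fixpoint case, where the inner successor step uses that $\mathcal{V}[X \mapsto \alpha^\kappa(\bot)]$ is again well-formed so that the outer structural hypothesis applies to $\psi_1$. Your write-up is in fact a bit more explicit than the paper's (you spell out the universal quantification over $\mathcal{V}$ in the induction hypothesis, the shift-commutation $\Pi[v] = \Pi'[v'] \Rightarrow \Pi[v+e_i] = \Pi'[v'+e_i]$ for the $\bigcirc$ case, and you flag that the spread restriction is a non-issue at $k' = \infty$, which is the only setting in which the result is actually used).
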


\begin{proof}
	We show the claim by a structural induction on $\psi$.
	Let $v,v'$ therefore be arbitrary vectors and $\Pi,\Pi'$ be arbitrary path assignments with $\Pi[v] = \Pi'[v']$ in the following cases.
	
	\textbf{Atomic propositions:} this case follows immediately from the assumption that $\Pi[v] = \Pi'[v']$.
	
	\textbf{Predicates:} this case follows immediately from the assumption that $\mathcal{V}$ is well-formed.
	
	\textbf{Next:} this case follows from the assumption that $\Pi[v] = \Pi'[v']$ and the induction hypothesis.
	
	\textbf{Boolean connecitves:} this case follows immediately from the assumption that $\Pi[v] = \Pi'[v']$ and the induction hypothesis.
	
	\textbf{Fixpoints:} we use the approximant characterisation $\bigsqcup_{\kappa \geq 0} \alpha^\kappa (\bot)$ for $\psi$ and show the claim by a transfinite induction on $\kappa$.
	We thus use $(IH1)$ for the structural induction's hypothesis and $(IH2)$ for the transfinite induction's hypothesis.

	In the \textit{base case} of the transfinite induction, $\kappa = 0$, we trivially have $v \in \alpha^0(\bot)(\Pi) = \emptyset$ iff $v' \in \alpha^0(\bot)(\Pi') = \emptyset$.
	
	In the \textit{inductive step} of the transfinite induction, $\kappa \mapsto \kappa+1$, we have $v \in \alpha^{\kappa+1}(\bot)(\Pi) = \llbracket \psi \rrbracket_{k}^{\mathcal{V}[X \mapsto \alpha^k(\bot)]}(\Pi)$ and show that $v' \in \alpha^{\kappa+1}(\bot)(\Pi') = \llbracket \psi \rrbracket_{k}^{\mathcal{V}[X \mapsto \alpha^k(\bot)]}(\Pi')$.
	Therefore notice that $(IH2)$ implies that $\mathcal{V}[X \mapsto \alpha^\kappa(\bot)]$ is a well-formed valuation.
	Then the claim follows immediately from $(IH1)$.
	
	In the \textit{limit case} of the transfinite induction, $\kappa < \lambda \mapsto \lambda$, let $v \in \alpha^{\lambda}(\bot)(\Pi) = \bigcup_{\kappa < \lambda} \alpha^\kappa(\bot)(\Pi)$.
	Therefore there is a $\kappa < \lambda$ such that $v \in \alpha^\kappa(\bot)(\Pi)$.
	The induction hypothesis $(IH2)$ then implies that $v' \in \alpha^\kappa(\bot)(\Pi')$ and thus $v' \in \alpha^{\lambda}(\bot)(\Pi')$.
\end{proof}

If we now consider starting from a trivially well-formed valuation like $\lambda X . \bot$, then we can see from the transfinite induction in the fixpoint case of the proof, that only well-formed valuations occur in fixpoint iterations of $\Hmu$.

\subsection{Proof of \autoref{thm:aapahmuequivalence} part 1}

In this section, we prove that the construction presented in \autoref{subsec:hmutoaapa} indeed results in an automaton that is $\mathcal{K}$-equivalent to $\psi$.

\begin{proof}
	By structural induction over the structure of $\psi$.
	Let $v = (v_1,...,v_n) \in \mathbb{N}_0^n$, let $\Pi$ with $\Pi(\pi_i) = s_i^0 s_i^1 ...$ be an arbitrary path assignment and let $\mathcal{V}$ an arbitrary predicate valuation in the following cases.
	
	\textbf{Atomic propositions:}
	Doing the case for $a_{\pi_i}$, the negated case is analogous.
	It is easy to see that $\mathcal{L}(\mathcal{A}_{a_{\pi_i}}[X_1: \mathcal{L}(\mathcal{V}(X_1)(\Pi)),...,X_m:\mathcal{L}(\mathcal{V}(X_m)(\Pi))]) = \mathcal{L}(\mathcal{A}_{a_{\pi_i}})$.
	
	Let $v \in \llbracket a_{\pi_i} \rrbracket^{\mathcal{V}}(\Pi)$.
	By the definition of semantics we have $a \in L(\Pi(\pi_i)(v_i))$, thus for the first symbol $s_i^{v_i}$ of $w_{\Pi}[v]$ in direction $i$ we have $a \in L(s_i^{v_i})$, which induces an accepting run of $\mathcal{A}_{a_{\pi_i}}$.
	Therefore we have $w_{\Pi}[v] \in \mathcal{L}(\mathcal{A}_{a_{\pi_i}})$.
	
	Let $w_{\Pi}[v] \in \mathcal{L}(\mathcal{A}_{a_{\pi_i}})$.
	By construction, the accepting run of $\mathcal{A}_{a_{\pi_i}}$ on $w_{\Pi}[v]$ has to move to $(tt)$ with the first symbol read in direction $i$, which implies $a \in L(s_i^{v_i})$ for the first symbol $s_i^{v_i}$ of $w_{\Pi}[v]$ in direction $i$.
	By the definition of $w_{\Pi}$ we then have $a \in L(\Pi(\pi_i)(v_i))$, which by the definition of semantics directly implies $v \in \llbracket a_{\pi_i} \rrbracket^{\mathcal{V}}(\Pi)$.
	
	\textbf{Predicates:}
	We have $\mathcal{L}(\mathcal{A}_{X_i}[X_1: \mathcal{L}(\mathcal{V}(X_1)(\Pi)),...,X_m:\mathcal{L}(\mathcal{V}(X_m)(\Pi))]) = \mathcal{L}(\mathcal{A}_{X_i}[X_i : \mathcal{L}(\mathcal{V}(X_i)(\Pi))]) = \mathcal{L}(\mathcal{V}(X_i)(\Pi))$ since $\mathcal{A}_{X_i}$ only consists of the hole $X_i$.
	
	Let $v \in \llbracket X_i \rrbracket^{\mathcal{V}}(\Pi)$.
	By the definition of semantics, we have $v \in \mathcal{V}(X_i)(\Pi)$, which directly implies 
	$w_{\Pi}[v] \in \mathcal{L}(\mathcal{V}(X_i)(\Pi))$.
	
	Let $w_{\Pi}[v] \in \mathcal{L}(\mathcal{V}(X_i)(\Pi))$.
	By the definition of $w_{\Pi}$, this implies $v \in \mathcal{V}(X_i)(\Pi)$, which in turn implies $v \in \llbracket X_i \rrbracket^{\mathcal{V}}(\Pi)$.
	
	\textbf{Boolean connectives:}
	Doing the case for disjunction, other case is analogous.
	We divide the free variables $X_1,...,X_m$ into two sets $Y_1,...,Y_{m_1}$ and $Z_1,...,Z_{m_2}$ (which may be non-disjoint) such that $Y_1,...,Y_{m_1}$ are the free variables in $\psi_1$ and $Z_1,...,Z_{m_2}$ are the free variables in $\psi_2$.
	By construction, we have $\mathcal{L}(\mathcal{A}_{\psi_1 \lor \psi_2})[X_1: \mathcal{L}(\mathcal{V}(X_1)(\Pi)),...,X_m:\mathcal{L}(\mathcal{V}(X_m)(\Pi))] = \mathcal{L}(\mathcal{A}_{\psi_1})[Y_1:\mathcal{L}(\mathcal{V}(Y_1)(\Pi)),...,Y_{m_1}:\mathcal{L}(\mathcal{V}(Y_{m_1})(\Pi))] \cup \mathcal{L}(\mathcal{A}_{\psi_2})[Z_1:\mathcal{L}(\mathcal{V}(Z_1)(\Pi)),...,Z_{m_2}:\mathcal{L}(\mathcal{V}(Z_{m_2})(\Pi))]$.
	Then both directions of the claim ensue from the induction hypothesis.
	
	\textbf{Next:}
	Handling a formula of the form $\bigcirc_{\pi_i} \psi_1$.
	Note that $X_1,...,X_m$ are exactly the free variables of $\psi_1$ as well.
	By construction we have $w[v] \in \mathcal{L}(\mathcal{A}_{\bigcirc_{\pi_i} \psi_1}[X_1: \mathcal{L}(\mathcal{V}(X_1)(\Pi)),...,X_m:\mathcal{L}(\mathcal{V}(X_m)(\Pi))])$ iff $w[v + e_i] \in \mathcal{L}(\mathcal{A}_{\psi_1}[X_1: \mathcal{L}(\mathcal{V}(X_1)(\Pi)),...,X_m:\mathcal{L}(\mathcal{V}(X_m)(\Pi))])$ for all words $w$ and the unit vector $e_i$ with $1$ in component $i$.
	Then both directions of the claim ensue directly from the induction hypothesis.

	\textbf{Fixpoint expressions:}
	Doing the case for a $\mu$ formula $\mu X. \psi_1$, the other case is analogous.
	Since we assume that every path predicate is bound by a unique fixpoint expression, we have $X \in free(\psi_1)$ and $X \not\in free(\mu X. \psi_1)$.
	Note that the priority of state $(X)$ is an odd strict lower bound on the priorities in $\mathcal{A}_{\mu X. \psi_1}$.
	Thus, every path of an accepting run of $\mathcal{A}_{\mu X. \psi_1}$ can only visit $(X)$ finitely often.
	Also, by construction, any path of a run visiting $(X)$ at some point must then proceed from the start of the automaton in the next step.
	Therefore, $\mathcal{L}(\mathcal{A}_{\mu X. \psi_1}[X_1: \mathcal{L}(\mathcal{V}(X_1)(\Pi)),...,X_m:\mathcal{L}(\mathcal{V}(X_m)(\Pi))])$ can be characterised as the least fixpoint of the function $f: \mathcal{L} \mapsto \mathcal{L}(\mathcal{A}_{\psi_1}[X_1: \mathcal{L}(\mathcal{V}(X_1)(\Pi)),...,X_m:\mathcal{L}(\mathcal{V}(X_m)(\Pi)),X: \mathcal{L}])$, or by a union of its approximants $\bigcup_{\kappa \geq 0} f^\kappa(\emptyset)$ where $f^0(\mathcal{L}) = \mathcal{L}$, $f^{\kappa+1}(\mathcal{L}) = f(f^\kappa(\mathcal{L}))$ and $f^{\lambda}(\mathcal{L}) = \bigcup_{\kappa < \lambda} f^\kappa(\mathcal{L})$ for ordinals $\kappa$ and limit ordinals $\lambda$. 
	On the other hand, $\llbracket \mu X. \psi_1 \rrbracket^{\mathcal{V}}$ is the least fixpoint of the function $\alpha: \xi \mapsto \llbracket \psi_1 \rrbracket^{\mathcal{V}[X \mapsto \xi]}$ and can be characterised as a union of its approximants $\bigsqcup_{\kappa \geq 0} \alpha^\kappa(\bot)$ where $\alpha^0(\xi) = \xi$, $\alpha^{\kappa+1}(\xi) = \alpha(\alpha^\kappa(\xi))$ and $\alpha^{\lambda}(\xi) = \bigsqcup_{\kappa < \lambda} \alpha^\kappa(\xi)$ by \autoref{cor:welldefined}.
	We now show that $v \in \alpha^\kappa(\bot)(\Pi)$ iff $w_{\Pi}[v] \in f^\kappa(\emptyset)$ (*) for all ordinals $\kappa \geq 1$, which establishes the theorem for this case.
	Indeed, for arbitrary $\kappa$, this ensues directly from the induction hypothesis and the following claim:
	
	\textbf{Claim:} $w_{\Pi}[v] \in \mathcal{L}(\mathcal{A}_{\psi_1}[X_1 : \mathcal{L}(\mathcal{V}[X \mapsto \alpha^\kappa(\bot)](X_1)(\Pi)),...,X_m : \mathcal{L}(\mathcal{V}[X \mapsto \alpha^\kappa(\bot)](X_m)(\Pi))])$ iff $w_{\Pi}[v] \in  \mathcal{L}(\mathcal{A}_{\psi_1}[X_1 : \mathcal{L}(\mathcal{V}(X_1)(\Pi)),...,X_m : \mathcal{L}(\mathcal{V}(X_m)(\Pi)), X: f^\kappa(\emptyset)])$. 
	
	
	We prove the claim by a transfinite induction on $\kappa$.
	To avoid confusion, we will denote the structural induction's hypothesis by $(IH1)$ and this induction's hypothesis by $(IH2)$.
	Also, since (*) for some $\kappa$ follows directly from the claim for the same $\kappa$, we can use this as $(IH3)$.
	
	For the \textbf{base case} $\kappa = 0$, let $w_{\Pi}[v] \in \mathcal{L}(\mathcal{A}_{\psi_1}[X_1 : \mathcal{L}(\mathcal{V}[X \mapsto \bot](X_1)(\Pi)),...,X_m : \mathcal{L}(\mathcal{V}[X \mapsto \bot](X_m)(\Pi))])$.
	Since $\mathcal{V}[X \mapsto \bot](X)(\Pi) = \emptyset$, we can exchange $\mathcal{L}(\mathcal{V}[X \mapsto \bot](X)(\Pi))$ with $\emptyset$ in the substitutions.
	Furthermore, since $\mathcal{L}(\mathcal{V}[X \mapsto \bot](X_i)(\Pi))$ does not depend on $\mathcal{V}[X \mapsto \emptyset](X)$ for $X_i \neq X$, we can exchange the predicate valuations without changing language containment of $w_{\Pi}[v]$.
	
	On the other hand let $w_{\Pi}[v] \in \mathcal{L}(\mathcal{A}_{\psi_1}[X_1 : \mathcal{L}(\mathcal{V}(X_1)(\Pi)),...,X_m : \mathcal{L}(\mathcal{V}(X_m)(\Pi)), X : \emptyset])$.
	With the same arguments as before, we can exchange $\mathcal{V}$ with $\mathcal{V}[X \mapsto \bot]$ and $\emptyset$ with $\mathcal{L}(\mathcal{V}[X \mapsto \bot](X)(\Pi))$ in the substitutions without changing language containment of $w_{\Pi}[v]$, immediately giving us the desired result.
	
	In the \textbf{inductive step} $\kappa \mapsto \kappa+1$, let $w_{\Pi}[v] \in \mathcal{L}(\mathcal{A}_{\psi_1}[X_1 : \mathcal{L}(\mathcal{V}[X \mapsto \alpha^{\kappa+1}(\bot)](X_1)(\Pi)),...,X_m : \mathcal{L}(\mathcal{V}[X \mapsto \alpha^{\kappa+1}(\bot)](X_m)(\Pi))])$.
	We show that the accepting run is also an accepting run of $\mathcal{A}_{\psi_1}[X_1 : \mathcal{L}(\mathcal{V}(X_1)(\Pi)),...,X_m : \mathcal{L}(\mathcal{V}(X_m)(\Pi)), X : f^{\kappa+1}(\emptyset)]$ on $w_{\Pi}[v]$.
	First notice that the run being an accepting run only depends on $\mathcal{V}[X \mapsto \alpha^{\kappa+1}(\bot)](X)$ in states $(X)$, thus replacing the predicate valuation with $\mathcal{V}$ will result in the same behaviour up to states $(X)$.
	Secondly, notice that for all $v' \in \mathbb{N}_0^n$, when state $(X)$ is reached with directions according to $v'$, then $w_{\Pi}[v+v'] \in f^{\kappa+1}(\emptyset)$.
	Therefore let $v'$ be an arbitrary vector such that $(X)$ is reached with directions according to $v'$.
	By definition of $\mathcal{L}(\mathcal{V}[X \mapsto \alpha^{\kappa+1}(\bot)](X)(\Pi))$, we then have $v+v' \in \alpha^{\kappa+1}(\bot)(\Pi) = \alpha(\alpha^\kappa(\bot))(\Pi) = \llbracket \psi_1 \rrbracket^{\mathcal{V}[X \mapsto \alpha^\kappa(\bot)]}(\Pi)$.
	Using $(IH1)$, we get that $w_{\Pi}[v+v'] \in \mathcal{L}(\mathcal{A}_{\psi_1}[X_1 : \mathcal{L}(\mathcal{V}[X \mapsto \alpha^\kappa(\bot)](X_1)(\Pi)),...,X_m : \mathcal{L}(\mathcal{V}[X \mapsto \alpha^\kappa(\bot)](X_m)(\Pi))])$.
	Now $(IH2)$ applies and we have $w_{\Pi}[v+v'] \in \mathcal{L}(\mathcal{A}_{\psi_1}[X_1 : \mathcal{L}(\mathcal{V}(X_1)(\Pi)),...,X_m : \mathcal{L}(\mathcal{V}(X_m)(\Pi)), X : f^\kappa(\emptyset)]) = f^{\kappa+1}(\emptyset)$.
	Thus, we have an accepting run of $\mathcal{A}_{\psi_1}[X_1 : \mathcal{L}(\mathcal{V}(X_1)(\Pi)),...,X_m : \mathcal{L}(\mathcal{V}(X_m)(\Pi)), X : f^{\kappa+1}(\emptyset)]$ on $w_{\Pi}[v]$, which implies $w_{\Pi}[v] \in \mathcal{L}(\mathcal{A}_{\psi_1}[X_1 : \mathcal{L}(\mathcal{V}(X_1)(\Pi)),...,X_m : \mathcal{L}(\mathcal{V}(X_m)(\Pi)), X: f^{\kappa+1}(\emptyset)])$.
	
	On the other hand assume that $w_{\Pi}[v] \in \mathcal{L}(\mathcal{A}_{\psi_1}[X_1 : \mathcal{L}(\mathcal{V}(X_1)(\Pi)),...,X_m : \mathcal{L}(\mathcal{V}(X_m)(\Pi)), X: f^{\kappa+1}(\emptyset)])$.
	Again, we show that the accepting run is also an accepting run of $\mathcal{A}_{\psi_1}[X_1 : \mathcal{L}(\mathcal{V}[X \mapsto \alpha^{\kappa+1}(\bot)](X_1)(\Pi)),...,X_m : \mathcal{L}(\mathcal{V}[X \mapsto \alpha^{\kappa+1}(\bot)](X_m)(\Pi))]$ on $w_{\Pi}[v]$.
	We notice, that replacing $\mathcal{V}$ with $\mathcal{V}[X \mapsto \alpha^{\kappa+1}(\bot)]$ will result in the same behaviour up to states $(X)$.
	Next we notice, that for all vectors $v' \in \mathbb{N}_0^n$, when state $(X)$ is reached according to directions in $v'$ in the accepting run, then $w_{\Pi}[v+v'] \in \mathcal{L}(\mathcal{V}[X \mapsto \alpha^{\kappa+1}(\bot)](X)(\Pi))$.
	Therefore let $v'$ be an arbitrary vector such that $(X)$ is reached according to directions in $v'$.
	Due to the substitutions, this implies $w_{\Pi}[v+v'] \in f^{\kappa+1}(\emptyset) = \mathcal{L}(\mathcal{A}_{\psi_1}[X_1 : \mathcal{L}(\mathcal{V}(X_1)(\Pi)),...,X_m : \mathcal{L}(\mathcal{V}(X_m)(\Pi)), X : f^\kappa(\emptyset)])$.
	Now $(IH2)$ applies and we have $w_{\Pi}[v+v'] \in \mathcal{L}(\mathcal{A}_{\psi_1}[X_1 : \mathcal{L}(\mathcal{V}[X \mapsto \alpha^\kappa(\bot)](X_1)(\Pi)),...,X_m : \mathcal{L}(\mathcal{V}[X \mapsto \alpha^\kappa(\bot)](X_m)(\Pi))])$.
	Using $(IH1)$, we then get $v+v' \in \llbracket \psi_1 \rrbracket^{\mathcal{V}[X \mapsto \alpha^\kappa(\bot)]}(\Pi) = \alpha(\alpha^\kappa(\bot))(\Pi) = \alpha^{\kappa+1}(\bot)(\Pi)$ and thus $w_{\Pi}[v+v'] \in \mathcal{L}(\mathcal{V}[X \mapsto \alpha^{\kappa+1(\bot)}](X)(\Pi))$.
	These two facts imply that the run also witnesses $w_{\Pi}[v] \in \mathcal{L}(\mathcal{A}_{\psi_1}[X_1 : \mathcal{L}(\mathcal{V}[X \mapsto \alpha^{\kappa+1}(\bot)](X_1)(\Pi)),...,X_m : \mathcal{L}(\mathcal{V}[X \mapsto \alpha^{\kappa+1}(\bot)](X_m)(\Pi))])$.
	
	For the \textbf{limit case} $\kappa < \lambda \mapsto \lambda$, let $w_{\Pi}[v] \in \mathcal{L}(\mathcal{A}_{\psi_1}[X_1 : \mathcal{L}(\mathcal{V}[X \mapsto \alpha^{\lambda}(\bot)](X_1)(\Pi)),...,X_m : \mathcal{L}(\mathcal{V}[X \mapsto \alpha^{\lambda}(\bot)](X_m)(\Pi))])$.
	Just as in the inductive step, we have to show two claims: (i) replacing $\mathcal{V}[X \mapsto \alpha^{\lambda}(\bot)]$ with $\mathcal{V}$ results in the same behaviour up to states $(X)$ and (ii) for all $v' \in \mathbb{N}_0^n$, if $(X)$ is reached according to directions $v'$, then $w_{\Pi}[v + v'] \in f^{\lambda}(\emptyset)$.
	While claim (i) is trivial, we have to rely on more arguments for claim (ii).
	Thus let $v'$ be an arbitrary vector such that $(X)$ is reached according to directions $v'$ in the accepting run.
	By definition of $\mathcal{L}(\mathcal{V}[X \mapsto \alpha^{\lambda}(\bot)](X)(\Pi))$, we then have $v+v' \in \alpha^{\lambda}(\bot)(\Pi) = \bigsqcup_{\kappa < \lambda} \alpha^\kappa(\bot)(\Pi) = (\lambda \Pi'. \bigcup_{\kappa < \lambda} \alpha^\kappa(\bot)(\Pi'))(\Pi)$.
	Thus, there is a $\kappa$ such that $v+v' \in \alpha^\kappa(\bot)(\Pi)$.
	Using $(IH3)$ we then have $v+v' \in f^\kappa(\emptyset)$ implying $v+v' \in f^{\lambda}(\emptyset)$ and therefore claim (ii).
	Combining claims (i) and (ii), we can argue that the accepting run of $\mathcal{A}_{\psi_1}[X_1 : \mathcal{L}(\mathcal{V}[X \mapsto \alpha^{\lambda}(\bot)](X_1)(\Pi)),...,X_m : \mathcal{L}(\mathcal{V}[X \mapsto \alpha^{\lambda}(\bot)](X_m)(\Pi))]$ on $w_{\Pi}[v]$ is an accepting run of $\mathcal{A}_{\psi_1}[X_1 : \mathcal{L}(\mathcal{V}(X_1)(\Pi)),...,X_m : \mathcal{L}(\mathcal{V}(X_m)(\Pi)), X: f^{\lambda}(\emptyset)]$ on $w_{\Pi}[v]$ as well.
	
	On the other hand let $w_{\Pi}[v] \in \mathcal{L}(\mathcal{A}_{\psi_1}[X_1 : \mathcal{L}(\mathcal{V}(X_1)(\Pi)),...,X_m : \mathcal{L}(\mathcal{V}(X_m)(\Pi)), X: f^{\lambda}(\emptyset)])$.
	Again, we have to show two claims similar to those in the inductive step: (i) $\mathcal{V}$ can be replaced with $\mathcal{V}[X \mapsto \alpha^{\lambda}(\bot)]$ in the substitutions resulting in the same behaviour up to state $(X)$ and (ii) for all vectors $v' \in \mathbb{N}_0^n$, if state $(X)$ is reached according to directions in $v'$ in the accepting run, then $w_{\Pi}[v + v'] \in \mathcal{L}(\mathcal{V}[X \mapsto \alpha^{\lambda}(\bot)](X)(\Pi)) = \alpha^{\lambda}(\bot)(\Pi)$.
	The first claim is trivial.
	For the second claim, let $v'$ be a vector such that $(X)$ is reached according to directions in $v'$.
	Due to the substitutions, this implies $w_{\Pi}[v+v'] \in f^{\lambda}(\emptyset) = \bigcup_{\kappa < \lambda} f^\kappa(\emptyset)$.
	Thus, there is some $\kappa$ such that $w_{\Pi}[v + v'] \in f^\kappa(\emptyset)$.
	Now $(IH3)$ applies and we have $w_{\Pi}[v + v'] \in \alpha^{\kappa}(\bot)(\Pi) \subseteq (\bigsqcup_{\kappa < \lambda} \alpha^\kappa(\bot))(\Pi) = \alpha^{\lambda}(\bot)(\Pi)$, thus claim (ii) holds.
	Combining the two facts, we see that the accepting run of $\mathcal{A}_{\psi_1}[X_1 : \mathcal{L}(\mathcal{V}(X_1)(\Pi)),...,X_m : \mathcal{L}(\mathcal{V}(X_m)(\Pi)), X: f^{\lambda}(\emptyset)]$ on $w_{\Pi}[v]$ is an accepting run of $\mathcal{A}_{\psi_1}[X_1 : \mathcal{L}(\mathcal{V}[X \mapsto \alpha^{\lambda}(\bot)](X_1)(\Pi)),...,X_m : \mathcal{L}(\mathcal{V}[X \mapsto \alpha^{\lambda}(\bot)](X_m)(\Pi))]$ on $w_{\Pi}[v]$ as well.
\end{proof}

\subsection{Proof of \autoref{thm:aapahmuequivalence} part 2}

In this section, we prove that the construction presented in \autoref{subsec:aapatohmu} indeed results in a formula that is $\mathcal{K}$-equivalent to $\mathcal{A}$.
Therefore, we inductively construct intermediate automata $\mathcal{A}_i^h$ capturing exactly the behaviour of $\psi_i^h$.
We combine them in an automaton $\mathcal{A}_i$ moving into automata $\mathcal{A}_i^h$ according to $\mathcal{A}$'s starting function $\rho_0$.
The proof of our main result then ensues from two lemmas about $\mathcal{A}_i^h$ and $\mathcal{A}_i$.

\textbf{Construction of $\mathcal{A}_i^h$:}

For the construction of $\mathcal{A}_i^h$, the indices range from $0$ to $n-m$ and $1$ to $n$ just like in the construction for $\psi_i^h$.
The automaton is given as follows: $\mathcal{A}_i^h = (Q' \cup \hat{Q},\hat{q_h},\rho_i,\Omega')$ where
\begin{itemize}
	\item $\hat{Q} = \{\hat{q} | q \in Q\}$ is a copy of $Q$
	\item $Q'$ is a copy of $Q$ where for $j > i$, the state $q_j$ is substituted by a hole
	\item $\Omega'(q) = \Omega'(\hat{q}) = \Omega(q)$ for all $q \in Q$
	\item $\rho_i(\hat{q_h},(s,v),d) = \rho(q_h,(s,v),d)$
	\item $\rho_i(q_h,(s,v),d) = \rho(q_h,(s,v),d)$ for $h \leq i$
\end{itemize}

\textbf{Construction of $\mathcal{A}_i$:}

The automaton is given as $\mathcal{A}_i = (Q' \cup \hat{Q},\hat{\rho_0},\rho_i,\Omega')$ where
\begin{itemize}
	\item $Q', \hat{Q}, \rho_i$ and $\Omega'$ are taken from an arbitrary $\mathcal{A}_i^h$ and
	\item $\hat{\rho_0} = \rho_0 [q_1 / \hat{q_1}] ... [q_n / \hat{q_n}]$
\end{itemize}

\begin{lemma}\label{lem:aapatohmulemma1}
	$\mathcal{A}_i^h$ is $\mathcal{K}$-equivalent to $\psi_i^h$.
\end{lemma}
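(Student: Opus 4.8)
The plan is to prove the lemma by induction on $i$, mirroring the correctness argument for the converse direction (\autoref{thm:aapahmuequivalence}, Part~1). Throughout I would keep in mind that the holes of $\mathcal{A}_i^h$ are exactly the states $q_{i+1},\dots,q_n$ sitting in $Q'$, which correspond precisely to the free predicates $X_{i+1},\dots,X_n$ of $\psi_i^h$ (the predicates $X_1,\dots,X_i$ having been bound in the first $i$ steps). Thus the statement to establish is: for every path assignment $\Pi$, every well-formed valuation $\mathcal{V}$ of $X_{i+1},\dots,X_n$, and every $v\in\mathbb{N}_0^n$, we have $v\in\llbracket\psi_i^h\rrbracket^{\mathcal{V}}(\Pi)$ iff $w_\Pi[v]\in\mathcal{L}(\mathcal{A}_i^h[X_{i+1}:\mathcal{L}(\mathcal{V}(X_{i+1})(\Pi)),\dots])$. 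For the base case $i=0$ I would argue directly: if $h>n-m$ then $q_h$ is a hole, so $\rho(q_h,\cdot)=\bot$ and the start state $\hat{q_h}$ of $\mathcal{A}_0^h$ is itself a hole, whence $\mathcal{A}_0^h$ is equivalent to the single hole $X_h=\psi_0^h$; if $h\le n-m$ then every state of $Q'$ is a hole, so from $\hat{q_h}$ the automaton reads one symbol $\sigma$ in a direction $d$, advances $d$, and branches into the boolean combination $\rho(q_h,\sigma,d)$ of holes, which is a verbatim unfolding of $\psi_0^h=\bigvee_\sigma\bigvee_d(\sigma_{\pi_d}\wedge\bigcirc_d\hat{\rho}(q_h,\sigma,d))$.

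For the inductive step I would split into the fixpoint case $h=i$ and the substitution case $h\ne i$, treating the former first. Here $\mathcal{A}_i^i$ arises from $\mathcal{A}_{i-1}^i$ by filling the hole $X_i$ with a back-edge into $q_i$ carrying the minimal priority $\Omega(q_i)$, which is odd exactly when $\psi_i^i=\mu X_i.\psi_{i-1}^i$. The key auxiliary claim is that, for fixed fillings of the remaining holes, $\mathcal{L}(\mathcal{A}_i^i[\dots])$ is the least fixpoint of $f:\mathcal{L}\mapsto\mathcal{L}(\mathcal{A}_{i-1}^i[X_i:\mathcal{L},\dots])$ in the $\mu$-case and the greatest fixpoint in the $\nu$-case: since $q_i$ carries the lowest priority among the genuine states of $\mathcal{A}_i^i$, any path visiting it infinitely often is accepting iff $\Omega(q_i)$ is even, which is precisely what distinguishes finite from arbitrary unfolding. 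Combining this with the semantic characterization $\llbracket\mu X_i.\psi_{i-1}^i\rrbracket^{\mathcal{V}}=\bigsqcup_\kappa\alpha^\kappa(\bot)$ from \autoref{cor:welldefined} (monotonicity being \autoref{thm:monotone}) and the induction hypothesis that $\mathcal{A}_{i-1}^i$ is $\mathcal{K}$-equivalent to $\psi_{i-1}^i$, I would run a transfinite induction on $\kappa$ showing $v\in\alpha^\kappa(\bot)(\Pi)$ iff $w_\Pi[v]\in f^\kappa(\emptyset)$, in direct analogy with the fixpoint case of the Part~1 proof.

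For the substitution case $h\ne i$, the formula is $\psi_i^h=\psi_{i-1}^h[X_i/\psi_i^i]$, and correspondingly $\mathcal{A}_i^h$ is $\mathcal{A}_{i-1}^h$ with its hole $X_i$ filled by $\mathcal{A}_i^i$ (using that $\hat{q_i}$ and $q_i$ share transitions and priority, so they serve interchangeably as entry point). I would invoke a substitution lemma for the semantics, $\llbracket\psi_{i-1}^h[X_i/\psi_i^i]\rrbracket^{\mathcal{V}}=\llbracket\psi_{i-1}^h\rrbracket^{\mathcal{V}'}$ for $\mathcal{V}'=\mathcal{V}[X_i\mapsto\llbracket\psi_i^i\rrbracket^{\mathcal{V}}]$, then use the already-established $\mathcal{K}$-equivalence of $\mathcal{A}_i^i$ and $\psi_i^i$ to identify the filled language with $\mathcal{L}(\mathcal{V}'(X_i)(\Pi))$, and finally apply the induction hypothesis for $\mathcal{A}_{i-1}^h$ and $\psi_{i-1}^h$ to the extended valuation $\mathcal{V}'$. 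Applicability of the hypothesis rests on $\mathcal{V}'$ being well-formed, which is exactly the appendix result that only well-formed valuations occur in fixpoint iterations of $\Hmu$.

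The hard part will be the fixpoint case, and specifically the auxiliary claim that the language of $\mathcal{A}_i^i$ is the appropriate least or greatest fixpoint of $f$, with its ordinal approximants aligned to those of the semantic fixpoint. This is where the parity acceptance condition must be handled carefully — arguing that filling $X_i$ with a back-edge of minimal odd (resp.\ even) priority forces accepting runs to unfold $q_i$ only finitely (resp.\ arbitrarily) often — and it is the genuinely new content of the lemma relative to the purely structural base and substitution cases.
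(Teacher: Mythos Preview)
Your proposal is correct and follows essentially the same approach as the paper: induction on $i$ with the same base-case split ($h$ a hole versus not), the same inductive split (fixpoint case $h=i$ first, then the substitution case $h\ne i$), the same auxiliary claim that $\mathcal{L}(\mathcal{A}_i^i[\dots])$ is the least/greatest fixpoint of $f$ because $q_i$ carries the minimal priority among non-hole states, and the same transfinite induction matching the approximants $f^\kappa(\emptyset)$ and $\alpha^\kappa(\bot)$. Your treatment of the substitution case via a semantic substitution lemma and the explicit appeal to well-formedness of the extended valuation $\mathcal{V}'$ is in fact slightly more careful than the paper's prose, which leaves that point implicit.
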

\begin{proof}
	The proof proceeds by an induction on $i$.
	
	\textbf{Base case:} $i = 0$: 
	For $h > n-m$, we have $\psi_0^h = X_h$ and $\mathcal{A}_0^h$ has (a copy of) the hole $X_h$ as its starting state.
	Then $v \in \llbracket X_h \rrbracket^{\mathcal{V}}(\Pi)$ iff $v \in \mathcal{V}(X_h)(\Pi)$ iff $w_{\Pi}[v] \in \mathcal{L}(\mathcal{V}(X_h)(\Pi))$ iff $w_{\Pi}[v] \in \mathcal{A}_0^h [X_1 : \mathcal{L}(\mathcal{V}(X_1)(\Pi)),...,X_m : \mathcal{L}(\mathcal{V}(X_m)(\Pi))]$.
	
	For $h \leq n-m$, all of the states $q_h$ are replaced by holes while the states $\hat{q_h}$ inherit their transitions from $q_h$ in $\mathcal{A}$.
	Thus $w_{\Pi}[v] \in \mathcal{L}(\mathcal{A}_0^h [X_1 : \mathcal{L}(\mathcal{V}(X_1)(\Pi)),...,X_m : \mathcal{L}(\mathcal{V}(X_m)(\Pi))])$ implies that there is a combination of holes $\{Y_1,...,Y_l\} \subseteq \{X_1,...,X_m\}$ making $\rho(q_h,w_{\Pi}(v)|_d,d)$ true for some $d$ such that $w_{\Pi}[v + e_{d}] \in \mathcal{L}(\mathcal{V}(Y_j)(\Pi))$ and therefore $v + e_{d} \in \mathcal{V}(Y_j)(\Pi)$ for all $1 \leq j \leq l$.
	One can easily see that for this $d$ and $\sigma = w_{\Pi}(v)|_d$, we then have $v \in \llbracket \sigma_{\pi_d} \land \bigcirc_{\pi_d} \hat{\rho}(q_h,\sigma,d) \rrbracket^{\mathcal{V}}(\Pi)$ and thus $v \in \llbracket \psi_0^h \rrbracket^{\mathcal{V}}(\Pi)$.
	The other direction works in a similar way when additionally noticing that by the definition of $w_{\Pi}$, the $\sigma$ making the disjunction in $\psi_0^h$ true has to match $v$ in direction $d$ and can therefore only be $w_{\Pi}(v)|_d$.
	
	\textbf{Inductive step:} $i \mapsto i+1$:
	We first show the claim for $h = i+1$ and then use that as a lemma for $h \neq i + 1$.
	Also, we consider the case where $\psi_{i+1}^{i+1} = \mu X_{i+1}. \psi_i^{i+1}$ and $q_{i+1}$ has an odd priority since the other case is analogous.
	Since all states $q_j$ with $j > i+1$ are substituted by holes, $\Omega(q_{i+1})$ is the lowest priority in all automata $\mathcal{A}_{i+1}^h$.
	In this case, where $\Omega(q_{i+1})$ is odd, any path in an accepting run of $\mathcal{A}_{i+1}^h$ on some word $w$ can only visit $q_{i+1}$ a finite number of times.
	Since $\hat{q_{i+1}}$ in $\mathcal{A}_i^{h}$ has the same transitions as $q_{i+1}$ in $\mathcal{A}_{i+1}^h$ up to the hole substitution for $q_{i+1}$, $\mathcal{L}(\mathcal{A}_{i+1}^{i+1}[X_{i+2} : \mathcal{L}(\mathcal{V}(X_{i+2})(\Pi),...,X_n : \mathcal{L}(\mathcal{V}(X_n)(\Pi)))])$ can be characterised as the least fixpoint of the function $f : \mathcal{L} \mapsto \mathcal{A}_i^{i+1}[X_{i+1} : \mathcal{L},X_{i+2} : \mathcal{L}(\mathcal{V}(X_{i+2})(\Pi),...,X_n : \mathcal{L}(\mathcal{V}(X_n)(\Pi)))]$ or as a union of its approximants $\bigcup_{\kappa \geq 0} f^\kappa(\emptyset)$.
	On the other hand semantics of $\psi_{i+1}^{i+1}$ can be characterised as a union of its approximants $\bigsqcup_{\kappa \geq 0} \alpha^\kappa(\bot)$ using \autoref{cor:welldefined}.
	
	We show that $w_{\Pi}[v] \in f^k(\emptyset)$ iff $v \in \alpha^k(\bot)(\Pi)$ for all $\kappa$.
	Indeed, for arbitrary $\kappa \geq 1$, this follows immediately from the definition of $f$ and $\alpha$, the induction hypothesis, and the following claim:
	
	\textbf{Claim:} $w_{\Pi}[v] \in \mathcal{L}( \mathcal{A}_i^{i+1}[X_{i+1} : f^{\kappa-1}(\emptyset),X_{i+2}: \mathcal{L}(\mathcal{V}(X_{i+2})(\Pi)),...,X_n : \mathcal{L}(\mathcal{V}(X_n)(\Pi))])$ holds iff we have $w_{\Pi}[v] \in \mathcal{L}(\mathcal{A}_i^{i+1}[X_{i+1}: \mathcal{L}(\mathcal{V}[X_{i+1} \mapsto \alpha^{\kappa-1}(\bot)](X_{i+1})(\Pi)),...,X_{n}: \mathcal{L}(\mathcal{V}[X_{i+1} \mapsto \alpha^{\kappa-1}(\bot)](X_{n})(\Pi))])$.
	
	This can be shown by a transfinite induction on $\kappa$, similar to the one in the proof of \autoref{thm:aapahmuequivalence} part 1.
	
	Now we consider the case where $h \neq i+1$ and let $v \in \llbracket \psi_{i+1}^{h} \rrbracket^{\mathcal{V}}(\Pi) = \llbracket \psi_{i}^h [X_{i+1} / \psi_{i+1}^{i+1}] \rrbracket^{\mathcal{V}}(\Pi)$.
	Due to the way the substitution was done, we have a set $V$ of vectors $v' \in \llbracket \psi_{i+1}^{i+1} \rrbracket^{\mathcal{V}}(\Pi)$ such that all $v' \in V$ combined \textit{witness} $v \in \llbracket \psi_{i+1}^h \rrbracket^{\mathcal{V}}(\Pi)$.
	Considering some $\mathcal{V}'$ with $V \subseteq \mathcal{V}'(X_{i+1})(\Pi)$, we then have $v \in \llbracket \psi_{i}^h \rrbracket^{\mathcal{V}'}(\Pi)$, in particular $v \in \llbracket \psi_i^h \rrbracket^{\mathcal{V}[X_{i+1} \mapsto \llbracket \psi_{i+1}^{i+1} \rrbracket^{\mathcal{V}}]}(\Pi)$.
	Using the induction hypothesis, we get $w_{\Pi}[v] \in \mathcal{L}(\mathcal{A}_i^h[X_{i+1}:\mathcal{V}[X_{i+1} \mapsto \llbracket \psi_{i+1}^{i+1} \rrbracket^{\mathcal{V}}](X_{i+1})(\Pi),...,X_{n}:\mathcal{V}[X_{i+1} \mapsto \llbracket \psi_{i+1}^{i+1} \rrbracket^{\mathcal{V}}](X_{n})(\Pi)])$.
	For $X_j \neq X_{i+1}$ we can remove the substitution in $\mathcal{V}$ without changing language containment.
	Now all that is left to show is that we can remove the substitution for hole $X_{i+1}$ and go over to $\mathcal{A}_{i+1}^h$ without changing language containment as well.
	This is done by using the claim for $h = i+1$ as a lemma and exchange the substitution of $\psi_{i+1}^{i+1}$'s semantics with the language of $\mathcal{A}_{i+1}^{i+1}$ (with appropriate substitutions) and then noticing that $\mathcal{A}_{i+1}^h$ is actually the same automaton as $\mathcal{A}_{i+1}^{i+1}$ with a different starting state.
	Thus, instead of moving to hole $X_{i+1}$ using the substitution of $\mathcal{A}_{i+1}^h$, we can instead move to $q_{i+1}$ and obtain the same behaviour.
	Since this is exactly the difference between $\mathcal{A}_i^h$ and $\mathcal{A}_{i+1}^h$, we obtain our result.
	
	On the other hand let $w_{\Pi}[v] \in \mathcal{L}(\mathcal{A}_{i+1}^{h}[X_{i+2} : \mathcal{L}(\mathcal{V}(X_{i+2})(\Pi)),...,X_n : \mathcal{L}(\mathcal{V}(X_n)(\Pi))])$.
	Just like in the proof of the other direction, we notice that $\mathcal{A}_{i+1}^h$ is the same automaton as $\mathcal{A}_{i+1}^{i+1}$ with a different starting state.
	Thus, whenever a path in the accepting run moves into $q_{i+1}$, we can instead move to a hole $X_{i+1}$ substituted with the language of $\mathcal{A}_{i+1}^{i+1}$ (with appropriate substitutions) and obtain the same behaviour having gone over to $\mathcal{A}_i^h$.
	Using the claim for $h = i+1$ as a lemma, we can exchange the substitution for hole $X_{i+1}$ with $\llbracket \psi_{i+1}^{i+1} \rrbracket^{\mathcal{V}}(\Pi)$.
	Since for $X_j \neq X_{i+1}$, a modification of $\mathcal{V}$ on $X_{i+1}$ does not change behaviour, we can condense all substitutions into a predicate environment $\mathcal{V}[X_{i+1} \mapsto \llbracket \psi_{i+1}^{i+1} \rrbracket^{\mathcal{V}}]$.
	Thus we obtain $w_{\Pi}[v] \in \mathcal{L}(\mathcal{A}_i^h[X_{i+1}:\mathcal{V}[X_{i+1} \mapsto \llbracket \psi_{i+1}^{i+1} \rrbracket^{\mathcal{V}}](X_{i+1})(\Pi),...,X_{n}:\mathcal{V}[X_{i+1} \mapsto \llbracket \psi_{i+1}^{i+1} \rrbracket^{\mathcal{V}}](X_{n})(\Pi)])$, where we can apply the induction hypothesis to get $v \in \llbracket \psi_i^h \rrbracket^{\mathcal{V}[X_{i+1} \mapsto \llbracket \psi_{i+1}^{i+1} \rrbracket^{\mathcal{V}}]}(\Pi)$.
	Instead of substituting $\psi_{i+1}^{i+1}$'s semantics for $X_{i+1}$ in the predicate environment $\mathcal{V}$, we can instead substitute $\psi_{i+1}^{i+1}$ for $X_{i+1}$ in the formula without changing behaviour.
	Therefore we have $v \in \llbracket \psi_{i}^h [X_{i+1} / \psi_{i+1}^{i+1}] \rrbracket^{\mathcal{V}}(\Pi) = \llbracket \psi_{i+1}^h \rrbracket^{\mathcal{V}}(\Pi)$.
\end{proof}

\begin{lemma}\label{lem:aapatohmulemma2}
	$\mathcal{L}(\mathcal{A}[X_1:\mathcal{A}_1,...,X_m:\mathcal{A}_m]) = \mathcal{L}(\mathcal{A}_{n-m}[X_1:\mathcal{A}_1,...,X_m:\mathcal{A}_m])$ for all AAPA $\mathcal{A}_1,...,\mathcal{A}_m$.
\end{lemma}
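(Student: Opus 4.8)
The plan is to exhibit, for every input word, a bijection between accepting runs of $\mathcal{A}[X_1:\mathcal{A}_1,\dots,X_m:\mathcal{A}_m]$ and accepting runs of $\mathcal{A}_{n-m}[X_1:\mathcal{A}_1,\dots,X_m:\mathcal{A}_m]$, obtained by folding the auxiliary copy $\hat{Q}$ of the state space back onto $Q$. Concretely, I would define a relabelling $\pi$ on the states of $\mathcal{A}_{n-m}$ (extended as the identity on the states of the plugged-in automata $\mathcal{A}_j$) by $\pi(\hat{q}) = q$ and $\pi(q') = q$, where $q'$ denotes the $Q'$-copy of $q$. The two copies of each hole are identified with the corresponding hole $X_j$, so that both receive the same replacement $\mathcal{A}_j$ under hole filling.

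First I would record the three compatibility properties of $\pi$ that make it a morphism between the two automata. (i) Since $\hat{\rho_0} = \rho_0[q_1/\hat{q_1}]\dots[q_n/\hat{q_n}]$, applying $\pi$ to $\hat{\rho_0}$ recovers exactly $\rho_0$. (ii) The transition functions agree after relabelling: for every $h$ we have $\rho_{n-m}(\hat{q_h},\sigma,d) = \rho(q_h,\sigma,d)$ read over $Q'$, and for every non-hole $h \le n-m$ we have $\rho_{n-m}(q_h,\sigma,d) = \rho(q_h,\sigma,d)$; applying $\pi$ turns the $Q'$-copies back into the states of $Q$ and yields $\rho(q_h,\sigma,d)$, which is precisely the transition of $\pi$ of the source state in $\mathcal{A}$, while the hole states and the plugged-in $\mathcal{A}_j$ are fixed by $\pi$. (iii) Priorities are preserved, since $\Omega'(\hat{q}) = \Omega'(q') = \Omega(q)$.

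The key structural observation I would then use is that in $\mathcal{A}_{n-m}$ the copy $\hat{Q}$ can occur only on the first level of a run: every child combination at the root satisfies $\hat{\rho_0}$ and hence consists of $\hat{Q}$-states, whereas every transition out of a $\hat{q_h}$ leads, via $\rho(q_h,\sigma,d)$ read over $Q'$, into $Q'$ (or into a plugged-in automaton), and transitions within $Q'$ and within the $\mathcal{A}_j$ never return to $\hat{Q}$. Thus $\mathcal{A}_{n-m}$ is simply $\mathcal{A}$ with its initial layer duplicated, and both directions of the language equality follow. Given an accepting run $(T,r)$ of $\mathcal{A}_{n-m}[\dots]$ on a word $w$, the tree $(T,\pi\circ r)$ is a run of $\mathcal{A}[\dots]$ on $w$ by (i) and (ii); it is accepting by (iii), since the least priority occurring infinitely often along each infinite path is unchanged; and the offset counters are untouched, so all direction constraints still hold. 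Conversely, an accepting run of $\mathcal{A}[\dots]$ lifts by labelling the level-$1$ nodes with the $\hat{Q}$-copies and all deeper nodes with the $Q'$-copies (leaving the $\mathcal{A}_j$-parts untouched), which is a valid accepting run of $\mathcal{A}_{n-m}[\dots]$ by the same three properties.

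The only genuinely delicate point, which I would spell out carefully, is the handling of the two copies of a hole. I must verify that whenever $\rho_0$ (equivalently $\hat{\rho_0}$) references a hole $X_j$, the first-level copy $\hat{q_j}$ and the deeper copy $q_j'$ are both filled with $\mathcal{A}_j$, so that $\pi$ and its inverse lift commute with hole substitution. Once the identification of $\hat{q_j}$, $q_j'$ and the hole $X_j$ is fixed this is routine, but it is exactly the place where the argument could go wrong if the initial boolean combination $\rho_0$ is allowed to mention holes; everything else reduces to the elementary simulation described above.
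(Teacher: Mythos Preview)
Your proof is correct and follows essentially the same approach as the paper's: the paper observes that in $\mathcal{A}_{n-m}$ no further holes are introduced in $Q'$, that transitions go from $\hat{Q}$ into $Q'$ but never back, and hence that $\hat{Q}$ is left after one step, so runs of $\mathcal{A}_{n-m}$ are exactly runs of $\mathcal{A}$ with the first layer relabelled by hats. Your formalisation via the relabelling map $\pi$ makes this explicit, and your flagging of the two hole copies (the $\hat{q_j}$ and the $Q'$-copy for $j>n-m$) is a genuine subtlety the paper's short argument glosses over; it is harmless because $\hat{q_j}$ inherits $\rho(q_j,\cdot)=\bot$ and $\Omega(q_j)=\bot$ and is thus itself a hole to be filled by the same $\mathcal{A}_j$.
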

\begin{proof}
	Since for all $j > n-m$ the state $q_j$ is a hole, no states in $Q'$ are further substituted by holes.
	Therefore, the set of states in $\mathcal{A}_{m-n}$ consists of  two copies $Q'$ and $\hat{Q}$ of $Q$, where moves from $\hat{Q}$ to $Q'$ are possible, but not the other way around.
	Since $\hat{Q}$ is left after one transition, any run of $\mathcal{A}_{m-n}$ on some word $w$ behaves just like a run of $\mathcal{A}$ where the first state is substituted by its copy and vice versa.
	Thus, the two automata recognise the same language.
\end{proof}

\begin{proof}[Proof of \autoref{thm:aapahmuequivalence} part 2]
	The sought formula is given as $\rho_0[q_1 / \psi_{n-m}^1]...[q_n / \psi_{n-m}^n]$ when $\rho_0$ is the starting function of of $\mathcal{A}$.
	Our claim immediately follows from \autoref{lem:aapatohmulemma1} and \autoref{lem:aapatohmulemma2}.
\end{proof}

\section{Missing proofs from Section \ref{sec:modelchecking}}

\subsection{Proof of \autoref{thm:constructionproperties}}

\begin{proof}
	We show that the syntactic restrictions of the formula yield automata only having the respective kinds of runs.
	
	For a synchronous formula, $\bigcirc \psi$ constructs are used instead of $\bigcirc_i \psi$ constructs.
	These are translated to automata where a transition to $\mathcal{A}_{\psi}$ is performed only after a symbol from each direction is read.
	The different nodes reading their respective direction can then be merged into a single node reading a vector.
	Also, the construction for atomic propositions can straightforwardly be adapted to read a vector.
	Since the transition functions in all other constructions are defined inductively, this translates the AAPA into an APA.
	
	For $k$-synchronous and $k$-context-bounded formulas, notice that the construction is performed such that each node in a run of $\mathcal{A}_{\psi}$ corresponds to a node in the extended syntax tree of $\psi$ (but not necessarily the other way around due to disjunctions).
	Thus, a restriction on the extended syntax tree straightforwardly translates to a restriction on the runs of the corresponding AAPA.
\end{proof}

\subsection{Showing the $\mathcal{K}$-equivalence of quantified formulas to their respective automata}

\begin{proof}
	We show the proof by a structural induction with three cases.
	Here, we only consider the case for an innermost quantifier $\exists \pi_{n+1} . \psi$ in depth since the other cases are similar or trivial.
	
	\textbf{Existential Quantifiers:}
	Doing the case for an $\exists$ quantifier.
	We show $\Pi \models_k^{\mathcal{K}} \exists \pi_{n+1} . \varphi$ iff $w_{\Pi} \in \mathcal{A}_{\exists \pi_{n+1} . \varphi}$.
	
	Assume $\Pi \models_k^{\mathcal{K}} \exists \pi_{n+1} . \varphi$ holds.
	By the definition of semantics, this implies that there is a $p \in Paths(\mathcal{K})$ such that $\Pi[\pi_{n+1} \mapsto p] \models_k^{\mathcal{K}} \varphi$.
	We use the induction hypothesis and obtain that $w_{\Pi[\pi_{n+1} \mapsto p]} \in \mathcal{A}_{\varphi}$, i.e. $\mathcal{A}_{\varphi}$ has an accepting run $q_0' q_1' ...$ on $w_{\Pi[\pi_{n+1} \mapsto p]}$.
	Since $w_{\Pi[\pi_{n+1} \mapsto p]}$ has one additional $S$-component $s_0 s_1 ...$ compared to $w_{\Pi}$ and this component represents a path of $\mathcal{K}$ starting in $s_0$, we can simulate this additional component by the state space of $\mathcal{A}_{\varphi}$ and obtain an accepting run $q_0 q_1 ... = (q_0', s_0)(q_1', s_1)....$ of $\mathcal{A}_{\exists_{\pi_{n+1}} . \psi}$ on $w_{\Pi}$.
	
	On the other hand let $w_{\Pi} \in \mathcal{L}(\mathcal{A}_{\exists \pi_{n+1} . \varphi})$.
	Therefore we have an accepting run $q_0 q_1 ... = (q_0', s_0)(q_1', s_1)....$ of $\mathcal{A}_{\exists_{\pi_{n+1}} . \varphi}$ on $w_{\Pi}$.
	Due to the way $\mathcal{A}_{\exists \pi_{n+1} . \varphi}$ was constructed, the second component $s_0 s_1 ...$ represents a path $p$ of $\mathcal{K}$ starting in $s_0$.
	Additionally, the automaton makes sure that $q_0' q_1' ...$ is an accepting run of $\mathcal{A}_{\varphi}$ on $w_{\Pi[\pi_{n+1} \mapsto p]}$.
	We use the induction hypothesis to obtain that $\Pi[\pi_{n+1} \mapsto p] \models_k^{\mathcal{K}} \varphi$.
	Thus, the existence of $p$ witnesses $\Pi \models_k^{\mathcal{K}} \exists \pi_{n+1} . \varphi$.
	
	\textbf{Universal quantifiers:} Analogous to the previous case.
	
	\textbf{Innermost formula:} Here we use \autoref{thm:aapahmuequivalence} as a lemma to directly obtain the result from the semantics definition.
\end{proof}

\subsection{Proof of \autoref{lem:constructionsize}}

\begin{proof}
	The three claims follow from the observation that the analyses produce synchronous, $k$-synchronous and $k$-context-bounded AAPA in combination with \autoref{thm:aapaksynchronoustosynchronous}, \autoref{cor:contextboundedtoapa} and a further argument we present here.
	
	The mentioned argument is that the construction for quantifiers increases the size of the construction exponentially in the worst case for each alternation removal that is being performed.
	Since the Kripke Structure $\mathcal{K}$ is incorporated into the automaton only after the first alternation removal, the complexity in the size of $\mathcal{K}$ is one exponent lower than in the size of $\mathcal{\varphi}$.
	It remains to show that the number of alternation removals is equal to the number of quantifier alternations plus one rather than the number of quantifiers, which a naive look at the construction would imply.
	This is due to the fact that the construction results in an NPA and thus no further alternation is needed when no complementation constructions are performed in between the quantifier constructions.
	For existential quantifiers one can see this directly while for universal quantifiers this is due to the elimination of double negations.
	An outermost negation does not have to be included into the construction since flipping the result of the test performed on the automaton has the same effect.
\end{proof}

\end{document}